\font\ppppppcarac=ptmr8y at 4pt
\font\pppppcarac=ptmr8y at 5pt
\font\ppppcarac=ptmr8y at 6pt
\font\pppcarac=ptmr8y at 7pt
\font\pcarac=ptmr8y at 9pt
\newcommand{\bfF}{{\bm{F}}}
\newcommand{\bfH}{{\bm{H}}}
\newcommand{\bfQ}{{\bm{Q}}}
\newcommand{\bfW}{{\bm{W}}}
\newcommand{\bfX}{{\bm{X}}}
\newcommand{\bfY}{{\bm{Y}}}
\newcommand{\bfzero}{{ \hbox{\bf 0} }}
\newcommand{\bfb}{{\bm{b}}}
\newcommand{\bfg}{{\bm{g}}}
\newcommand{\bfk}{{\bm{k}}}
\newcommand{\bfm}{{\bm{m}}}
\newcommand{\bfw}{{\bm{w}}}
\newcommand{\bfx}{{\bm{x}}}
\newcommand{\bfy}{{\bm{y}}}
\newcommand{\bfalpha}{{\bm{\alpha}}}
\newcommand{\bfbeta}{{\bm{\beta}}}
\newcommand{\bfeta}{{\bm{\eta}}}
\newcommand{\bftau}{{\bm{\tau}}}
\newcommand{\CC}{{\mathbb{C}}}
\newcommand{\HH}{{\mathbb{H}}}
\newcommand{\MM}{{\mathbb{M}}}
\newcommand{\NN}{{\mathbb{N}}}
\newcommand{\RR}{{\mathbb{R}}}
\newcommand{\ggeclair}{{\mathbb{g}}}
\DeclareMathAlphabet{\mathonebb}{U}{bbold}{m}{n}
\def\11{{\ensuremath{\mathonebb{1}}}}
\newcommand{\curB}{{\mathcal{B}}}
\newcommand{\curD}{{\mathcal{D}}}
\newcommand{\curE}{{\mathcal{E}}}
\newcommand{\curM}{{\mathcal{M}}}
\newcommand{\curP}{{\mathcal{P}}}
\newcommand{\curR}{{\mathcal{R}}}
\newcommand{\curT}{{\mathcal{T}}}
\newcommand{\curV}{{\mathcal{V}}}
\newcommand{\bfcurY}{{\boldsymbol{\mathcal{Y}}}}
\newcommand{\DM}{{\hbox{{\pppppcarac DM}}}}
\newcommand{\Div}{{\hbox{\pcarac div}}}
\newcommand{\exactp}{{{\hbox{{\ppppcarac ref}}}}}
\newcommand{\FKP}{{\hbox{{\ppppcarac FKP}}}}
\newcommand{\optp}{{\hbox{{\ppppcarac opt}}}}
\newcommand{\optpp}{{\hbox{{\pppppcarac opt}}}}
\newcommand{\TB}{{\hbox{{\pppppcarac TB}}}}
\newcommand{\training}{{\hbox{{\pppcarac train}}}}
\newcommand{\VQE}{{\hbox{{\pppppcarac VQE}}}}
\newcommand{\pVQE}{{\hbox{{\ppppppcarac VQE}}}}
\newcommand{\curC}{{\mathscr{C}}}       
\font\bf=ptmb8y at 10pt
\font\vcarac=ptmr8y at 10pt
\newcommand{\vc}{{\hbox{\vcarac{v}}}}   
\newdefinition{definition}{Definition}
\newtheorem{proposition}{Proposition}
\newproof{proof}{Proof}
\newdefinition{remark}{Remark}
\newdefinition{hypothesis}{Hypothesis}
\newdefinition{notation}{Notation}
\newproof{example}{Example}
\numberwithin{equation}{section}
\journal{ArXiv}
\begin{document}

\begin{frontmatter}

\title{Quantum computer formulation of the FKP-operator eigenvalue problem for probabilistic learning on manifolds}


\author[1]{Christian Soize \corref{cor1}}
\ead{christian.soize@univ-eiffel.fr}
\author[1]{Lo\"ic Joubert-Doriol}
\ead{ loic.joubert-doriol@univ-eiffel.fr}
\author[2]{Artur F.  Izmaylov}
\ead{artur.izmaylov@utoronto.ca}
\cortext[cor1]{Corresponding author: C. Soize, christian.soize@univ-eiffel.fr}
\address[1]{Universit\'e Gustave Eiffel, MSME UMR 8208, 5 bd Descartes, 77454 Marne-la-Vall\'ee, France}
\address[2]{University of Toronto  Scarborough, Department of Physical and Environmental Sciences, 1265 Military Trail Toronto, Canada}

\begin{abstract}
We present a quantum computing formulation to address a challenging problem in the development of probabilistic learning on manifolds (PLoM). It involves solving the spectral problem of the high-dimensional Fokker-Planck (FKP) operator, which remains beyond the reach of classical computing. Our ultimate goal is to develop an efficient approach for practical computations on quantum computers. For now, we focus on an adapted formulation tailored to quantum computing.
The methodological aspects covered in this work include the construction of the FKP equation, where the invariant probability measure is derived from a training dataset, and the formulation of the eigenvalue problem for the FKP operator. The eigen equation is transformed into a Schr\"odinger equation with a potential $\curV$, a non-algebraic function that is neither simple nor a polynomial representation. To address this, we propose a methodology for constructing a multivariate polynomial approximation of $\curV$, leveraging polynomial chaos expansion within the Gaussian Sobolev space. This approach preserves the algebraic properties of the potential and adapts it for quantum algorithms.
The quantum computing formulation employs a finite basis representation, incorporating second quantization with creation and annihilation operators. Explicit formulas for the Laplacian and potential are derived and mapped onto qubits using Pauli matrix expressions. Additionally, we outline the design of quantum circuits and the implementation of measurements to construct and observe specific quantum states. Information is extracted through quantum measurements, with eigenstates constructed and overlap measurements evaluated using universal quantum gates.
\end{abstract}
\begin{keyword}
Quantum computing formulation \sep Fokker-Planck operator \sep eigenvalue problem \sep second quantization \sep probabilistic learning on manifolds \sep PLoM
\end{keyword}
\end{frontmatter}

\section{Introduction}
\label{Section1}
%
We present a quantum computing methodology to address a challenging problem recently encountered in the development of probabilistic learning on manifolds \cite{Soize2024b}. This approach aims to construct statistical surrogate models from small training datasets, which are particularly useful for uncertainty quantification, nonconvex optimization under uncertainty, and statistical inverse problems. The main challenge lies in solving the spectral problem associated with the high-dimensional Fokker-Planck (FKP) operator, a task currently beyond the capabilities of classical computing.
In this paper, we explore a quantum computing-based methodology for calculating the eigenfunctions associated with the smallest eigenvalues of the FKP operator. The ultimate goal is to develop an efficient approach that, in the near future, could enable effective computation on a quantum computer. For now, we focus on an adapted formulation tailored to quantum computing.
It is worth noting that this problem may also be relevant to frameworks and applications beyond those that motivated the developments presented in this work.

%
\paragraph{Quantum Computing Applications in Computational Mechanics and Engineering}
Since research on quantum information and quantum computation began in the 1980s, numerous papers have been published, covering all aspects necessary for developing quantum computing techniques to address large-scale simulations in engineering science, including data science. This effort requires the development of suitable quantum computing algorithms, software advancements, quantum error correction methods, and the definition of hardware and infrastructure requirements for executing large-scale numerical simulations.
These challenges are particularly complex and demand interdisciplinary collaboration among quantum physicists, mathematicians, computer scientists, and engineers. Given the vast literature in this multidisciplinary field, this short introduction does not aim to review all existing works. However, the engineering science community has recently shown interest in leveraging quantum computers for numerical simulations and algorithm development. For example, quantum computing applications have been explored in computational mechanics \cite{LiuB2024,Wulff2024,Xu2024}, materials science \cite{Ma2020,Micheletti2021,Mizuta2021}, finite element methods \cite{Raisuddin2022}, and engineering simulation and optimization \cite{WangYan2023,Ye2023}.

%
\paragraph{Objectives and organization of the paper}
The aim of this paper is to propose a method to solve the spectral problem of the high-dimensional Fokker-Planck (FKP) operator, which remains beyond the capabilities of classical computing.  It is known (see, for example, \cite{Risken1989,Gardiner1985}) that the FKP operator can be reformulated as a Schr\"odinger-type operator exhibiting a potential $\curV$.
This reformulation bears a strong resemblance to the vibrational spectral problem in molecular systems, where quantum computing has recently gained traction as an appealing alternative to classical computers~\cite{McArdle2019,Sawaya2019}. We follow a similar path in the current paper, also aligning with recent efforts to implement the dynamical solution of the FKP problem on quantum computers~\cite{Amaro2024}.
The FKP equation under consideration is constructed within the framework of probabilistic learning on manifolds (PLoM) \cite{Soize2024b}, as we will explain in Section~\ref{Section2}. The potential $\curV$, associated with a probability density function related to the training dataset, is a non-algebraic function that is neither simple nor a polynomial representation, and therefore does not satisfy the algebraic prerequisites of the quantum algorithm we will consider.
In the first part of this paper, we present in Sections~\ref{Section3} and \ref{Section4} the construction of a representation of the potential of the Schr\"odinger operator. To achieve this, we propose a methodology for constructing a multivariate polynomial approximation \cite{McArdle2019,Sawaya2019} of $\curV$, leveraging polynomial chaos expansion within the Gaussian Sobolev space \cite{Malliavin1995}. This approach preserves the algebraic properties of the potential and adapts it for quantum algorithms. It enables the construction of a polynomial chaos expansion in Gaussian space with good convergence properties, preserving the algebraic characteristics that control the spectrum of the operator. Section~\ref{Section5} addresses the numerical validation of the computation of the polynomial chaos expansion coefficients of the potential.
The second part of the paper, Sections~\ref{Section6} to \ref{Section8}, focuses on the quantum computing formulation. This formulation employs a finite basis representation (spectral method) that incorporates second quantization with creation and annihilation operators. Explicit formulas for the Laplacian and potential are derived and mapped onto qubits using Pauli matrix expressions. The design of quantum circuits and the implementation of measurements to construct and observe specific quantum states are presented in Section~\ref{Section7}. Finally, Section~\ref{Section8} addresses the construction of eigenstates and the extraction of overlaps through quantum measurements, evaluated using universal quantum gates.
%
\paragraph{Key steps in the proposed methodology}
Hereinafter, we present methodological aspects. Although presented in the context of the FKP-operator eigenvalue problems, the proposed methodology illustrates well the main steps required to use quantum computers.\

\noindent (i) Construction of the FKP equation whose invariant probability measure, associated with the steady-state solution, is derived from a given training dataset used within the framework of probabilistic learning on manifolds.\

\noindent (ii) Formulation of the eigenvalue problem related to the FKP operator and transforming it into a Schr\"odinger-type operator exhibiting a potential $\bfy\mapsto\curV(\bfy)$ on $\RR^\nu$ with high dimension $\nu$.\

\noindent (iii) Construction of a polynomial chaos expansion (PCE) of the potential $\curV$ in the Gaussian Sobolev space \cite{Malliavin1995}, and introduction of a convergence analysis for the truncated PCE representation of $\curV$. Transformation of the truncated PCE into a polynomial in $\bfy^\bfm$, in which $\bfm$ is a multi-index of dimension $\nu$.
This is the final representation of $\curV$ that is perfectly adapted for developing a quantum algorithm based on a second quantization.\

\noindent (iv) Development of a quantum computing formulation based on the following three steps: (a) Defining a finite basis representation, for which the eigenvalue problem can be rewritten in a second quantized form. (b) Developing the second quantization, introducing the creation and annihilation operators in order to rewrite the eigenvalue problem in the Fock space, allowing us to obtain explicit formulas for the Laplacian operator $\nabla^2$ and for the multivariate monomials $\bfy^\bfm$. Deducing a representation of the Fokker-Planck operator as a linear combination with known real coefficients of products of powers of the creation and annihilation operators. (c) Mapping onto a system of qubits by expressing the products of creation and annihilation operators with the Pauli matrices.\

\noindent (v) Design of the quantum circuits and measurements in order to construct and measure specific states on the quantum computer. (a) The quantum circuits are constructed with gates. To construct the eigenstates, a set of universal quantum gates is used from which any unitary transformation can be reconstructed. The successive application of these gates on the set of qubits forms the quantum circuits. (b) The extraction of information from the quantum states is performed via quantum measurements of certain physical observables, which are probabilistic.\

\noindent (vi) Finally, we perform the eigenstate construction and overlap extraction.
%
\subsection{Convention for the variables, vectors, and matrices}
\label{Section1.4}
\noindent $x,\eta$: lower-case Latin or Greek letters are deterministic real variables.\\
$\bfx,\bfeta$: boldface lower-case Latin or Greek letters are deterministic vectors.\\
$X$: upper-case Latin letters are real-valued random variables.\\
$\bfX$: boldface upper-case Latin letters are vector-valued random variables.\\
$[x]$: lower-case Latin letters between brackets are deterministic matrices.\\
$[\bfX]$: boldface upper-case letters between brackets are matrix-valued random variables.
%
%
\subsection{Convention used for random variables}
\label{Section1.6}
In this paper, for any finite integer $m\geq 1$, the Euclidean space $\RR^m$ is equipped with the $\sigma$-algebra $\curB_{\RR^m}$. If $\bfY$ is a $\RR^m$-valued random variable defined on the probability space $(\Theta,\curT,\curP)$, $\bfY$ is a  mapping $\theta\mapsto\bfY(\theta)$ from $\Theta$ into $\RR^m$, measurable from $(\Theta,\curT)$ into $(\RR^m,\curB_{\RR^m})$, and $\bfY(\theta)$ is a realization (sample) of $\bfY$ for $\theta\in\Theta$. The probability distribution of $\bfY$ is the probability measure $P_\bfY(d\bfy)$ on the measurable set $(\RR^m,\curB_{\RR^m})$ (we will simply say on $\RR^m$). The Lebesgue measure on $\RR^m$ is noted $d\bfy$ and
when $P_\bfY(d\bfy)$ is written as $p_\bfY(\bfy)\, d\bfy$, $p_\bfY$ is the probability density function (pdf) on $\RR^m$ of $P_\bfY(d\bfy)$ with respect to $d\bfy$. Finally, $E$ denotes the mathematical expectation operator that is such that
$E\{\bfY\} = \int_{\RR^m} \bfy \, P_\bfY(d\bfy)$.
%
\section{Probabilistic learning on manifolds with a given probability measure}
\label{Section2}
%
\subsection{Preamble related to the context of the probabilistic learning on manifolds}
\label{Section2.1}
Probabilistic Learning on Manifolds (PLoM) is a tool in computational statistics, introduced in 2016 \cite{Soize2016}, which can be viewed as a tool for scientific machine learning. The PLoM approach has specifically been developed for small training-dataset cases \cite{Soize2016,Soize2020c}. The method avoids the scattering of learned realizations associated with the probability distribution to preserve its concentration in the neighborhood of the random manifold defined by the parameterized computational model.
Several extensions have been proposed to account for implicit constraints induced by physics, computational models, and measurements \cite{Soize2020a,Soize2021a,Soize2022b,Soize2024}, to reduce the stochastic dimension using a statistical partition approach \cite{Soize2022a}, and to update the prior probability distribution with a target dataset, whose points are, for instance, experimental realizations of the system observations \cite{Soize2022c}. Consequently, PLoM, constrained by a stochastic computational model and statistical moments or samples/realizations, allows for performing probabilistic learning inference and constructing predictive statistical surrogate models for large parameterized stochastic computational models.\\

PLoM approach relies on projecting, along the data axis, a matrix-valued Itô equation linked to a stochastic dissipative Hamiltonian system, serving as the MCMC generator for the probability measure estimated via Gaussian Kernel Density Estimation (GKDE) on the training dataset. The projection basis is the diffusion maps (DMAPS) basis, associated with a time-independent isotropic kernel introduced in \cite{Coifman2005,Coifman2006}, referred to as the reduced-order DMAPS basis in the PLoM context.
Since 2016, all published PLoM applications  have used the isotropic kernel, yielding quality results even for heterogeneous data and complex systems in various dimensions. Improving statistical surrogates for stochastic manifolds with conditional statistics via PLoM using a projection basis from a transient anisotropic (time-dependent) kernel has recently been proposed \cite{Soize2024b}.\\

The first algorithmic step of PLoM \cite{Soize2016,Soize2020c} involves performing a principal component analysis (PCA) on the non-Gaussian random vector $\bfX$, which is generally in high dimension. This vector $\bfX$ results from concatenating the random vector $\bfQ$, representing the quantities of interest (QoI), with the random vector $\bfW$, representing the random control variables of the considered stochastic system. Additionally, this system depends on random latent/uncontrolled variables. The vector-valued random QoI, $\bfQ = \bfF(\bfW)$, is connected to $\bfW$ by an unknown random nonlinear mapping. Denoting by $\curC_w$ the support of the probability measure of $\bfW$, the random graph  $\{(\bfw,\bfF(\bfw))\, \vert \, \bfw\in\curC_w\}$ forms a stochastic manifold due to the latent random variables in the system implying the randomness of mapping $\bfF$. The PCA is performed on the training dataset of $\bfX$. The new coordinates resulting from the PCA form a random vector $\bfH$ with a reduced dimension. The training dataset of $\bfH$ is obtained through the PCA projection of the training dataset of $\bfX$. By construction, $\bfH$ is a non-Gaussian normalized random vector, which is centered and has a covariance matrix equal to the identity matrix.
Under these conditions, the available information is only the training dataset of $\bfH$. Using this dataset, the probability density function $p_\bfH$ of $\bfH$ is constructed by using the  modification \cite{Soize2015} of the GKDE method \cite{Bowman1997}. Thus, the only available information is represented by the probability density function $p_\bfH$ of $\bfH$. Given the normalization of $\bfH$, the challenges in constructing the learned dataset of $\bfH$, with or without constraints, involve preserving the concentration of the learned probability measure for $\bfH$ (induced by the random manifold) and ensuring the ``good quality'' of the learned joint probability measure of the components of $\bfH$, not just the marginal probability measure of each component. The quality of the statistical surrogate model of $\bfQ$ given $\bfW$, based on conditional statistics,  is directly linked to the learned joint probability measure of the components of $\bfH$. Here, to simplify the presentation, we will directly begin on $\bfH$ rather than starting from $\bfX = (\bfQ, \bfW)$.\\

The method presented in \cite{Soize2024b} starts from the FKP equation, whose stationary solution is $p_\bfH$. The spectral problem of this FKP operator is introduced, and its  eigenfunctions would form the ideal basis for performing the data projection necessary to preserve the concentration of the learned probability measure. Given the high dimensionality and small training datasets, a change of scale (smoothing) is introduced, similar to DMAPS, to calculate the eigenvectors of a matrix that approximates the FKP operator. This approximation is essential because computing the eigenfunctions of the FKP operator in high dimension  is beyond the capabilities of classical computers. In this paper, we revisit this spectral problem and construct a formulation adapted to quantum computing, potentially enabling the direct calculation of eigenfunctions in high dimensions.
\subsection{Defining the probability measure of random vector $\bfH$}
\label{Section2.2}
Let $\curD_\training(\bfeta) = \{\bfeta^1,\ldots,\bfeta^{n_d}\}$ be the training dataset of $n_d > 1$ independent realizations $\bfeta^j\in\RR^\nu$, with $\nu\geq 1$, of a second-order $\RR^\nu$-valued random variable $\bfH$, defined on a probability space $(\Theta,\curT,\curP)$ and which results of the PCA of random vector $\bfX$ (see Section~\ref{Section2.1}. Let $\underline\bfeta_d\in\RR^\nu$ and $[C_d]\in\MM_{n_d}$ be the associated empirical estimates of the mean value and the covariance matrix constructed with the points of $\curD_\training(\bfeta)$,
\begin{equation} \label{eq2.1}
\underline\bfeta_d = \frac{1}{n_d} \sum_{j=1}^{n_d} \bfeta^j \quad , \quad
[C_d] =  \frac{1}{n_d -1} \sum_{j=1}^{n_d} (\bfeta^j - \underline\bfeta_d )\otimes (\bfeta^j - \underline\bfeta_d )\, .
\end{equation}
The properties of the PCA imply that
\begin{equation} \label{eq2.2}
\underline\bfeta_d =  \bfzero_\nu \quad , \quad [C_d] =  [I_\nu]\, .
\end{equation}
Let $\bfy\mapsto p_\bfH(\bfy)$ be the probability density function on $\RR^\nu$, with respect do the Lebesgue measure $d\bfy$, defined by
\begin{equation} \label{eq2.3}
p_\bfH(\bfy) = \frac{1}{n_d}\sum_{j=1}^{n_d} \frac{1}{(\sqrt{2 \pi} \, \hat s)^\nu} \exp\left ( -\frac{1}{2 \hat s^2} \Vert \bfy - \frac{\hat s}{s}\bfeta^j\Vert^2\right ) \quad , \quad \forall \bfy\in\RR^\nu \, ,
\end{equation}
where $\hat s$ and $s$ are defined by
\begin{equation} \label{eq2.4}
s = \left ( \frac{4}{n_d(2+\nu)}\right )^{1/(\nu+4)} \quad , \quad
\hat s  = \frac{s}{\sqrt{s^2 + (n_d-1)/n_d}} \, .
\end{equation}
Eqs.~\eqref{eq2.3} and \eqref{eq2.4} correspond to the Gaussian kernel-density estimation (KDE) constructed using the $n_d$ independent realizations of $\curD_\training(\bfeta)$ involving the modification \cite{Soize2015} of the usual formulation \cite{Bowman1997,Duong2008,Gentle2009,Givens2013}, in which $s$ is the Silverman bandwidth.
Let $\bfH$ be the second-order $\RR^\nu$-valued random variable, defined on a probability space $(\Theta,\curT,\curP)$, whose probability measure $P_\bfH(d\bfy) = p_\bfH(\bfy)\, d\bfy$ on $\RR^\nu$ is defined by the probability density function $p_\bfH$ given by Eq.~\eqref{eq2.3}. It can be seen that, for any fixed $n_d > 1$, we have exactly,
\begin{equation} \label{eq2.5}
E\{\bfH\} = \int_{\RR^\nu} \bfy \, P_\bfH(d\bfy) = \frac{1}{2\hat s^2} \,\underline\bfeta_d = \bfzero_\nu \, ,
\end{equation}
\begin{equation} \label{eq2.6}
E\{\bfH\otimes\bfH\} = \int_{\RR^\nu} \bfy\otimes\bfy \, P_\bfH(d\bfy)
= \hat s^2\, [I_\nu] + \frac{\hat s^2}{s^2} \, \frac{(n_d-1)}{n_d}\,[C_d] = [I_\nu] \, .
\end{equation}
Eqs.~\eqref{eq2.5} and \eqref{eq2.6} show that $\bfH$ is a normalized $\RR^\nu$-valued random variable. The probability density function $p_\bfH$ defined by Eq.~\eqref{eq2.3} is rewritten, for all $\bfy$ in $\RR^\nu$, as
\begin{equation} \label{eq2.7}
p_\bfH(\bfy) = c_\nu \, \xi(\bfy)  \quad , \quad \xi(\bfy) = e^{-\Phi(\bfy)} \, ,
\end{equation}
in which  $c_\nu =  (\sqrt{2 \pi} \, \hat s)^{-\nu}$ and where $\Phi(\bfy) = -\log(\xi(\bfy))$ is such that
\begin{equation} \label{eq2.8}
\Phi(\bfy) = - \log\left ( \frac{1}{n_d}\sum_{j=1}^{n_d} \exp\left ( -\frac{1}{2 \hat s^2} \Vert \bfy - \frac{\hat s}{s}\bfeta^j\Vert^2\right ) \right ) \quad , \quad \forall \bfy\in\RR^\nu \, .
\end{equation}
%
%
\section{igenvalue problem of the FKP operator and  a representation  adapted to quantum computing algorithm}
\label{Section3}
%
This section summarizes essential results concerning the Fokker-Planck (FKP) operator and its associated eigenvalue problem, which are formally presented. Then, the FKP operator is transformed to obtain a Schr\"odinger-type formulation in which $\curV(\bfy)$ will be the potential on $\RR^\nu$. Finally, we propose a methodology to construct a representation of the potential $\curV(\bfy)$ adapted to quantum computing algorithms, which is presented in Section~\ref{Section6}. This representation will be based on a polynomial chaos expansion (PCE)in Gaussian space of $\curV(\bfy)$, which will be explicitly constructed in Section~\ref{Section4}.

\subsection{It\^o stochastic differential equation related to probability density function $p_\bfH$}
\label{Section3.1}

We introduce an  It\^o stochastic differential equation (ISDE) on $\RR^\nu$, with initial condition, for which $P_\bfH(d\bfy) =p_\bfH(\bfy)\, d\bfy$ is the invariant measure. A classical candidate to such an ISDE is written as
\begin{align}
d\bfcurY(t) = & \, \bfb(\bfcurY(t))\, dt + d\bfW(t) \quad , \quad t > 0 \, , \label{eq3.1}\\
\bfcurY(0) = & \, \bfx \in \RR^\nu \, , \, a.s. \, ,  \label{eq3.2}
\end{align}
where the drift vector is the function $\bfy\mapsto \bfb(\bfy)$ from $\RR^\nu$ into $\RR^\nu$ defined by
\begin{equation} \label{eq3.3}
\bfb(\bfy) = -\frac{1}{2}\, \nabla \Phi(\bfy) \quad , \quad \forall \bfy \in \RR^\nu \, .
\end{equation}
In Eq.~\eqref{eq3.1}, $\bfW(t) = (W_1(t),\ldots ,W_\nu(t))$ is the normalized Wiener stochastic process \cite{Doob1953} on $\RR^+ = [0\, , +\infty[$, with values in $\RR^\nu$, which is a stochastic process with independent increments, such that $\bfW(0) = \bfzero_\nu \, a.s $, and for  $0\leq \tau < t < +\infty$, the increment $\Delta\bfW_{\tau  t} = \bfW(t) - \bfW(\tau)$ is a Gaussian $\RR^\nu$-valued second-order random variable, centered and with a covariance matrix that is written as
\begin{equation} \label{eq3.4}
[C_{\Delta\bfW_{\tau t}}] = E\{\Delta\bfW_{\tau t}\otimes \Delta\bfW_{\tau t}  \} = (t-\tau)\, [I_\nu] \, .
\end{equation}
We will see that $\{\bfY(t), t\in\RR^+\}$ is a homogeneous diffusion stochastic process, which is asymptotically stationary for $t\rightarrow +\infty$. Assuming that the transition probability measure of $\bfY(t)$ given $\bfY(0) = \bfx$ admits a density with respect to d$\bfy$, such that, for all $t > 0$, for all $\bfx$ and $\bfy$ in $\RR^\nu$, and for any Borelian $\curB$ in $\RR^\nu$, we have
\begin{align}
 \curP(\{\bfcurY(t) \in\curB \, \vert \, \bfcurY(0) = \bfx\}  & =  \int_\curB \rho(\bfy,t \, \vert \, \bfx,0) \, d\bfy \, ,  \label{eq3.6}\\
 \lim_{t\rightarrow 0_+}  \rho(\bfy,t \, \vert \, \bfx,0) \, d\bfy &  = \delta_0(\bfy - \bfx)  \, ,  \label{eq3.7}\\
\int_{\RR^\nu} \rho(\bfy,t \, \vert \, \bfx,0) \, d\bfy  & = 1 \, .  \label{eq3.7bis}
\end{align}
%

\subsection{FKP equation associated with the ISDE, and its steady-state solution}
\label{Section3.2}

For all $\bfx$ in $\RR^\nu$, the transition probability density function $(\bfy,t)\mapsto \rho(\bfy,t \, \vert \, \bfx,0)$ from
$\RR^\nu\times\RR^+$ into $\RR^+$ verifies the following Fokker-Planck (FKP) equation (see for instance \cite{Guikhman1980,Friedman2006,Soize1994}),
\begin{equation} \label{eq3.8}
\frac{\partial\rho}{\partial t} + L_\FKP(\rho) = 0 \quad , \quad t > 0 \, ,
\end{equation}
with the initial condition for $t=0$ defined by Eq.~\eqref{eq3.7}. The Fokker-Planck operator $L_\FKP$ can be written, after a small algebraic manipulation and for any sufficiently differentiable function $\bfy\mapsto \vc(\bfy)$ from $\RR^\nu$ into $\RR$, as
\begin{equation} \label{eq3.9}
 \{L_\FKP(\vc)\}(\bfy) = -\frac{1}{2} \Div \left \{ p_\bfH(\bfy) \nabla \left ( \frac{\vc(\bfy)}{p_\bfH(\bfy)}\right ) \right\}  \, .
\end{equation}
The detailed balance (the probability current vanishes) is satisfied and the steady-state solution of Eq.~\eqref{eq3.8} is the pdf $p_\bfH$ defined by Eq.~\eqref{eq2.3} \cite{Soize1994,Gardiner1985,Risken1989}. We then have, for $\vc = p_\bfH$,
\begin{equation} \label{eq3.10}
 L_\FKP(p_\bfH) = 0 \, .
\end{equation}
The invariant measure  $P_\bfH(d\bfy) = p_\bfH(\bfy)\, d\bfy$ is such that, for all $\bfy$ in $\RR^\nu$ and for all $t\geq 0$,
\begin{equation} \label{eq3.10bis}
 p_\bfH(\bfy) = \int_{\RR^\nu} \rho(\bfy,t \, \vert \, \bfx,0) \, p_\bfH(\bfx) \, d\bfx \, .
\end{equation}
The ISDE defined by Eqs.~\eqref{eq3.1} and \eqref{eq3.2} admits an asymptotic ($t\rightarrow +\infty$) stationary solution whose marginal probability density function of order one is $p_\bfH$. Consequently, for all $\bfx$ and $\bfy$ in $\RR^\nu$, we have
\begin{equation} \label{eq3.11}
\lim_{t\rightarrow +\infty}  \rho(\bfy,t \, \vert \, \bfx,0)  = p_\bfH(\bfy) \, .
\end{equation}
%

\subsection{Formal formulation of the eigenvalue problem of the FKP operator}
\label{Section3.3}

The eigenvalue problem, posed in an adapted functional space, is written as
\begin{equation} \label{eq3.12}
 L_\FKP(\vc) =  \lambda\, \vc \, ,
\end{equation}
for which the current must vanish at infinity, yielding the condition,
\begin{equation} \label{eq3.13}
\lim_{\Vert\bfy\Vert \rightarrow +\infty}  p_\bfH(\bfy) \,\Vert\,  \nabla  ( p_\bfH(\bfy)^{-1}\vc(\bfy)   ) \, \Vert \, = 0\, .
\end{equation}
Continuing the development within  a formal framework, such as that used in \cite{Risken1989}, we introduce the change of function,
\begin{equation} \label{eq3.14}
\vc(\bfy) = p_\bfH(\bfy)^{1/2}\, q(\bfy) \quad , \quad \bfy\in\RR^\nu \quad , \quad q: \RR^\nu\rightarrow \RR \, .
\end{equation}
Let $\hat L_\FKP$ be the linear operator defined, for $q$ belonging to an admissible set of functions,
\begin{equation} \label{eq3.15}
\{\hat L_\FKP(q)\}(\bfy) =  p_\bfH(\bfy)^{-1/2}\,L_\FKP(p_\bfH(\bfy)^{1/2}\, q(y)) \quad , \quad \bfy\in\RR^\nu \, .
\end{equation}
Therefore, the eigenvalue problem defined by Eqs.~\eqref{eq3.12} and \eqref{eq3.13} can  be rewritten in $q$ as
\begin{equation} \label{eq3.16}
 \hat L_\FKP(q) =  \lambda\, q \, ,
\end{equation}
with the condition at infinity,
\begin{equation} \label{eq3.17}
\lim_{\Vert\bfy\Vert \rightarrow +\infty} p_\bfH(\bfy) \, \Vert \, \nabla  ( p_\bfH(\bfy)^{-1/2} q(\bfy)   ) \, \Vert \, = 0\, .
\end{equation}
%

\subsection{Schr\"{o}dinger-type formulation of the FKP operator $\hat L_\FKP$}
\label{Section3.4}
Using Eq.~\eqref{eq2.7}, which shows that $p_\bfH(\bfy)^{-1} \nabla_\bfH(\bfy) = - \nabla\Phi$, and using
 Eqs.~\eqref{eq3.9} and \eqref{eq3.15}, it can be seen that
\begin{equation} \label{eq3.18}
\{\hat L_\FKP(q)\}(\bfy) =  \curV(\bfy)\, q(\bfy) - \frac{1}{2} \nabla^2 q(\bfy) \quad , \quad \bfy\in\RR^\nu \, ,
\end{equation}
in which $\nabla^2$ is the Laplacian operator in $\RR^\nu$ and where $\bfy\mapsto\curV(\bfy)$ is the function from $\RR^\nu$ into $\RR$, which is defined, for all $\bfy$ in $\RR^\nu$, as
\begin{equation} \label{eq3.19}
 \curV(\bfy) =  \frac{1}{8} \Vert \,\nabla\Phi(\bfy) \,  \Vert^2  -  \frac{1}{4} \nabla^2 \Phi(\bfy) \, .
\end{equation}
%
\subsection{Properties of operator $\hat L_\FKP$ and hypothesis on its spectrum}
\label{Section3.5}

Let $\delta q$ be a function from $\RR^\nu$ into $\RR$, belonging to the admissible set that allows the evaluation of the bracket
\begin{equation} \nonumber
\langle \hat L_\FKP (q) \, , \delta q\rangle  = \int_{\RR^\nu} \{\hat L_\FKP(q)\}(\bfy) \,\, \delta q(\bfy) \, d\bfy \, .
\end{equation}
Removing $\bfy$ and using Eq.~\eqref{eq3.15} with Eq.~\eqref{eq3.9} yields
\begin{equation} \label{eq3.21}
\langle \hat L_\FKP (q) \, , \delta q\rangle  = -\frac{1}{2} \int_{\RR^\nu}
(p_\bfH^{-1/2} \, \delta q) \, \Div \{ p_\bfH \, \nabla ( p_\bfH^{-1/2}\, q)\}\, d\bfy \, .
\end{equation}
Using the condition at infinity, defined by Eq.~\eqref{eq3.17}, Eq.~\eqref{eq3.21} can be rewritten as,
\begin{equation} \label{eq3.22}
\langle \hat L_\FKP (q) \, , \delta q\rangle  = \frac{1}{2} \int_{\RR^\nu}
p_\bfH\, \langle \nabla(p_\bfH^{-1/2} \, q) \, , \nabla( p_\bfH^{-1/2} \, \delta q) \rangle_{\RR^\nu} \, d\bfy \, .
\end{equation}

\noindent (a) Eq.~\eqref{eq3.22} shows that $\hat L_\FKP$ is a symmetric and positive operator.

\noindent (b) Eqs.~\eqref{eq3.10} and \eqref{eq3.15} show that
\begin{equation} \label{eq3.23}
\hat L_\FKP (q_0) = 0 \quad \hbox{for} \quad q_0 = p_\bfH^{1/2} \, .
\end{equation}
Since $p_\bfH(\bfy)\, d\bfy$ is a bounded positive measure (probability measure), the right-hand side of Eq.~\eqref{eq3.22} shows that the null space  of $\hat L_\FKP$ is of dimension $1$ and is constituted of the function $q_0 = p_\bfH^{1/2}$. For $q=\delta q \not = q_0$, and $\Vert q_0\Vert \not = 0$, we have $\langle \hat L_\FKP (q) \, ,  q\rangle\, > 0$. Therefore, $\hat L_\FKP$ is a positive operator (in the quotient space by the null space).

It is assumed that $p_\bfH$ defined by Eq.~\eqref{eq2.3}, which is constructed with the $n_d$ points $\{\bfeta^j, j=1,\ldots, n_d\}$ of the training dataset, is such that the spectrum of $\hat L_\FKP$ is countable. Due to (a) and (b), we then deduce that the eigenvalues of $\hat L_\FKP$ (defined by Eqs.~\eqref{eq3.16} and \eqref{eq3.17}) are positive except one that is zero. We will also assume that the multiplicity of each eigenvalue is finite and that the eigenfunctions are continuous on $\RR^\nu$.

\subsection{Eigenvalue problem for operator $\hat L_\FKP$ and objective}
\label{Section3.6}

Under the hypothesis introduced in Section~\ref{Section3.5}, the eigenvalue problem
\begin{equation} \label{eq3.23b}
\hat L_\FKP(q_m) = \lambda_m\, q_m \quad , \quad , m\in\NN\, ,
\end{equation}
for operator $\hat L_\FKP$, with the condition defined by Eq.~\eqref{eq3.17}, is such that
\begin{equation} \label{eq3.24}
 0 =\lambda_0 < \lambda_1 \leq \lambda_2 \leq \ldots \, ,
\end{equation}
the multiplicity of each eigenvalue being finite and the associated eigenfunctions $\{q_m ,m\in\NN\}$ belong to $L^2(\RR^\nu)$.
We will admit that the family $\{q_m, m\in\NN\}$ is a Hilbert basis of $L^2(\RR^\nu)$. We then have
\begin{equation} \label{eq3.26}
  \langle q_m , q_n \rangle_{L^2} = \int_{\RR^\nu} q_m(\bfy) \, q_n(\bfy)\, d\bfy = \delta_{mn} \, .
\end{equation}
The eigenfunction $q_0$ associated with $\lambda_0=0$, is such that (see Eq.~\eqref{eq3.23}),
\begin{equation} \label{eq3.27}
  q_0 = p_\bfH^{1/2} \quad ,\quad \Vert\, q_0\Vert_{L^2} = 1 \, ,
\end{equation}
and we have
\begin{equation} \label{eq3.28}
  \sum_{m\in\NN} q_m(\bfy) \, q_m(\bfx)\, d\bfy = \delta_0(\bfy-\bfx) \, .
\end{equation}
From Eqs.~\eqref{eq3.26} and \eqref{eq3.27}, it can be deduced that
\begin{equation} \label{eq3.28}
  \forall m  \geq 1 \quad , \quad  \int_{\RR^\nu} p_\bfH(\bfy)^{1/2} \, q_m(\bfy) \, d\bfy = 0 \, .
\end{equation}
Here, the objective is to compute with quantum computing algorithm, the values of the eigenfunctions $q_m$,
for $m = 1,\ldots , M$, associated with the eigenvalues $0=\lambda_0 < \lambda_1 <\ldots < \lambda_M$ with $M < n_d$ at the sampling points $\bfeta^1, \ldots ,\bfeta^{n_d}$, that is to say, for fixed $m$, $q_m(\bfeta^1), \ldots , q_m(\bfeta^{n_d})$.

\subsection{Proposed methodology to construct a representation of the potential $\curV$ adapted to quantum computing algorithm}
\label{Section3.7}
As we explained in Section~\ref{Section1}, within the formulation we propose for the quantum computing algorithm, it is necessary to construct a polynomial representation in $\bfy$ of the potential $\curV(\bfy)$. Clearly, the expression of $\curV(\bfy)$, as defined by Eq.\eqref{eq3.19}, where $\Phi(\bfy)$ is defined by Eq.\eqref{eq2.8}, is not a polynomial. Therefore, we need to construct an approximation of $\curV(\bfy)$ in the space of multivariate polynomials on $\RR^\nu$, for instance, using monomials on $\RR^\nu$. Such a construction is not straightforward for the following reasons:

\noindent (i) The problem is set in high dimensions, meaning that $\nu$ is large.

\noindent (ii) The sequence of polynomials chosen from the selected subset of all the polynomials on $\RR^\nu$ must be complete in the adapted vector space, and the convergence must be fast with respect to the number of polynomials and their maximum  degrees.

\noindent (iii) Eq.~\eqref{eq3.19} shows that potential $\curV(\bfy)$ is a linear combination of a positive-valued term $\Vert \,\nabla\Phi(\bfy) \,  \Vert^2$ and a real-valued term $\nabla^2 \Phi(\bfy)$, where $\Phi(\bfy)$ is defined by Eq.~\eqref{eq2.8}.
Although $\curV(\bfy)$ is not, {\textit{a priori}}, positive for all $\bfy$ in $\RR^\nu$,
there is an  underlying algebraic structure in the expression of $\curV(\bfy)$,  and these induced algebraic properties must be preserved.\\

There is an important property of the problem that can and will be used to select a well-adapted family of polynomials. The probability density function $p_\bfH$ corresponds to the normalized random variable $\bfH$, which is therefore centered and whose covariance matrix is equal to the identity matrix.  Thus, if the training dataset consisted of $n_d$ independent realizations of a normalized Gaussian random vector, then as $n_d \rightarrow +\infty$, $p_\bfH$ would be Gaussian, and $p_\bfH$ would coincide with the canonical normal density on $\RR^\nu$, denoted $p_\bfY$.
In this case, we would have $\Phi(\bfy) = \Vert\bfy\Vert^2 /2$, leading to
$\Vert \nabla\Phi(\bfy) \Vert^2 = \Vert\bfy\Vert^2$ and $\nabla^2\Phi(\bfy) = \nu$, which shows that $\curV(\bfy)$ would be written as $\Vert\bfy\Vert^2/8 - \nu/4$. Consequently, the choice of normalized Hermite polynomials, which constitute a Hilbert basis in the
Hilbert space of all square integrable functions on $\RR^\nu$ for the measure $p_Y(\bfy) d\bfy$, is certainly a good choice with respect
to  (i) and (ii). Therefore, the choice will  be the polynomial chaos expansion (PCE) in the Gaussian space. Nevertheless, to satisfy (iii), it is preferable to construct such a PCE for $f(\bfy) = \log (p_\bfH(\bfy))$ and then, to deduce of it the PCE of $\Vert \nabla f(\bfy) \Vert^2$ and of $\nabla^2 f(\bfy)$, which allows for the PCE of $\curV(\bfy)$ to be obtained, knowing that $\nabla \Phi(\bfy) = -\nabla f(\bfy)$ and $\nabla^2 \Phi(\bfy)= \nabla^2 f(\bfy)$ and so
\begin{equation} \label{eq3.29}
 \curV(\bfy) =  \frac{1}{8} \Vert \,\nabla f(\bfy) \,  \Vert^2  +  \frac{1}{4} \nabla^2 f(\bfy) \, .
\end{equation}
We will then use the Gaussian Sobolev space \cite{Malliavin1995}. Finally, the following comments should be noted:

\noindent (a) Performing a PCE of $p_H$, instead of $f(\bfy) = \log (p_\bfH(\bfy))$,  would not be convenient because it is difficult to preserve the positivity of the PCE of $p_\bfH$ for all $\bfy$ in $\RR^\nu$ with a small number of polynomials of low degrees.

\noindent (b) Performing a direct construction of the PCE of $\nabla\Phi(\bfy)$, and even more so for $\nabla^2\Phi(\bfy)$, would also be difficult to satisfy (ii) because
$\nabla\Phi(\bfy) = - p_\bfH(\bfy)^{-1} \, \nabla p_\bfH(\bfy)$, showing that $p_\bfH(\bfy)$ appears to the denominator.
%
\subsection{Use of the eigenfunctions of the FKP operator in PLoM}
\label{Section3.8}
The PLoM approach was originally developed using a reduced-order diffusion-maps basis (RODB) $\{\bfg_\DM^1,\ldots ,$ $\bfg_\DM^{m_\optpp}\}$  in $\RR^{n_d}$, represented by the matrix $[g_\DM]\in \MM_{n_d,m_\optpp}$ with $m_\optp < n_d$.
Such an RODB is constructed using the isotropic kernel built from the training dataset $\curD_\training(\bfeta)$, which consists of the $n_d$ points $\bfeta^1,\ldots,\bfeta^{n_d}$ in $\RR^d$ (see \cite{Soize2016,Soize2020c}). The optimal dimension
$m_\optp$ is determined as explained in \cite{Soize2022a} or possibly by using the algorithm presented in \cite{Soize2019b}.

If the data in the training dataset are heterogeneous, we proposed in \cite{Soize2024b} to replace the RODB with a reduced-order basis
$\{\bfg_\TB^1,\ldots , \bfg_\TB^{m_\optpp}\}$ (ROTB) in $\RR^{n_d}$, represented by the matrix $[g_\TB]\in \MM_{n_d,m_\optpp}$. This ROTB is constructed using a transient anisotropic kernel based on  the transition probability density function, which is the transient solution of the  Fokker-Plank equation defined by Eq.~\eqref{eq3.8} and estimated by solving the It\^o stochastic differential equation defined by Eqs.~\eqref{eq3.1} and \eqref{eq3.2}. It should be noted that in this construction, we introduced a scaling change in the transient anisotropic kernel to connect the ROTB to the RODB in the vicinity of the initial time $t=0$. This approach was proposed because the use of eigenfunctions from the eigenvalue problem of the $L_\FKP$ operator, defined by Eq.~\eqref{eq3.12}, could not be explored due to the fact that, in high dimensions, such an eigenvalue problem cannot be solved using classical computers.

This is the reason why, in the present paper, we explore a methodology for calculating the eigenfunctions associated with the  smallest eigenvalues of the $L_\FKP$ operator using quantum computing, with the aim of developing an approach that could allow, in the near future, replacing the ROTB with the reduced-order basis $\{\bfg_\FKP^1,\ldots , \bfg_\FKP^{m_\optpp}\}$ in $\RR^{n_d}$, represented by the matrix $[g_\FKP]\in \MM_{n_d,m_\optpp}$.
For $m = 1,\ldots,m_\optp$, the vector  $\bfg_\FKP^m$ in $\RR^{n_d}$ is constructed as the value of the eigenfunction $\bfy\mapsto \vc_m(\bfy)$ associated with the eigenvalues $\lambda_m$ at the point $\bfeta^j$ in $\RR^d$. From Eqs.~\eqref{eq3.12}, \eqref{eq3.14},  and \eqref{eq3.23b} with \eqref{eq3.24}, we obtain the component $\bfg_{\FKP,j}^m$   of vector $\bfg_{\FKP}^m \in\RR^{n_d}$, where $j\in \{1,\ldots, n_d\}$,
\begin{equation} \label{eq3.30}
\bfg_{\FKP,j}^m = \vc_m(\bfeta^j) = p_\bfH(\bfeta^j)^{1/2}\, q_m(\bfeta^j)\, .
\end{equation}
%

%
\section{Polynomial chaos expansion of potential $\curV$ in the Gaussian space for quantum computing algorithm}
\label{Section4}
%
In this section, we introduce the functional spaces necessary to construct the Gaussian polynomial chaos expansion. We then present and prove such an expansion of $\curV$ based on the strategy proposed in Section~\ref{Section3.7}. Finally we introduce the convergence criteria of the truncated Chaos representation of $\curV$.
%
\subsection{Properties of the function $f = log(p_\bfH)$ and its two first derivatives}
\label{Section4.1}
We begin with Definition~\ref{definition:1} of functional spaces and then proceed with  Proposition~\ref{proposition:1}, which provides the properties of the function $f = log(p_\bfH)$ and its two first derivatives.

\begin{definition}[Hilbert spaces $L^2(\Theta,\RR)$ and $\HH= L^2(\RR^\nu; p_\bfY(\bfy)\, d\bfy)$ ] \label{definition:1}
(i) Let $\bfY=(Y_1,\ldots,Y_\nu)$ be a $\RR^\nu$-valued normalized Gaussian random variable defines on a probability space $(\Theta,\curT,\curP)$, whose probability measure $p_\bfY(\bfy)\, d\bfy$ is defined by the canonical normal density $\bfy\mapsto p_\bfY(\bfy)$ on $\RR^\nu$,
\begin{equation} \label{eq4.1}
p_\bfY(\bfy) =  (2\pi)^{-\nu/2} \exp(-\Vert \bfy \Vert^2 /2) \, .
\end{equation}
Let $L^2(\Theta,\RR)$ be the Hilbert space of all second-order real-valued random variables defined on the probability space $(\Theta,\curT,\curP)$ (equivalence classes of almost-surely equal random variables), equipped with the inner product and the associated norm,
\begin{equation} \label{eq4.2}
\langle U , V \rangle_\Theta \, = E\{U V\} = \int_\Theta U(\theta)\, V(\theta) \, d\curP(\theta)
\quad , \quad
 \Vert \, U\, \Vert_\Theta \,  = (E\{ U^2\} ) ^{1/2} \, .
\end{equation}
(ii) Let $\HH = L^2(\RR^\nu; p_\bfY(\bfy)\, d\bfy)$ be the Hilbert space of all the real-valued functions on $\RR^\nu$, which are square integrable on $\RR^\nu$ with respect to the measure $p_\bfY(\bfy) \,d\bfy$, equipped with the inner product and the associated norm,
\begin{equation} \label{eq4.3}
\langle u , \vc \rangle_\HH  = \int_{\RR^\nu} u(\bfy)\, \vc(\bfy)\, p_\bfY(\bfy)\, d\bfy
\quad , \quad
 \Vert \, u\, \Vert_\HH^2 = \angle u , u \rangle_\HH \, .
\end{equation}
For all continuous functions $u$ and $\vc$  in $\HH$, $U = u(\bfY)$ and $V = \vc(\bfY)$ arereal-valued random variables  in $L^2(\Theta,\RR)$,
\begin{equation} \label{eq4.4}
\langle u , \vc \rangle_\HH  = E\{ u(\bfY) \, \vc(\bfY)\} = E\{U V\} = \langle U , V \rangle_\Theta
\quad , \quad
 \Vert \, u\, \Vert_\HH^2 = E\{ u(\bfY)^2 \} = E\{ U^2\} ) = \Vert \, U\, \Vert_\Theta^2  \, .
\end{equation}
\end{definition}

%
\begin{proposition}[Properties of function $f = \log(p_\bfH)$] \label{proposition:1}
Let $\nu \geq 1$ and $n_d\geq 1$ be fixed integers, and let $\curD_\training(\bfeta)$ be the training dataset of points $\bfeta^1,\ldots,\bfeta^{n_d}$ in $\RR^\nu$, defined in Section~\ref{Section2.2}. Let $p_\bfH$ and $p_\bfY$ be defined by Eqs.~\eqref{eq2.7} and \eqref{eq4.1}.

\noindent (i) The real-valued function $f$ is continuous on $\RR^\nu$ and belongs to $\HH$,
\begin{equation} \label{eq4.5}
E\{f(\bfY)^2\} = \Vert \, f\, \Vert_\HH^2 \,  < + \infty \, .
\end{equation}
(ii) Let $\bfg = (g_1,\ldots, g_\nu)$ be the gradient  function defined by $\bfg = \nabla f = (\partial_1 f,\ldots , \partial_\nu f)$ where $\partial_j = \partial/\partial y_j$ is a $\RR^\nu$-valued continuous function on $\RR^\nu$. For all $j=1,\ldots,\nu$, the function  $g_j$ belongs to $\HH$ and the function $\hat g = \Vert\,\bfg\, ||$ belongs  to $\HH$,
\begin{equation} \label{eq4.6}
E\{ g_j(\bfY)^2\} = \Vert \, g_j\, \Vert_\HH^2 \, < + \infty
\quad , \quad
E\{  \hat g(\bfY)^2\} = \Vert \, \hat g\, \Vert_\HH^2 \, < + \infty
\quad , \quad
E\{  \hat g(\bfY)^4\}  < + \infty
\end{equation}
(ii) The Laplacian function $h = \nabla^2 f = \partial^2_1 f + \ldots + \partial^2_\nu f$  is a real-valued continuous function on $\RR^\nu$ and the function $h$ belongs  to $\HH$,
\begin{equation} \label{eq4.7}
E\{ \vert\, h(\bfY)\, \vert^2\} = \Vert \, h\, \Vert_\HH^2 \, < + \infty \, .
\end{equation}
\end{proposition}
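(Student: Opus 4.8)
The plan is to exploit the log-sum-exp structure of $f$. Writing $a = 1/(2\hat s^2) > 0$ and $\bfmu^j = (\hat s/s)\,\bfeta^j$, Eqs.~\eqref{eq2.7}--\eqref{eq2.8} give $f(\bfy) = \log p_\bfH(\bfy) = \log c_\nu + \log\big(\tfrac{1}{n_d}\sum_{j=1}^{n_d} e^{-a\Vert\bfy-\bfmu^j\Vert^2}\big)$. Throughout I would track the softmax weights $w_j(\bfy) = e^{-a\Vert\bfy-\bfmu^j\Vert^2}/\sum_{l=1}^{n_d} e^{-a\Vert\bfy-\bfmu^l\Vert^2}$, which are smooth, nonnegative, and sum to one. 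The key structural observation is that the weighted mean $\bar{\bfmu}(\bfy) = \sum_{j=1}^{n_d} w_j(\bfy)\,\bfmu^j$ is a convex combination of finitely many fixed points, so $\Vert\bar{\bfmu}(\bfy)\Vert \leq R := \max_j\Vert\bfmu^j\Vert$ for every $\bfy$. Since the Gaussian measure $p_\bfY(\bfy)\,d\bfy$ integrates every polynomial in $\Vert\bfy\Vert$, each claim of the proposition reduces to a polynomial-growth bound on $f$ and its derivatives.

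For (i), I would sandwich $f$ between two quadratics. As each exponential is at most one, the average is at most one and $f(\bfy) \leq \log c_\nu$; keeping only the $j=1$ term gives $f(\bfy) \geq \log c_\nu - \log n_d - a\Vert\bfy-\bfmu^1\Vert^2$, hence $\vert f(\bfy)\vert \leq C(1+\Vert\bfy\Vert^2)$ for some constant $C$. Thus $f^2$ is dominated by a quartic polynomial and $E\{f(\bfY)^2\} < +\infty$, which is Eq.~\eqref{eq4.5}; continuity is immediate since the inner sum is smooth and strictly positive.

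For (ii), differentiating the log-sum-exp collapses cleanly to $\bfg(\bfy) = \nabla f(\bfy) = -2a\big(\bfy - \bar{\bfmu}(\bfy)\big)$. The bound on $\bar{\bfmu}$ then gives the linear estimate $\hat g(\bfy) = \Vert\bfg(\bfy)\Vert \leq 2a(\Vert\bfy\Vert + R)$, so each $g_j$ and $\hat g$ grow at most linearly; consequently $g_j^2$ and $\hat g^2$ are quadratic and $\hat g^4$ quartic in $\Vert\bfy\Vert$, all Gaussian-integrable, which yields Eq.~\eqref{eq4.6}.

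For the Laplacian in (iii), the single genuine computation is a second differentiation. Using $\partial_k w_j = 2a\,w_j(\mu^j_k - \bar\mu_k(\bfy))$, one finds
\[
\partial_k g_k(\bfy) = -2a + 4a^2\Big(\sum_{j=1}^{n_d} w_j(\bfy)\,(\mu^j_k)^2 - \bar\mu_k(\bfy)^2\Big),
\]
and the bracket is exactly the $w(\bfy)$-weighted variance of the $k$-th coordinates $\mu^1_k,\ldots,\mu^{n_d}_k$, which is nonnegative and bounded above by $\max_j(\mu^j_k)^2$ uniformly in $\bfy$. Summing over $k$ shows that $h = \nabla^2 f$ is a bounded continuous function on $\RR^\nu$, and a bounded function lies trivially in $\HH$ (the measure being a probability measure), establishing Eq.~\eqref{eq4.7}. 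The main obstacle is thus not analytic but a matter of bookkeeping: recognizing the softmax-variance identity in this last step, after which every integrability assertion follows from the single fact that the Gaussian measure integrates all polynomials.
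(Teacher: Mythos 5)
Your proposal is correct, and its skeleton coincides with the paper's: reduce every claim to a polynomial-growth bound and invoke the finiteness of all Gaussian moments. The bound for the gradient in (ii) is, up to notation, identical to the paper's (the paper also uses, implicitly, that the softmax weights sum to one to get $\Vert\bfg(\bfy)\Vert\leq a+b\Vert\bfy\Vert$). Where you genuinely diverge is in (i) and (iii). For (i) the paper applies Jensen's inequality to the convex map $z\mapsto(\log z)^2$ on $]0,1]$ to dominate $(\log\xi(\bfy))^2$ by an average of quartics, whereas you simply sandwich $f$ between a constant and a single quadratic by discarding all but one (positive) term of the sum; your route is more elementary and gives the same quartic domination of $f^2$. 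For (iii) the paper writes $h=-\Vert\bfg\Vert^2+\xi^{-1}\nabla^2\xi$ and bounds each piece separately, obtaining $\vert h(\bfy)\vert\leq c_0+c_1\Vert\bfy\Vert+c_2\Vert\bfy\Vert^2$; you instead push the differentiation through the softmax weights and identify $\partial_k g_k+2a$ with $4a^2$ times a weighted variance of the shifted data points, which is nonnegative and uniformly bounded, so that $h$ is in fact a \emph{bounded} function. That is a strictly sharper conclusion than the paper needs (boundedness versus quadratic growth), though both trivially yield $h\in\HH$ since $p_\bfY(\bfy)\,d\bfy$ is a probability measure with all moments finite. Your variance identity and its uniform bound are correct, so there is no gap; the only thing I would ask you to make explicit is the one-line justification that the Gaussian measure integrates $\Vert\bfy\Vert^{4}$, which is what all three parts ultimately rest on.
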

%
\begin{proof} (Proposition~\ref{proposition:1}).
The continuity of functions $f$, $g$, and $h$ is simple to be prove.

\noindent (i) Let $p_\bfH(\bfy)= c_\nu\, \xi(\bfy)$ be defined by Eq.~\eqref{eq2.7} where
$\xi(\bfy) = \frac{1}{n_d}\sum_{j=1}^{n_d}  \exp ( -\frac{1}{2 {\widehat s\,}^2} \Vert \bfy -
\frac{\widehat s}{s}\bfeta^j\Vert^2 )$.
For all $\bfy$ in $\RR^\nu$, we have $0 < \xi(\bfy) \leq 1$ and
$(\log p_\bfH(\bfy))^2 = ( \log c_\nu + \log \xi(\bfy))^2 \, \leq 2(\log c_\nu)^2 + 2 (\log\xi(\bfy))^2$.
Consequently, $E\{f(\bfY)^2\} \, \leq  2(\log c_\nu)^2 + 2 E\{ (\log\xi(\bfY))^2\}$. Since $z\mapsto (\log z)^2$ is a convex function on $]0,1]$, and since $\xi(\bfy) \in ]0,1]$ for all $\bfy$ in $\RR^\nu$, using the Jensen inequality yields,
$(\log\xi(\bfy))^2 \, \leq \frac{1}{n_d}\sum_{j=1}^{n_d} \{\log  ( \exp ( -\frac{1}{2 \,{\widehat s\,}^2} \Vert \bfy -
\frac{\widehat s}{s}\bfeta^j\Vert^2 ) ) \}^2 = \frac{1}{n_d}\sum_{j=1}^{n_d} \frac{1}{4\, {\widehat s\,}^4} \Vert \bfy - \frac{\widehat s}{s}\bfeta^j\Vert^4$.
Therefore, we have $E\{(\log\xi(\bfY))^2\} < +\infty$, which proves Eq.~\eqref{eq4.5}.

\noindent (ii) We have $\bfg(\bfy) = p_\bfH(\bfy)^{-1}\, \nabla p_\bfH(\bfy) = \xi(\bfy)^{-1} \nabla \xi(\bfy)$ and since
$\nabla\xi(\bfy) = -\frac{1}{n_d} \frac{1}{{\widehat s}^2}\sum_{j=1}^{n_d} (\bfy - \frac{\widehat s}{s}\bfeta^j) \,\exp ( -\frac{1}{2 {\widehat s\,}^2} \Vert \bfy - \frac{\widehat s}{s}\bfeta^j\Vert^2 )$, we deduce that
$\Vert\bfg(\bfy)\Vert \, \leq \frac{1}{{\widehat s}^2}\, \Vert \bfy\Vert + \frac{1}{n_d} \frac{1}{{\widehat s}^2 \xi(\bfy)}\sum_{j=1}^{n_d}
 \frac{\widehat s}{s}\, \Vert\bfeta^j\Vert \, \exp ( -\frac{1}{2 {\widehat s\,}^2} \Vert \bfy - \frac{\widehat s}{s}\bfeta^j\Vert^2 )$.
Since $n_d$ is finite and since $E\{\Vert \bfH\Vert^2\} =\frac{1}{n_d} \sum_{j=1}^{n_d} \Vert \bfeta^j\Vert^2 = \nu$ is also finite, we have $\sup_{j=1,...,n_d} \Vert\bfeta^j\Vert = c_\eta < +\infty$. Therefore,
$\Vert\bfg(\bfy)\Vert \, \leq  \frac{1}{{\widehat s}^2}\, \frac{\widehat s}{s}\, c_\eta + \frac{1}{{\widehat s}^2}\, \Vert\bfy\Vert$ and
$\Vert\bfg(\bfy)\Vert^4 \, \leq  8 (\frac{1}{{\widehat s}^2}\, \frac{\widehat s}{s}\, c_\eta)^4 + 8(\frac{1}{{\widehat s}^2}\, \Vert\bfy\Vert)^4$. Since
$E\{\Vert\bfY\Vert\} < + \infty$, $E\{\Vert\bfY\Vert^2\} < + \infty$, and $E\{\Vert\bfY\Vert^4\} < + \infty$,  we have proven the second  and third parts of  Eq.~\eqref{eq4.6}. Since, for all $j$, $E\{g_j(\bfY)^2\} \leq E\{\hat g(\bfY)^2 \} < +\infty$ yields the first part of Eq.~\eqref{eq4.6}.

\noindent (iii) It can be seen that $h(\bfy) = \nabla^2 f(\bfy) = -\Vert \bfg(\bfy)\Vert^2 \, + \, \xi(\bfy)^{-1}\, \nabla^2\xi(\bfy)$, and
$\xi(\bfy)^{-1}\, \nabla^2\xi(\bfy) =  -\frac{\nu}{{\widehat s\,}^2}
+ \frac{1}{n_d} \frac{1}{{\widehat s}^4\,\xi(\bfy)}\sum_{j=1}^{n_d} \Vert\bfy - \frac{\widehat s}{s}\bfeta^j\Vert^2 \,
\exp ( -\frac{1}{2 {\widehat s\,}^2} \Vert \bfy - \frac{\widehat s}{s}\bfeta^j\Vert^2 )$,
which allows us to write
$\vert\xi(\bfy)^{-1}\, \nabla^2\xi(\bfy)\vert \,
\leq  \frac{\nu}{{\widehat s\,}^2} + \frac{1}{{\widehat s\,}^4} \, \Vert \bfy \Vert^2
+ \frac{1}{n_d} \frac{1}{{\widehat s}^4}\, \frac{{\widehat s\,}^2}{{ s}^2}   \frac{1}{\xi(\bfy)}
\sum_{j=1}^{n_d} \Vert\bfeta^j\Vert^2\,
\exp ( -\frac{1}{2 {\widehat s\,}^2} \Vert \bfy - \frac{\widehat s}{s}\bfeta^j\Vert^2 )$.
Since $\sup_j \Vert\bfeta^j\Vert^2 = \tilde c_\eta < +\infty$, we obtain
$\vert\xi(\bfy)^{-1}\, \nabla^2\xi(\bfy)\vert \,
\leq  \frac{\nu}{{\widehat s\,}^2} + \frac{1}{{\widehat s\,}^4} \,\frac{{\widehat s\,}^2}{{ s}^2} \,\tilde c_\eta
    + \frac{1}{{\widehat s\,}^4} \, \Vert \bfy \Vert^2$.
It can then be deduced that
$\vert h(\bfy)\vert  \, \leq \, \Vert\bfg(\bfy)\Vert^2 +  \frac{\nu}{{\widehat s\,}^2}  +
\frac{1}{{\widehat s\,}^4} \,\frac{{\widehat s\,}^2}{{ s}^2} \,\tilde c_\eta  +
\frac{1}{{\widehat s\,}^4} \, \Vert \bfy \Vert^2$
and using (ii) yields
$\vert h(\bfy)\vert \, \leq \, c_0 + c_1\Vert\bfy\Vert + c_2 \Vert\bfy\Vert^2$ with $c_0 > 0$, $c_1 > 0$, and $c_2 > 0$.
We then have proven Eq.~\eqref{eq4.7} because $E\{\Vert\bfY\Vert^r\} < +\infty$ for $r=0,1,\ldots, 4$.

\end{proof}
%
\subsection{Polynomial chaos expansion in Gaussian space of potential $\curV$}
\label{Section4.2}
We begin with Definition~\ref{definition:2} of the normalized Hermite polynomials on $\RR^\nu$ as a Hilbert basis of $\HH$.
Then, Proposition~\ref{proposition:2} yields the polynomial chaos expansions in $\HH$ on the normalized Hermite polynomials for functions $f$, $\bfg$, and $h$. Finally, Proposition~\ref{proposition:3} provides the Polynomial chaos expansion in Gaussian space for the potential $\curV$.

\begin{definition}[Normalized Hermite polynomials on $\RR^\nu$ as a Hilbert basis of $\HH$ ] \label{definition:2}
(i) {\textit{Normalized Hermite polynomials on $\RR$}}. Let $\{H_n(y), n\in\NN\}$ be the Hermite polynomials on $\RR$ related to the canonical normal measure $p_Y(y)\, dy$ on $\RR$ where  $p_Y(y)=(2\pi)^{-1/2} \exp(-y^2/2)$, whose the first three polynomials are $H_0(y) = 1$, $H_1(y) = y$, and $H_2(y) = y^2 -1$. It is well known that $\{\psi_n(y) = H_n(y)/\sqrt{n!}\, , \, n\in\NN\}$ is a Hilbert basis of the Hilbert space of $L^2(\RR; p_Y(y)\, dy)$ (see for instance \cite{Wiener1938,Cameron1947,Kree1986,Malliavin1995}).

\noindent (ii) {\textit{Normalized Hermite polynomials on $\RR^\nu$}}. Let $\bfalpha = (\alpha_1,\ldots,\alpha_\nu)$ be the multi-index in $\NN^\nu$.  Therefore, the family of normalized Hermite polynomials $\{ \psi_\bfalpha, \bfalpha\in\NN^\nu \}$, defined by
\begin{equation} \label{eq4.8}
\psi_\bfalpha(\bfy) = \frac{H_{\alpha_1}(y_1)}{\sqrt{\alpha_1!}} \times \ldots \times \frac{H_{\alpha_\nu}(y_\nu)}{\sqrt{\alpha_\nu!}} \quad , \quad
\forall \bfy = (y_1,\ldots , y_\nu) \in \RR^\nu\, ,
\end{equation}
constitutes a Hilbert basis of $\HH$, which implies that
\begin{equation} \label{eq4.9}
\langle \psi_\bfalpha , \psi_\bfbeta \rangle_\HH = E\{\psi_\bfalpha(\bfY) \, \psi_\bfbeta(\bfY)\} = \delta_{\bfalpha\bfbeta} \quad , \quad \forall \bfalpha \in\NN^\nu \, , \,
\forall \bfbeta \in\NN^\nu \, .
\end{equation}
Let $\vert\bfalpha\vert = \alpha_1 + \ldots + \alpha_\nu$ be the degree of $\psi_\bfalpha$. For $\vert\bfalpha\vert = 0$ we have
$\bfalpha^{(0)} = (0,\ldots, 0)$ and $\psi_{\bfalpha^{(0)}}(\bfy) = 1$. It can then be deduced that, for all $\bfalpha$ such that
$\vert\bfalpha\vert \geq 1$, $E\{\psi_\bfalpha(\bfY)\}=0$.
\end{definition}
%
\begin{proposition}[Polynomial chaos expansions in $\HH$ for functions $f$, $\bfg$, and $h$] \label{proposition:2}
Let $f$, $\bfg =(g_1,\ldots,g_\nu)$, and $h$ be the functions on $\RR^\nu$ defined in Proposition~\ref{proposition:1}.

\noindent (i) Function $f$ in $\HH$ admits the polynomial chaos expansion
$f = \sum_{\bfalpha\in\NN^\nu}  f_\bfalpha \, \psi_\bfalpha$,
which is  convergent in $\HH$.
The coefficients $f_\bfalpha = \langle f , \psi_\bfalpha \rangle_\HH$ are real, $f_{\bfalpha^{(0)}} = E\{f(\bfY)\}$, and
$\Vert f\Vert_\HH^2 = \sum_{\bfalpha\in\NN^\nu}  f_\bfalpha^2 < +\infty$.

\noindent (ii) For all $j=1,\ldots,\nu$, function $g_j =\partial_j f$ in $\HH$ admits the polynomial chaos expansion
$g_j = \sum_{\bfalpha\in\NN^\nu}  g_{j,\,\bfalpha} \, \psi_\bfalpha$,
which is convergent in $\HH$,
where the real coefficients $g_{j,\,\bfalpha}$ are defined by
\begin{equation} \label{eq4.10}
g_{j,\,\bfalpha} = \sqrt{\alpha_j + 1}\, f_{\alpha_1,\ldots,\alpha_j+1,\ldots,\alpha_\nu }
\quad , \quad \forall \bfalpha= (\alpha_1,\ldots,\alpha_j,\ldots,\alpha_\nu) \in \NN^\nu\, .
\end{equation}
We have $g_{j,\,\bfalpha^{(0)}} = E\{g_j(\bfY)\}$ and
$\Vert g_j \Vert_\HH^2 = \sum_{\bfalpha\in\NN^\nu}  g_{j,\,\bfalpha}^2 < +\infty$.

\noindent (iii) Function $h=\sum_{j=1}^\nu \partial_j^2 f$ in $\HH$ admits the following polynomial chaos expansion
$h = \sum_{\bfalpha\in\NN^\nu}  h_\bfalpha \, \psi_\bfalpha$,
which are convergent in $\HH$, where the real coefficients $h_{\bfalpha}$ are defined by
\begin{equation} \label{eq4.11}
h_{\bfalpha} = \sum_{j=1}^\nu \sqrt{(\alpha_j + 1)(\alpha_j + 2)}\, f_{\alpha_1,\ldots,\alpha_j+2,\ldots,\alpha_\nu }
\quad , \quad \forall \bfalpha= (\alpha_1,\ldots,\alpha_j,\ldots,\alpha_\nu) \in \NN^\nu\, .
\end{equation}
We have $h_{\bfalpha^{(0)}} = E\{h(\bfY)\}$ and
$\Vert h\Vert_\HH^2 = \sum_{\bfalpha\in\NN^\nu}  h_{\bfalpha}^2 < +\infty$.

\end{proposition}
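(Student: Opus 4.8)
The plan is to treat part (i) as a direct consequence of Proposition~\ref{proposition:1}(i) and the Hilbert-basis property of Definition~\ref{definition:2}, and then to reduce parts (ii) and (iii) to a single algebraic identity for the Hermite basis obtained by integration by parts against the Gaussian weight. For (i): since $f\in\HH$ and $\{\psi_\bfalpha,\bfalpha\in\NN^\nu\}$ is a Hilbert basis of $\HH$, $f$ has the convergent expansion $f=\sum_\bfalpha f_\bfalpha\psi_\bfalpha$ with $f_\bfalpha=\langle f,\psi_\bfalpha\rangle_\HH$, Parseval gives $\Vert f\Vert_\HH^2=\sum_\bfalpha f_\bfalpha^2<+\infty$, and $f_{\bfalpha^{(0)}}=\langle f,1\rangle_\HH=E\{f(\bfY)\}$ because $\psi_{\bfalpha^{(0)}}\equiv 1$. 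The same argument applies verbatim to $g_j$ and $h$ once Proposition~\ref{proposition:1}(ii),(iii) is invoked to place them in $\HH$: each has a convergent expansion with coefficients $g_{j,\bfalpha}=\langle g_j,\psi_\bfalpha\rangle_\HH$ and $h_\bfalpha=\langle h,\psi_\bfalpha\rangle_\HH$, the Parseval identities and the values of the $\bfalpha^{(0)}$-coefficients being immediate. Thus the only real content of (ii) and (iii) is to express these inner products in terms of the $f_\bfalpha$.

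The mechanism is the lowering/raising relations for the normalized Hermite polynomials. In one variable, $H_n'=nH_{n-1}$ gives $\psi_n'=\sqrt{n}\,\psi_{n-1}$, and combining this with the three-term recurrence $yH_n=H_{n+1}+nH_{n-1}$ gives $(y-\tfrac{d}{dy})\psi_n=\sqrt{n+1}\,\psi_{n+1}$. Tensorizing through Eq.~\eqref{eq4.8}, and writing $\bfe_j$ for the multi-index with $1$ in slot $j$ and $0$ elsewhere, this yields $(y_j-\partial_j)\psi_\bfalpha=\sqrt{\alpha_j+1}\,\psi_{\bfalpha+\bfe_j}$. Since $\partial_j p_\bfY=-y_j\,p_\bfY$, integration by parts shows that the adjoint of $\partial_j$ in $\HH$ is $y_j-\partial_j$, i.e. $\langle\partial_j u,v\rangle_\HH=\langle u,(y_j-\partial_j)v\rangle_\HH$. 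Taking $u=f$ and $v=\psi_\bfalpha$ then gives $g_{j,\bfalpha}=\langle\partial_j f,\psi_\bfalpha\rangle_\HH=\sqrt{\alpha_j+1}\,\langle f,\psi_{\bfalpha+\bfe_j}\rangle_\HH=\sqrt{\alpha_j+1}\,f_{\bfalpha+\bfe_j}$, which is Eq.~\eqref{eq4.10}. Applying the adjoint relation twice, $\langle\partial_j^2 f,\psi_\bfalpha\rangle_\HH=\langle f,(y_j-\partial_j)^2\psi_\bfalpha\rangle_\HH=\sqrt{(\alpha_j+1)(\alpha_j+2)}\,f_{\bfalpha+2\bfe_j}$, and summing over $j$ produces Eq.~\eqref{eq4.11}.

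The step I expect to be delicate is the rigorous justification of the integration by parts — that the boundary flux vanishes at infinity and that every integrand is absolutely integrable, so the adjoint identity holds for $u=f$ and not merely for polynomials $u$. This is where the growth estimates from the proof of Proposition~\ref{proposition:1} are essential: $f$, $g_j$, and $h$ are all dominated by $c_0+c_1\Vert\bfy\Vert+c_2\Vert\bfy\Vert^2$, while $\psi_\bfalpha$ is polynomial and $p_\bfY$ has Gaussian decay, so $f\,\psi_\bfalpha\,p_\bfY$ and its $\partial_j$-derivative are integrable and the surface term over $\Vert\bfy\Vert=R$ tends to $0$ as $R\to+\infty$. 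One clean way to package this is to approximate $\psi_\bfalpha\,p_\bfY$ by compactly supported smooth functions and pass to the limit by dominated convergence, using that $f,\partial_j f\in\HH$. Once this adjoint relation is secured on $f$, parts (ii) and (iii) are purely algebraic, and the Parseval identities from part (i) guarantee the finiteness of the coefficient sums.
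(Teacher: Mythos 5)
Your proof is correct, but it takes a genuinely different route from the paper's. The paper starts from the expansion $f=\sum_\bfalpha f_\bfalpha\,\psi_\bfalpha$ and differentiates it term by term, using the lowering relation $\partial_j H_{\alpha_j}=\alpha_j H_{\alpha_j-1}$ and a change of index $\alpha_j=\hat\alpha_j+1$ to arrive at \eqref{eq4.10}; part (iii) is declared analogous and left to the reader. You instead compute each coefficient $g_{j,\bfalpha}=\langle\partial_j f,\psi_\bfalpha\rangle_\HH$ directly, moving the derivative onto the basis function via the Gaussian adjoint $\partial_j^{*}=y_j-\partial_j$ and the raising relation $(y_j-\partial_j)\psi_\bfalpha=\sqrt{\alpha_j+1}\,\psi_{\bfalpha+\bfe_j}$, which yields \eqref{eq4.10} with no re-indexing, and iterating the adjoint twice gives \eqref{eq4.11}. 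The two computations are dual, and each has a distinct analytic burden: the paper's version rests on the unproved (though standard, in the Gaussian Sobolev space framework it cites) interchange of $\partial_j$ with the infinite sum, i.e.\ $\partial_j\sum_\bfalpha f_\bfalpha\psi_\bfalpha=\sum_\bfalpha f_\bfalpha\,\partial_j\psi_\bfalpha$ in $\HH$; yours rests on an integration by parts whose hypotheses — polynomial growth of $f$, $g_j$, $h$ and Gaussian decay of $\psi_\bfalpha\, p_\bfY$ — you correctly identify and can extract verbatim from the bounds established in the proof of Proposition~\ref{proposition:1}, so the boundary flux over $\Vert\bfy\Vert=R$ indeed vanishes as $R\to+\infty$. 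Your version is arguably the more self-contained of the two, and it has the added benefit of exhibiting \eqref{eq4.10}--\eqref{eq4.11} as the action of the raising operator, which foreshadows the creation/annihilation structure introduced in Section~\ref{Section6}. The treatment of part (i) and of the Parseval identities and $\bfalpha^{(0)}$-coefficients is identical in both.
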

%
\begin{proof} (Proposition~\ref{proposition:2}). We use simultaneously Proposition~\ref{proposition:1} and Definition~\ref{definition:1}.

\noindent (i) Since $f$ is in $\HH$ and since $\{\psi_\bfalpha,\bfalpha\in\NN^\nu\}$ is a Hilbert basis of $\HH$, this  yields (i).

\noindent (ii) Since $g_j$ is in $\HH$, for all $j\in \{1,\ldots , \nu\}$, $f$ has a partial derivative with respect to $y_j$ in $\HH$. Consequently,
$g_j = \partial_j f = \partial_j \sum_{\bfalpha\in\NN^\nu}  f_\bfalpha \, \psi_\bfalpha = \sum_{\bfalpha\in\NN^\nu}  f_\bfalpha \, \partial_j\psi_\bfalpha$. It is known that
$\partial_j H_{\alpha_j}(y_j) = \alpha_j H_{\alpha_j -1}(y_j)$. Therefore, for $\alpha_j\geq 1$, we have $\partial_j H_{\alpha_j}(y_j)/\sqrt{\alpha_j!} =\sqrt{\alpha_j} \, H_{\alpha_j-1}(y_j)/\sqrt{(\alpha_j-1)!} $ and for $\alpha_j=0$, we have
$\partial_j H_{\alpha_j}(y_j)/\sqrt{\alpha_j!} = 0$. Function $g_j$ can then be rewritten as
$$g_j =\sum_{\alpha_1\geq 0} \ldots \sum_{\alpha_j\geq 1} \ldots \sum_{\alpha_\nu\geq 0} \sqrt{\alpha_j} \,  f_{\alpha_1,\ldots , \alpha_j,\ldots , \alpha_\nu}
\, \frac{H_{\alpha_1}}{\sqrt{\alpha_1!}}\times \ldots \times \frac{H_{\alpha_j -1}}{\sqrt{(\alpha_j -1)!}} \times \ldots \times \frac{H_{\alpha_\nu}}{\sqrt{\alpha_\nu!}}\, .$$
Performing the change of index $\alpha_j =\hat\alpha_j + 1$ and renaming  $\hat\alpha_j$ to $\alpha_j$ allow (ii) to be proven.

\noindent (iii) Since $h = \nabla^2 f = \partial^2_1 f + \ldots + \partial^2_\nu f$  is in $\HH$, the function $\partial_j^2 f$ exists in $\HH$. The proof of (iii) is then similar to that of (ii) and is left to the reader.
\end{proof}
%
%
\begin{proposition}[Polynomial chaos expansions in Gaussian space of potential $\curV$] \label{proposition:3}
The potential function $\curV$, defined by Eq.~\eqref{eq3.19} or even by Eq.~\eqref{eq3.29}, is in $\HH$ and its polynomial chaos expansion in Gaussian space  is written for all $\bfy$ in $\RR^\nu$ as
\begin{equation} \label{eq4.12}
\curV(\bfy) = \sum_{\bfalpha'\in\NN^\nu} \curV_{\bfalpha'}\, \psi_{\bfalpha'}(\bfy) \quad , \quad
E\{\Vert \curV(\bfY)\Vert^2 \} = \Vert \curV\Vert_\HH^2 = \sum_{\bfalpha'\in\NN^\nu} \curV_{\bfalpha'}^2 < +\infty \, .
\end{equation}
The polynomial chaos expansion of $\curV(\bfy)$ can also be rewritten as
\begin{equation} \label{eq4.13}
\curV(\bfy) = \frac{1}{8} \sum_{\bfalpha\in\NN^\nu} \sum_{\bfbeta\in\NN^\nu} \ggeclair_{\bfalpha\bfbeta} \, \psi_\bfalpha(\bfy)\, \psi_\bfbeta(\bfy) + \frac{1}{4}\sum_{\bfalpha\in\NN^\nu} h_\bfalpha \, \psi_\bfalpha(\bfy) \, ,
\end{equation}
in which $\ggeclair_{\bfalpha\bfbeta} = \sum_{j=1}^\nu g_{j,\,\bfalpha}\, g_{j,\,\bfbeta}$ and where $g_{j,\bfalpha}$ and $h_{\bfalpha}$ are defined by Eqs.~\eqref{eq4.10} and \eqref{eq4.11}.
For any $\bfalpha'$ in $\NN^\nu$, the real-valued coefficient  $\curV_{\bfalpha'}$ can be calculated by
\begin{equation} \label{eq4.14}
\curV_{\bfalpha'} = \frac{1}{8} \sum_{\bfalpha\in\NN^\nu}\sum_{\bfbeta\in\NN^\nu} \ggeclair_{\bfalpha\bfbeta} \, \,
E\{\psi_\bfalpha(\bfY)\, \psi_\bfbeta(\bfY)\, \psi_{\bfalpha'}(\bfY)\} + \frac{1}{4} h_{\bfalpha'}  \, .
\end{equation}
The mean value of the random variable $\curV(\bfY)$ is such that
\begin{equation} \label{eq4.15}
E\{\curV(\bfY) \} = \frac{1}{8}\sum_{\bfalpha\in\NN^\nu} \ggeclair_{\bfalpha\bfalpha} + \frac{1}{4} \,h_{\bfalpha^{0}} \quad , \quad  \bfalpha^{0} =(0,\ldots,0)\in\NN^\nu\,.
\end{equation}
\end{proposition}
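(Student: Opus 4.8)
The plan is to build everything on the decomposition $\curV = \tfrac{1}{8}\Vert\bfg\Vert^2 + \tfrac{1}{4}h$ coming from Eq.~\eqref{eq3.29} (with $\nabla f = \bfg$ and $\nabla^2 f = h$), and then to exploit the polynomial chaos expansions of $\bfg$ and $h$ already established in Proposition~\ref{proposition:2}. First I would verify that $\curV \in \HH$: writing $\curV = \tfrac{1}{8}\hat g^2 + \tfrac{1}{4}h$ and bounding $\vert\curV\vert^2 \leq \tfrac{1}{32}\hat g^4 + \tfrac{1}{8}h^2$, the membership $\curV\in\HH$ follows directly from $E\{\hat g(\bfY)^4\}<+\infty$ and $E\{h(\bfY)^2\}<+\infty$, both supplied by Proposition~\ref{proposition:1} (Eqs.~\eqref{eq4.6} and \eqref{eq4.7}). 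Because $\{\psi_{\bfalpha'}\}$ is a Hilbert basis of $\HH$ (Definition~\ref{definition:2}), this immediately gives the expansion and the Parseval identity of Eq.~\eqref{eq4.12}, with $\curV_{\bfalpha'} = \langle\curV,\psi_{\bfalpha'}\rangle_\HH$.

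Next, to obtain Eqs.~\eqref{eq4.13} and \eqref{eq4.14}, I would compute $\curV_{\bfalpha'}$ by linearity as $\tfrac{1}{8}\langle\hat g^2,\psi_{\bfalpha'}\rangle_\HH + \tfrac{1}{4}\langle h,\psi_{\bfalpha'}\rangle_\HH$; the second term is exactly $h_{\bfalpha'}$. For the first term I would write $\hat g^2 = \sum_{j=1}^\nu g_j^2$ and treat each $E\{g_j^2\,\psi_{\bfalpha'}(\bfY)\}$ as the inner product $\langle g_j, g_j\psi_{\bfalpha'}\rangle_\HH$. Inserting the $\HH$-convergent expansion $g_j = \sum_\bfalpha g_{j,\bfalpha}\psi_\bfalpha$ once in each slot, and using at each stage that the product of two Hermite polynomials (and that of $g_j$ with a fixed polynomial) is square integrable, yields $E\{g_j^2\psi_{\bfalpha'}(\bfY)\} = \sum_{\bfalpha,\bfbeta} g_{j,\bfalpha}\,g_{j,\bfbeta}\,E\{\psi_\bfalpha(\bfY)\psi_\bfbeta(\bfY)\psi_{\bfalpha'}(\bfY)\}$. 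Summing over $j$ and recalling $\ggeclair_{\bfalpha\bfbeta} = \sum_{j=1}^\nu g_{j,\bfalpha}g_{j,\bfbeta}$ gives Eq.~\eqref{eq4.14}, and re-assembling the series produces the closed form Eq.~\eqref{eq4.13}. Finally, Eq.~\eqref{eq4.15} is the special case $\bfalpha' = \bfalpha^{(0)}$: since $\psi_{\bfalpha^{(0)}} = 1$ the triple product reduces to $E\{\psi_\bfalpha(\bfY)\psi_\bfbeta(\bfY)\} = \delta_{\bfalpha\bfbeta}$ by orthonormality (Eq.~\eqref{eq4.9}), collapsing the double sum to $\sum_\bfalpha\ggeclair_{\bfalpha\bfalpha}$, while $\langle h,1\rangle_\HH = h_{\bfalpha^{(0)}} = E\{h(\bfY)\}$.

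The main obstacle is the rigorous justification of squaring the chaos expansion, i.e.\ passing from the $\HH$-convergent series for $g_j$ to the double series for $g_j^2$. Plain $L^2$-convergence of a Hermite expansion does not by itself license multiplying two such series. The resolution is precisely the extra fourth-moment integrability $E\{\hat g(\bfY)^4\}<+\infty$ from Proposition~\ref{proposition:1}: it guarantees $g_j\in L^4(\RR^\nu;p_\bfY\,d\bfy)$, hence $g_j^2\in\HH$, and, by Cauchy--Schwarz against the polynomials $\psi_\bfalpha$ (which have all Gaussian moments), that every auxiliary product $g_j\psi_{\bfalpha'}$ and $\psi_\bfalpha\psi_{\bfalpha'}$ lies in $\HH$. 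This is what makes each insertion of the expansion and each interchange of summation with the inner product legitimate. I would also note that, for fixed $\bfalpha$ and $\bfalpha'$, the polynomial $\psi_\bfalpha\psi_{\bfalpha'}$ has degree at most $\vert\bfalpha\vert+\vert\bfalpha'\vert$, so only finitely many $\bfbeta$ yield a nonzero $E\{\psi_\bfalpha\psi_\bfbeta\psi_{\bfalpha'}\}$; the inner sum over $\bfbeta$ is therefore genuinely finite, which both sidesteps convergence worries in that slot and makes Eq.~\eqref{eq4.14} computable in practice.
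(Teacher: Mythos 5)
Your proposal is correct and follows essentially the same route as the paper: the decomposition $\curV=\tfrac18\Vert\bfg\Vert^2+\tfrac14 h$, the bound $\curV^2\leq\tfrac1{32}\hat g^4+\tfrac18 h^2$ combined with the moment estimates of Proposition~\ref{proposition:1} to get $\curV\in\HH$, substitution of the chaos expansions of $g_j$ and $h$ to obtain Eq.~\eqref{eq4.13}, projection onto $\psi_{\bfalpha'}$ for Eq.~\eqref{eq4.14}, and the zero-index/orthonormality specialization for Eq.~\eqref{eq4.15}. Your explicit justification of squaring the $L^2$-convergent series via the fourth-moment bound, and the observation that the inner sum over $\bfbeta$ is finite for fixed $\bfalpha,\bfalpha'$, are welcome refinements of a step the paper passes over tersely, but they do not change the argument.
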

%
\begin{proof} (Proposition~\ref{proposition:3}).
Eq.~\eqref{eq3.29} and notations of Proposition~\ref{proposition:1} yield
$\curV(\bfy) = \frac{1}{8} \, \Vert\bfg(\bfy)\Vert^2 + \frac{1}{4} \, h(\bfy)$.
 Let us prove that $\curV$ is in $\HH$, that is to say, $\curV(\bfY)$ is a second-order random variable. Since $\curV$ is continuous on $\RR^\nu$, $\curV(\bfY)$ is a real-valued random variable.
We have  $\curV(\bfy)^2 \leq  \frac{1}{32} \, \Vert\bfg(\bfy)\Vert^4 + \frac{1}{8} \, h(\bfy)^2$, which shows that
$E\{\curV(\bfY)^2\} \leq  \frac{1}{32} \, E\{\Vert\bfg(\bfY)\Vert^4 \}+ \frac{1}{8} \, E\{h(\bfY)^2\}$. Using Eq.~\eqref{eq4.6} yields
$E\{\curV(\bfY)^2\} = \Vert \curV \Vert_\HH^2 < +\infty$, which proves that $\curV(\bfY)$ is of second order and that $\curV$ is in $\HH$. Consequently, we obtain the three parts of Eq.~\eqref{eq4.12}.
Rewriting $\curV(\bfy)$ as $\curV(\bfy) = \frac{1}{8} \, \sum_{j=1}^\nu \bfg_j(\bfy)^2 + \frac{1}{4} \, h(\bfy)$ and using
Proposition~\ref{proposition:2}-(ii) and (iii) yields Eq.~\eqref{eq4.13}.
Combining the first part of Eq.~\eqref{eq4.12} with Eq.~\eqref{eq4.13}, using Definition~\ref{definition:2}-(ii), and projecting onto $\psi_{\bfalpha'}$ yields   Eq.~\eqref{eq4.14}.
Finally, Eq.~\eqref{eq4.15} is directly deduced from Eq.~\eqref{eq4.13} by replacing $\bfy$ by $\bfY$, taking the mathematical expectation, and using Definition~\ref{definition:2}-(ii).

\end{proof}
%
\begin{remark}[Choice of the polynomial chaos expansion for $\curV$]  \label{remark:1}
The representation defined by Eq.\eqref{eq4.12}, which requires the computation of the coefficients defined by Eq.\eqref{eq4.14}, is given for informational purposes, as this approach is not suitable for the following two reasons. First, and most importantly, it results in the loss of the underlying algebraic structure (presence of a positive term) that we discussed in Section~\ref{Section3.7}-(iii), which is essential to preserve when truncating the representation to a finite number of terms. Second, it necessitates additional computation of the coefficients $\curV_{\bfalpha'}$ using Eq.\eqref{eq4.14}, a computation that is not required with the representation defined by Eq.\eqref{eq4.13}. In conclusion, we will use Eq.~\eqref{eq4.13} going forward, as it is more efficient when truncated to a finite number of terms.
\end{remark}
%
%
\subsection{Truncated polynomial chaos expansion of potential $\curV$ and convergence criterion}
\label{Section4.3}
Following the discussion in Remark~\ref{remark:1}, we consider the representation defined by Eq.~\eqref{eq4.13}, which arises from the substitution  in Eq.~\eqref{eq3.29} of the representation $g_j = \sum_{\bfalpha\in\NN^\nu}  g_{j,\,\bfalpha} \, \psi_\bfalpha$ and
$h = \sum_{\bfalpha\in\NN^\nu}  h_\bfalpha \, \psi_\bfalpha$ established in Proposition~\ref{proposition:2}-(ii) and (iii). Here, $g_{j,\,\bfalpha}$ and $h_\bfalpha$, which  are defined by Eqs.~\eqref{eq4.10} and \eqref{eq4.11}, are derived from the coefficients $f_\bfalpha$ of the polynomial chaos expansion $f = \sum_{\bfalpha\in\NN^\nu}  f_\bfalpha \, \psi_\bfalpha$ established in Proposition~\ref{proposition:2}-(i).
Consequently,  we introduce the truncated polynomial chaos expansion of $f$ and we then deduce the  truncated polynomial chaos expansion of $g_j$ and~$h$.
%
\begin{definition}[Truncated polynomial chaos expansions of $f$, $g_j$, and $h$] \label{definition:3}
Let $\mu$ be a finite integer such that $\mu \geq 2$.
Let $\kappa$ be an integer equal to $0$, $1$, or $2$. We define the subset $A_{\kappa,\mu}$ of $\NN^\nu$ such that
\begin{equation} \label{eq4.16}
A_{\kappa,\,\mu}  = \{ \bfalpha\in\NN^\nu \,\, \vert \,\, 0 \, \leq \, \vert\bfalpha\vert\,  \leq \, \,\mu - \kappa \} \, .
\end{equation}
The number of multi-indices in the set $A_{\kappa,\,\mu}$  is
$\vert A_{\kappa,\,\mu}\vert  = \frac{(\nu+\mu-\kappa)!}{\nu! \,(\mu-\kappa)!}$.
The truncated polynomial chaos expansions $g_j^{A_{1,\,\mu}}$ of  $g_j$ for $j=1,\ldots, \nu$, and $h^{A_{2,\,\mu}}$ of $h$, based on the truncated polynomial chaos expansion $f^{A_{0,\,\mu}}$ of $f$ limited to the maximum degree $\,\mu$, along with their norms in $\HH$, are defined as follows,
\begin{align}
f^{A_{0,\,\mu}} &= \sum_{\bfalpha\in A_{0,\,\mu}} f_\bfalpha \, \psi_\bfalpha \quad , \quad \Vert f^{A_{0,\,\mu}} \Vert_\HH^2 = \sum_{\bfalpha\in A_{0,\,\mu}} f_\bfalpha^2  \label{eq4.17} \\
g_j^{A_{1,\,\mu}} &= \sum_{\bfalpha\in A_{1,\,\mu}} g_{j,\,\bfalpha} \, \psi_\bfalpha \quad , \quad \Vert g_j^{A_{1,\,\mu}} \Vert_\HH^2 = \sum_{\bfalpha\in A_{1,\,\mu}} g_{j,\,\bfalpha}^2 \quad , \quad j=1,\ldots , \nu \label{eq4.18} \\
h^{A_{2,\,\mu}} &= \sum_{\bfalpha\in A_{2,\,\mu}} h_\bfalpha \, \psi_\bfalpha \quad , \quad \Vert h^{A_{2,\,\mu}} \Vert_\HH^2 = \sum_{\bfalpha\in A_{2,\,\mu}} h_\bfalpha^2  \label{eq4.19}
\end{align}
The truncated polynomial chaos expansion $\curV^{\,\mu}$ of $\curV$, based on the truncated polynomial chaos expansion $f^{A_{0,\,\mu}}$ of $f$ limited to the maximum degree $\,\mu$ is defined by,
\begin{equation} \label{eq4.20}
\curV^{\,\mu} (\bfy) = \frac{1}{8} \sum_{\bfalpha\,\in \,A_{1,\,\mu}} \,\,\sum_{\bfbeta\,\in \,A_{1,\,\mu}} \ggeclair_{\bfalpha\bfbeta} \, \psi_\bfalpha(\bfy)\, \psi_\bfbeta(\bfy) + \frac{1}{4}\sum_{\bfalpha\in \,A_{2,\,\mu}} h_\bfalpha \, \psi_\bfalpha(\bfy) \, ,
\end{equation}
\end{definition}
%
\begin{proposition}[Convergence criterion of the sequences in $\mu$] \label{proposition:4}
(i) The sequences of approximations $\{f^{A_{0,\,\mu}}\}_\mu$, $\{g_j^{A_{1,\,\mu}}\}_\mu$ for $j=1,\ldots,\nu$, and $\{h^{A_{2,\,\mu}}\}_\mu$ strongly converge to $f$, $g_j$ for $j=1,\ldots,\nu$, and $h$, respectively, as $\mu$ approaches $+\infty$, and we have
\begin{equation} \label{eq4.21}
\lim_{\mu\rightarrow +\infty} \Vert f^{A_{0,\,\mu}} \Vert_\HH \, = \Vert f\Vert_\HH \quad , \quad
\lim_{\mu\rightarrow +\infty} \Vert g_j^{A_{1,\,\mu}} \Vert_\HH \, = \Vert g_j\Vert_\HH \quad , \quad
\lim_{\mu\rightarrow +\infty} \Vert h^{A_{2,\,\mu}} \Vert_\HH \, = \Vert h\Vert_\HH \, .
\end{equation}
(ii) The sequence of approximation $\{\curV^{\,\mu}\}_\mu$ converges to $\curV$ as  $\mu$ approaches $+\infty$, and we have
\begin{equation} \label{eq4.22}
\lim_{\mu\rightarrow +\infty} \Vert\curV^{\,\mu} \Vert_\HH \, = \Vert \curV\Vert_\HH \, .
\end{equation}
\end{proposition}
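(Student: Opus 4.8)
The plan is to handle the two parts separately: part (i) is the standard completeness statement for a Hilbert basis, while part (ii) reduces to part (i) together with the finite fourth moments recorded in Proposition~\ref{proposition:1}. For part (i) I would invoke directly that $\{\psi_\bfalpha,\bfalpha\in\NN^\nu\}$ is a Hilbert basis of $\HH$ (Definition~\ref{definition:2}) and that Proposition~\ref{proposition:2} guarantees $\sum_{\bfalpha\in\NN^\nu}f_\bfalpha^2=\Vert f\Vert_\HH^2<+\infty$, and likewise for $g_j$ and $h$. For each fixed $\kappa\in\{0,1,2\}$ the index sets $A_{\kappa,\mu}$ of Eq.~\eqref{eq4.16} increase to $\NN^\nu$ as $\mu\to+\infty$, so by Parseval the truncation error is the tail of a convergent nonnegative series, e.g. $\Vert f-f^{A_{0,\mu}}\Vert_\HH^2=\sum_{|\bfalpha|>\mu}f_\bfalpha^2\to 0$, which gives strong convergence; the norm identities Eq.~\eqref{eq4.21} are then just the monotone convergence of $\sum_{\bfalpha\in A_{\kappa,\mu}}(\cdot)^2$ to the full sum. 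The same three lines apply verbatim to $g_j^{A_{1,\mu}}$ and $h^{A_{2,\mu}}$.

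For part (ii), I would first record from Eq.~\eqref{eq3.29}, Eq.~\eqref{eq4.13} and the definition Eq.~\eqref{eq4.20} that $\curV=\tfrac18\sum_{j=1}^\nu g_j^2+\tfrac14 h$ and $\curV^{\,\mu}=\tfrac18\sum_{j=1}^\nu (g_j^{A_{1,\mu}})^2+\tfrac14 h^{A_{2,\mu}}$, so that the triangle inequality gives
\[
\Vert\curV-\curV^{\,\mu}\Vert_\HH\le \frac18\sum_{j=1}^\nu\Vert g_j^2-(g_j^{A_{1,\mu}})^2\Vert_\HH+\frac14\Vert h-h^{A_{2,\mu}}\Vert_\HH .
\]
The last term tends to $0$ by part (i). For each quadratic term I would factor $g_j^2-(g_j^{A_{1,\mu}})^2=(g_j-g_j^{A_{1,\mu}})(g_j+g_j^{A_{1,\mu}})$ and apply the Cauchy--Schwarz inequality for the measure $p_\bfY(\bfy)\,d\bfy$, obtaining
\[
\Vert g_j^2-(g_j^{A_{1,\mu}})^2\Vert_\HH\le \Vert g_j-g_j^{A_{1,\mu}}\Vert_{L^4}\,\Vert g_j+g_j^{A_{1,\mu}}\Vert_{L^4},
\]
where $\Vert u\Vert_{L^4}^4=E\{u(\bfY)^4\}$. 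Once this quadratic term is shown to vanish, strong convergence $\curV^{\,\mu}\to\curV$ in $\HH$ follows, and Eq.~\eqref{eq4.22} is then immediate from the reverse triangle inequality $|\,\Vert\curV^{\,\mu}\Vert_\HH-\Vert\curV\Vert_\HH\,|\le\Vert\curV-\curV^{\,\mu}\Vert_\HH$.

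The hard part will be controlling the two $L^4$ factors above. Part (i) only delivers $L^2(p_\bfY)$-convergence of $g_j^{A_{1,\mu}}$ to $g_j$, and convergence of the squares in $\HH$ does not follow from $L^2$-convergence alone. This is exactly where the fourth-moment bound $E\{\hat g(\bfY)^4\}<+\infty$ of Eq.~\eqref{eq4.6} is meant to enter: it places each $g_j$ in $L^4(p_\bfY)$, and the plan is to use it to bound $\Vert g_j+g_j^{A_{1,\mu}}\Vert_{L^4}$ uniformly in $\mu$ and to promote the convergence $g_j^{A_{1,\mu}}\to g_j$ from $L^2$ to a form strong enough to drive $\Vert g_j-g_j^{A_{1,\mu}}\Vert_{L^4}\to 0$ (equivalently, to establish uniform integrability of $\{(g_j^{A_{1,\mu}})^2\}_\mu$ in $\HH$, whence weak-$\HH$ convergence together with norm convergence yields strong $\HH$ convergence of the squares).

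I expect this last upgrade to be the genuine technical obstacle: passing from the $L^2$-convergence of a truncated Hermite expansion to convergence of the associated squares, since plain partial sums of Hermite expansions are not automatically uniformly bounded in $L^4$. The favourable feature here is that $f=\log p_\bfH$ is a very specific, smooth function with at most quadratic growth of $\nabla f$ and $\nabla^2 f$ (Proposition~\ref{proposition:1}), so the finite fourth moments are available and the intended route is to exploit this extra integrability to close the quadratic estimate; the reduction in the preceding paragraph and all of part (i) are otherwise routine.
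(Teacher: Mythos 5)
Your part (i) is essentially the paper's own argument: both of you reduce the statement to the completeness of the Hermite basis together with Proposition~\ref{proposition:2}, and then use the elementary fact that strong convergence in a Hilbert space implies convergence of the norms. Nothing to flag there.

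For part (ii) you take a genuinely different and more ambitious route than the paper, and it contains a gap that you identify honestly but do not close. You aim to prove \emph{strong} convergence $\curV^{\,\mu}\to\curV$ in $\HH$ via the difference-of-squares factorization and Cauchy--Schwarz, which requires (a) a bound on $\Vert g_j+g_j^{A_{1,\,\mu}}\Vert_{L^4}$ that is uniform in $\mu$, and (b) $\Vert g_j-g_j^{A_{1,\,\mu}}\Vert_{L^4}\to 0$. Neither follows from the ingredients you invoke: part (i) delivers only $L^2(p_\bfY\,d\bfy)$ convergence, and the fourth-moment bound $E\{\hat g(\bfY)^4\}<+\infty$ of Eq.~\eqref{eq4.6} controls $g_j$ itself, not its truncated Hermite expansions. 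The orthogonal projections onto polynomials of total degree at most $\mu$ are not uniformly bounded on $L^4(p_\bfY\,d\bfy)$; hypercontractivity equilibrates $L^p$ norms only on chaoses of \emph{fixed} degree, with constants growing geometrically in the degree, so the uniform $L^4$ bound you need cannot be extracted by that route. As written, your part (ii) is therefore a program rather than a proof. For comparison, the paper does not attempt strong convergence of $\curV^{\,\mu}$ at all: its proof of Eq.~\eqref{eq4.22} only addresses the norm identity, reducing it to the existence of a finite constant $c$ with $E\{\Vert\bfg(\bfY)\Vert^4\}\leq c\,(E\{\Vert\bfg(\bfY)\Vert^2\})^2$, obtained from the linear-growth bound $\Vert\bfg(\bfy)\Vert\leq a+b\,\Vert\bfy\Vert$ of Proposition~\ref{proposition:1} --- a much weaker step that sidesteps (but also does not resolve) the $L^4$ issue you correctly put your finger on. Your decomposition is the right skeleton for the stronger statement, but the uniform $L^4$ control of the partial sums is the missing ingredient, and it is not supplied by the hypotheses you cite.
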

%
\begin{proof} (Proposition~\ref{proposition:4}).
(i) The strong convergence in $\HH$, that is, convergence with respect to the norm in $\HH$, is simply a restatement of  Proposition~\ref{proposition:2}. It is well known that if a sequence $x_\mu$ in a Hilbert space $\HH$ strongly converges to
and element $x$ of $\HH$, then the norm $\Vert x_\mu\Vert_\HH$ converges to $\Vert x\Vert_\HH$.

\noindent (ii) Eq.~\eqref{eq4.12} also shows that the sum  $\sum_{\bfalpha'} \curV_{\bfalpha'} \psi_{\bfalpha'}$ strongly converges to $\curV$ in $\HH$. However, we previously explained, we will not use the representation $\curV = \sum_{\bfalpha'\in\NN^\nu} \curV_{\bfalpha'}\, \psi_{\bfalpha'}$ for $\curV$, but instead of the representation defined by Eq.~\eqref{eq4.13}. Therefore, we need to  prove that if Eq.~\eqref{eq4.21} holds, then Eq.~\eqref{eq4.22} follows. In the proof of Proposition~\ref{proposition:3}, we showed that
$\curV(\bfy)^2 \leq  \frac{1}{32} \, \Vert\bfg(\bfy)\Vert^4 + \frac{1}{8} \, h(\bfy)^2$, which allows us to write
$E\{\curV(\bfY)^2\} \leq  \frac{1}{32} \, E\{\Vert\bfg(\bfY)\Vert^4\} + \frac{1}{8} \, E\{h(\bfY)^2\}$. This expression shows that, to complete the proof of Eq.~\eqref{eq4.22}, we need only prove that there exits a  finite positive constant $c$ such that
$E\{\Vert\bfg(\bfY)\Vert^4\} \leq c\, (E\{\Vert\bfg(\bfY)\Vert^2\} )^2$. In the proof of Proposition~\ref{proposition:1}, we obtained
the inequality $\Vert \bfg(\bfy) \Vert \, \leq a + b\, \Vert\bfy\Vert$, where $a$ and $b$ are finite positive constants. We then have
$(E\{\Vert\bfg(\bfY)\Vert^2\})^2  \leq 4 \, a^4 + 4 \, b^4 (E\{\Vert\bfY\Vert^2\})^2$ and
$E\{\Vert\bfg(\bfY)\Vert^4\}  \leq 8\, a^4 + 8\, b^4 E\{\Vert\bfY\Vert^4\}$.
Since $E\{\Vert\bfY\Vert^2\} = m_2 < +\infty$ and $E\{\Vert\bfY\Vert^4\} = m_4 < +\infty$, there exists a finite positive constant
$c$ such that $E\{\Vert\bfg(\bfY)\Vert^4\} \leq c\, (E\{\Vert\bfg(\bfY)\Vert^2\} )^2$, which concludes the proof.
\end{proof}
%
\begin{remark}[Practical criterion for checking the convergence]  \label{remark:2}
The convergence criterion for each considered sequence cannot directly be based on the strong convergence definition or on the  use of the  Cauchy sequence in Hilbert space $\HH$, which would be numerically expensive. This would require the numerical evaluation of a large number of mathematical expectations for a random function on $\RR^\nu$. Therefore, we propose using a criterion based on Eq.~\eqref{eq4.21} with Eqs.~\eqref{eq4.17} to \eqref{eq4.19},  without evaluating the norm of the limit. The criteria we propose involve the study of the functions,
\begin{equation} \label{eq4.23}
\mu \mapsto \Vert f^{A_{0,\,\mu}}\Vert_\HH  =  (\! \sum\limits_{\bfalpha\in A_{0,\,\mu}} \!\! f_\bfalpha^2  )^{1/2} \,\,\, , \, \,\,
\mu \mapsto \Vert g_j^{A_{1,\,\mu}}\Vert_\HH =  (\!\sum\limits_{\bfalpha\in A_{1,\,\mu}} \!\!  g_{j,\,\bfalpha}^2  )^{1/2} \,\, , \,   j=1,\ldots ,\nu \,\,\, , \, \,\,
\mu\mapsto \Vert h^{A_{2,\,\mu}}\Vert_\HH =  (\! \sum\limits_{\bfalpha\in A_{2,\,\mu}}  \!\! h_\bfalpha^2  )^{1/2} \, .
\end{equation}
\end{remark}
%
\subsection{Algebraic formulas for computation and convergence analyses}
\label{Section4.4}
In this Section, we propose a numerical method for the effective calculation of the coefficients $f_\bfalpha$ of the polynomial chaos expansion. We summarize the calculation process for $g_{j,\bfalpha}$, $h_\bfalpha$, and $\ggeclair_{\bfalpha\bfbeta}$, and address the questions related to  convergence control.\\

\noindent {\textit{(i) Computation of an approximation of $f_\bfalpha$ and convergence criterion for the approximation}.
Let $A^{0,\,\mu}$ be the set of multi-indices $\bfalpha=(\alpha_1,\ldots,\alpha_\nu)\in\NN^\nu$, where the maximum degree $\mu$ is fixed. For a given $\bfalpha$ in $A^{0,\,\mu}$, Proposition~\ref{proposition:2}-(i) defines the coefficient $f_\bfalpha$, which we rewrite in terms of a mathematical expectation as $f_\bfalpha = \langle f, \psi_\bfalpha\rangle_\HH = E\{f(\bfY) \, \psi_\bfalpha(\bfY)\}$. This mathematical expectation is classically estimated using $N$ independent realizations $\{\bfy^\ell \in\RR^\nu, \ell =1,\ldots , N\}$ of the $\RR^\nu$-valued normalized Gaussian random variable $\bfY$ (see Definition~\ref{definition:1}), which yields the following approximation $f_\bfalpha^N$ of $f_\bfalpha$,
\begin{equation} \label{eq4.24}
f_\bfalpha^N = \frac{1}{N} \sum_{\ell=1}^N f(\bfy^\ell) \, \psi_\bfalpha(\bfy^\ell) \quad , \quad \bfalpha\in A^{0,\,\mu}\, ,
\end{equation}
in which $f(\bfy^\ell) = \log(p_\bfH(\bfy^\ell))$, computed using Eq.~\eqref{eq2.3}.
\noindent Let $\bfY^1,\ldots , \bfY^N$ be $N$ independent copies of the random vector $\bfY$. Let $F_\bfalpha^N$ be the real-valued random variable defined on probability space $(\Theta,\curT,\curP)$ such that $F_\bfalpha^N = \frac{1}{N} \sum_{\ell=1}^N f(\bfY^\ell) \, \psi_\bfalpha(\bfY^\ell)$. Then (see, for instance, \cite{Serfling1980}), the sequence of real-valued random variables $\{F_\bfalpha^N\}_N$ on $(\Theta,\curT,\curP)$ is convergent in probability to $f_\bfalpha$,
\begin{equation} \label{eq4.25}
\forall \varepsilon > 0 \quad , \quad \lim_{N\rightarrow +\infty} \curP\{\,\vert F_\bfalpha^N - f_\bfalpha \vert \,\,  \geq \, \varepsilon \} = 0 \, .
\end{equation}
There is also the almost sure convergence, thanks to the strong law of large numbers
\begin{equation} \label{eq4.26}
\curP\{ \lim_{N\rightarrow +\infty}  F_\bfalpha^N = f_\bfalpha \} = 1 \, .
\end{equation}
It should be noted that the speed of convergence is proportional to $1/\sqrt{N^2} = 1/N$ and is independent of dimension $\nu$. The quantification of the approximation error could traditionally be estimated using the central limit theorem, which involves the variance of the estimator (see, for instance, \cite{Serfling1980,Givens2013,Soize2017b}).

\noindent Concerning the convergence with respect to $\mu$, it is understood that the value of $N$, determined for a convergence tolerance based on the criterion defined by Eq.\eqref{eq4.25}, is valid for all multi-indices $\bfalpha$ in $A^{0,,\mu}$, and therefore $N$ does not depend on $\bfalpha$. However, since $A^{0,,\mu}$ depends on $\mu$, $N$ will, \textit{a priori}, depend on $\mu$. Under these conditions, the convergence analysis with respect to $\mu$ is carried out using the first equation in Eq.\eqref{eq4.23}, for which the function $\mu \mapsto \Vert f^{A^{0,,\mu}}\Vert_\HH$ will depend on the value of $N$ that has been determined. This means that the value of $N$ is updated for each value of $\mu$ in the convergence analysis with respect to $\mu$.\\

\noindent {\textit{(ii) Computation of the corresponding approximations of $g_{j,\bfalpha}$, $\ggeclair_{\bfalpha\bfbeta}$, $h_\bfalpha$, and $\curV$}. It is assumed that $\mu$ and the associated value of $N$ are fixed to the values identified in (i) above, for which all the approximations $f_\bfalpha^N$ of the coefficient $f_\bfalpha$ are computed for all $\bfalpha$ in $A^{0,\,\mu}$. The corresponding approximations of the coefficients $g_{j,\,\bfalpha}$ for $j=1,\ldots, \nu$, $\ggeclair_{\bfalpha\bfbeta}$, and $h_{\bfalpha}$ defined by Eq.~\eqref{eq4.10}, in Proposition~\ref{proposition:3}, and by \eqref{eq4.11}, respectively, are thus computed by the formulas:
\begin{equation} \label{eq4.27}
g_{j,\,\bfalpha}^N = \sqrt{\alpha_j + 1}\, f_{\alpha_1,\ldots,\alpha_j+1,\ldots,\alpha_\nu}^N
\quad , \quad \forall \bfalpha= (\alpha_1,\ldots,\alpha_j,\ldots,\alpha_\nu) \in A^{1,\,\mu} \quad , \quad j=1,\ldots , \nu\, ,
\end{equation}
where $A^{1,\,\mu}$ is defined by Eq.~\eqref{eq4.16} for $\kappa=1$,
\begin{equation} \label{eq4.28}
\ggeclair_{\bfalpha\bfbeta}^N = \sum_{j=1}^\nu g_{j,\,\bfalpha}^N \, g_{j,\,\bfbeta}^N
\quad , \quad \forall \bfalpha \in A^{1,\,\mu}\quad , \quad \forall \bfbeta \in A^{1,\,\mu} \, ,
\end{equation}
and
\begin{equation} \label{eq4.29}
h_{\bfalpha}^N = \sum_{j=1}^\nu \sqrt{(\alpha_j + 1)(\alpha_j + 2)}\, f_{\alpha_1,\ldots,\alpha_j+2,\ldots,\alpha_\nu}^N
\quad , \quad \forall \bfalpha= (\alpha_1,\ldots,\alpha_j,\ldots,\alpha_\nu) \in A^{2,\,\mu}\, .
\end{equation}
where $A^{2,\,\mu}$ is defined by Eq.~\eqref{eq4.16} for $\kappa=2$.
The corresponding approximation of $\curV^{\,\mu}$ defined by Eq.~\eqref{eq4.20} are thus computed by
\begin{equation} \label{eq4.30}
\curV^{\,\mu,N}(\bfy) = \frac{1}{8} \sum_{\bfalpha\,\in \,A_{1,\,\mu}} \,\,\sum_{\bfbeta\,\in \,A_{1,\,\mu}} \ggeclair_{\bfalpha\bfbeta}^N \, \psi_\bfalpha(\bfy)\, \psi_\bfbeta(\bfy) + \frac{1}{4}\sum_{\bfalpha\in \,A_{2,\,\mu}} h_\bfalpha^N \, \psi_\bfalpha(\bfy) \quad , \quad \forall \, \bfy \in \RR^\nu\, .
\end{equation}
%
\subsection{Adaptation of the truncated polynomial chaos expansion of potential $\curV$ to a quantum computing formulation}
\label{Section4.5}
According to the explanations given in Sections~\ref{Section1} and \ref{Section3.7}, we need to rewrite the right-hand side of Eq.~\eqref{eq4.29} as a sum of monomials, which is straightforward.

Let $\psi_\bfalpha$ be the normalized Hermite polynomial of multi-index $\bfalpha$ defined by Eq.~\eqref{eq4.8}, where $\bfalpha$ belongs to $A_{1,\mu}$ or $A_{2,\mu}$ as defined by Eq.~\eqref{eq4.16}. For $j$ fixed in $\{1,\ldots, \nu\}$, let $\alpha_j$ be the component $j$ of $\bfalpha = (\alpha_1,\ldots , \alpha_\nu)$, and the Hermite polynomial $H_{\alpha_j}(y_j)$ for the component $y_j$ of $\bfy=(y_1,\ldots ,y_\nu)$ is written as
$H_{\alpha_j}(y_j) = \sum_{m_j=0}^{\alpha_j} a_{m_j,\alpha_j} \, y^{m_j}$.
The coefficients $a_{m_j,\alpha_j}$ are known. For instance, if $\alpha_j=2$, then $H_{\alpha_j}(y_j) = y_j^2 -1$ , and consequently,
$a_{0,2} = -1$, $a_{1,2}= 0$, and $a_{2,2}=1$. However, the maximum value of $\bfalpha_j$ is $\mu-1$ if $\bfalpha \in A_{1,\mu}$ and $\mu-2$ if $\bfalpha \in A_{2,\mu}$. To avoid dependence on the upper bound $\alpha_j$ in the sum that defines $H_{\alpha_j}(y_j)$, which allows us to simplify the algebraic representation, we complete, if necessary, by adding zero coefficients. For $\kappa=1$ or $2$, we therefore have,
\begin{equation} \label{eq4.31}
\forall\bfalpha \in A_{\kappa,\mu}\quad , \quad H_{\alpha_j}^{(\kappa)}(y_j) = \sum_{m_j=0}^{\mu-\kappa} a^{(\kappa)}_{m_j,\alpha_j} \, y^{m_j}_j \quad , \quad
\end{equation}
with $a^{(\kappa)}_{m_j,\alpha_j} = a_{m_j,\alpha_j}$ if $m_j\leq \alpha_j$ and, if $\alpha_j+1 > \mu-\kappa$, then
$a^{(\kappa)}_{m_j,\alpha_j} = 0$ for all $m_j\in \{\alpha_j+1,\ldots ,\mu-\kappa\}$.
Let $\bfm=(m_1,\ldots,m_\nu)$ and $\bfm'=(m_1',\ldots,m_\nu')$, the vectors of integers belonging to the set  $\curM_\kappa = \{0,1,\ldots,\mu-\kappa\}^\nu $. Using Eqs.~\eqref{eq4.8} and \eqref{eq4.31}, Eq.~\eqref{eq4.30} can be rewritten as
\begin{equation} \label{eq4.32}
\curV^{\,\mu,N}(\bfy) = \frac{1}{8} \sum_{\bfm\,\in \,\curM_{1}} \,\,\sum_{\bfm'\,\in \,\curM_{1}} \hat\ggeclair_{\bfm\bfm'}^N \, \bfy^\bfm\, \bfy^{\bfm'} + \frac{1}{4}\sum_{\bfm\in \,\curM_{2}} \hat h_\bfm^N \, \bfy^\bfm \quad , \quad \forall \, \bfy \in \RR^\nu\, .
\end{equation}
In Eq.~\eqref{eq4.31}, $\bfy^\bfm = y_1^{m_1}\times \ldots \times y_\nu^{m_\nu}$, and for all $\bfm$ and $\bfm'$ in $\curM_1$, we have
\begin{equation} \label{eq4.33}
 \hat\ggeclair_{\bfm \bfm'}^N = \sum_{\bfalpha\, \in \, A_{1,\mu}}\,\,  \sum_{\bfbeta\, \in\,  A_{1,\mu}} \ggeclair_{\bfalpha\bfbeta}^N
 \, \frac{ a_{\bfm,\bfalpha}^{(1)} } { \sqrt{\bfalpha !} } \, \frac{ a_{\bfm',\bfalpha}^{(1)} } {\sqrt{\bfbeta !}}   \, ,
\end{equation}
where $\bfalpha ! = \alpha_1 ! \times \ldots \times \alpha_\nu !$ and
$a_{\bfm,\bfalpha}^{(1)}  = a_{m_1,\alpha_1}^{(1)} \times \ldots \times a_{m_\nu,\alpha_\nu}^{(\nu)}$.
For all $\bfm$ in $\curM_2$, we have
\begin{equation} \label{eq4.34}
\hat h_\bfm^N = \sum_{\bfalpha\, \in \, A_{2,\mu}}\, h_\bfalpha^N \, \frac{a_{\bfm,\bfalpha}^{(2)} }{\sqrt{\bfalpha !}}\, .
\end{equation}
%
%
%
\section{Numerical validation of the software code for polynomial-chaos-expansion coefficient computation}
\label{Section5}
%
A software code has been developed based on  the algebraic formulas for computation and convergence analyses, which we have presented and summarized in Section~\ref{Section4.4}.
However, such a validation is not easy if the probability measure $p_\bfH(\bfy), d\bfy$ is arbitrary, because in that case, we do not have a reference. Since we are here to validate a software code, we can choose any probability measure, particularly one for which we have a reference. Therefore, we will take $\bfH$ to be a $\RR^\nu$-valued normalized Gaussian random variable. The choice of a normalized random variable is justified by the fact that Eq.~\eqref{eq2.2} must be satisfied. We will begin this section by providing the reference algebraic formulas for this Gaussian case, and then we will continue by presenting the numerical results. However, we must first specify the parameterization of the numerical analysis that we are going to present.

(i) Since $\bfH$ is chosen as Gaussian for this validation, we can skip the convergence analysis with respect to $\mu$ for the following reason. For the considered validation case, the reference probability density function is $p_\bfH^\exactp(\bfy) =  (2\pi)^{-\nu/2} \exp(-\Vert \bfy \Vert^2 /2)$ that is independent of $n_d$, while its GKDE approximation is given by Eq.~\eqref{eq2.3} and depends on $n_d$. This case defined by $p_\bfH^\exactp$ will be named the Gaussian reference case.
We then have
\begin{equation}  \label{eq5.1}
f (\bfy) = \log(p_\bfH^\exactp(\bfy)) = -\frac{\nu}{2} \log(2\pi) - \frac{1}{2} \,\Vert \bfy \Vert^2 \, ,
\end{equation}
which shows that $\mu = 2$ could be sufficient. Under this condition, we have fixed the value $\mu = 4$.

(ii) The value of $\mu$ being fixed, in order to minimize the number of figures presented for the convergence analyses with respect to $N$  we will limit the calculated quantities to the functions defined by Eq.~\eqref{eq4.23}. Since $\mu$ is fixed and since these functions depend on $N$, we adapt the definition and notation of these functions as follows,
\begin{align}
N \mapsto \curE_f (N) \,  &=  \{\Sigma_{\bfalpha\,\in\, A_{0,\,\mu}}  (f_\bfalpha^N)^2  \}^{1/2}                      \, ,\label{eq5.2}\\
N \mapsto \curE_\bfg (N)\, &= \{\Sigma_{\bfalpha\,\in\, A_{1,\,\mu}} \Sigma_{j=1}^\nu (g_{j,\,\bfalpha}^N)^2\}^{1/2} \, , \label{eq5.3} \\
N \mapsto \curE_h (N) \, &=  \{\Sigma_{\bfalpha\,\in\, A_{2,\,\mu}}  (h_\bfalpha^N)^2  \}^{1/2}                       \, . \label{eq5.4}
\end{align}
This choice of functions defined by Eqs.~\eqref{eq5.2} to \eqref{eq5.4} allows us to control the convergence of all the coefficients of the polynomial chaos expansions. For this Gaussian reference case, we will also use the GKDE representation. In that case, these functions also depend on $n_d$, and we will consider these functions as a family of functions indexed by the parameter $n_d$.
%
\subsection{Algebraic formulas for the Gaussian reference case}
\label{Section5.1}
Using Eq.~\eqref{eq4.30}, it can be seen that $\bfg(\bfy) = \nabla f(\bfy) = -\bfy$ and
$h(\bfy) = \nabla^2 f(\bfy) = -\nu$. Using the classical formulas: $E\{Y_j\} = 0$, $E\{Y_j^2\} = 1$, and $E\{Y_j^4\} = 3$, it can easily proven the following results,
\begin{align}
\curE^\exactp_f & = \Vert\,  f \, \Vert_\HH = \{\, \Sigma_{\bfalpha\,\in\, \NN^\nu}\,\,  f_\bfalpha^2 \, \}^{1/2} =
  \{ {\nu}/2 +  ( 1\! + \! 2 \log(2\pi)\! +\! (\log(2\pi))^2 \,)\, \nu^2/4\,   \}^{1/2}                              \, , \label{eq5.5} \\
\curE^\exactp_\bfg & = \{\, \Sigma_{j=1}^\nu \, \Vert g_j \Vert_\HH^2 \,\}^{1/2}
               = \{\, \Sigma_{j=1}^\nu \Sigma_{\bfalpha\,\in\, \NN^\nu} \,\, g_{j,\bfalpha}^2 \, \}^{1/2} = \sqrt{\nu}  \, , \label{eq5.6} \\
\curE^\exactp_h & = \Vert \, h \, \Vert_\HH = \{ \,\Sigma_{\bfalpha\,\in\, \NN^\nu} \,\, h_\bfalpha^2 \, \}^{1/2} = \nu \, . \label{eq5.7}
\end{align}

\subsection{Numerical validation of the software code}
\label{Section5.1}
The convergence analysis with respect to $N$ is done for two  values of $\nu$. Fig.~\ref{fig:figure1} concerns the results for $\nu=1$, while Fig.~\ref{fig:figure2} concerns the results for $\nu=10$.\\

\noindent {\textit{ (i) Gaussian reference case}. For the Gaussian reference case, for which the probability density function of $\bfH$ is $p_\bfH^\exactp$ defined at the beginning of Section~\ref{Section5}, the use of formulas defined by Eqs.~\eqref{eq5.5} to \eqref{eq5.7} yields $(\curE^\exactp_f,\curE^\exactp_\bfg,\curE^\exactp_h) = (1.585, 1.000, 1.000)$ for $\nu=1$ and $(14.364,3.162,10.000)$ for $\nu=10$.
For analyzing the convergence of the  polynomial chaos expansions of this Gaussian reference case, the functions $N\mapsto \curE_f (N), \curE_\bfg (N)$, and $\curE_f (N)$ are computed using Eqs.~\eqref{eq5.2}, \eqref{eq5.3}, and \eqref{eq5.4}, respectively. The graphs of these functions are displayed in Figs.~\ref{fig:figure1a}, \ref{fig:figure1b}, and \ref{fig:figure1c} for $\nu=1$, and in
Figs.~\ref{fig:figure1d}, \ref{fig:figure1e}, and \ref{fig:figure1f} for $\nu=10$.
For $N={10}^7$, we have $(\curE_f ({10}^7), \curE_\bfg ({10}^7), \curE_f ({10}^7)) = (1.585, 0.999, 0.999)$ for $\nu = 1$ and
$(14.364,3.171,9.971)$ for $\nu=10$. These values are very close to the reference values that show a complete convergence and a full validation of the software code devoted to the computation of the polynomial-chaos-expansion coefficients based on the theory presented in Section~\ref{Section4}.\\

\noindent {\textit{ (ii) Gaussian case with GKDE representation}.
With respect to software code validation, this case does not provide anything more than the Gaussian reference case presented in (i), because the algorithms for calculating the coefficients of polynomial chaos expansions are independent of the choice of the probability density of $\bfH$. What this analysis shows for this Gaussian case is the influence of the number $n_d$ of points in the training dataset on the estimation of $p_\bfH$ by the GKDE method (given by Eq.~\eqref{eq2.3}), compared to the exact expression $p_\bfH^\exactp$ (which corresponds to $p_\bfH$ as $n_d\rightarrow +\infty$).
Therefore, for the Gaussian case with the GKDE estimation of $p_\bfH$ defined by Eq.~\eqref{eq2.3}, we have considered four values of $n_d$, which are $100$, $1000$, $5000$, and $10\, 000$.
For the convergence analysis of the polynomial chaos expansions, the functions $N\mapsto \curE_f (N), \curE_\bfg (N)$, and $\curE_f (N)$ are computed using Eqs.~\eqref{eq5.2}, \eqref{eq5.3}, and \eqref{eq5.4}, respectively. The graphs of these functions are displayed in Figs.~\ref{fig:figure2a}, \ref{fig:figure2b}, and \ref{fig:figure2c} for $\nu=1$, and in
Figs.~\ref{fig:figure2d}, \ref{fig:figure2e}, and \ref{fig:figure2f} for $\nu=10$.
For $N={10}^7$, we have $(\curE_f ({10}^7), \curE_\bfg ({10}^7), \curE_f ({10}^7)) = (1.594, 1.026, 1.027)$ for $\nu = 1$ and
$(15.249,4.502,14.148)$ for $\nu=10$.
These values are close to the reference values for $\nu=1$ and slightly further away for $\nu=10$. This does not call into question either the theoretical formulation proposed or the validation of the software code. This difference is simply due to the convergence of $p_\bfH$ towards $p_\bfH^\exactp$ as a function of $n_d$ for a fixed value of $\nu$. We know that the larger $\nu$ is, the larger $n_d$ must be. However, in the context of the probabilistic learning on manifolds approach, the eigenfunctions associated with the first eigenvalues of the Fokker-Planck operator that we are looking for are only used to construct a vector basis for the projection of the data, and replacing the exact operator with an approximate operator, constructed with a small training dataset, does not introduce any difficulties (see, for example, \cite{Soize2024b}).
\begin{figure}[h]
    \centering
    \begin{subfigure}[b]{0.33\textwidth}
    \centering
        \includegraphics[width=\textwidth]{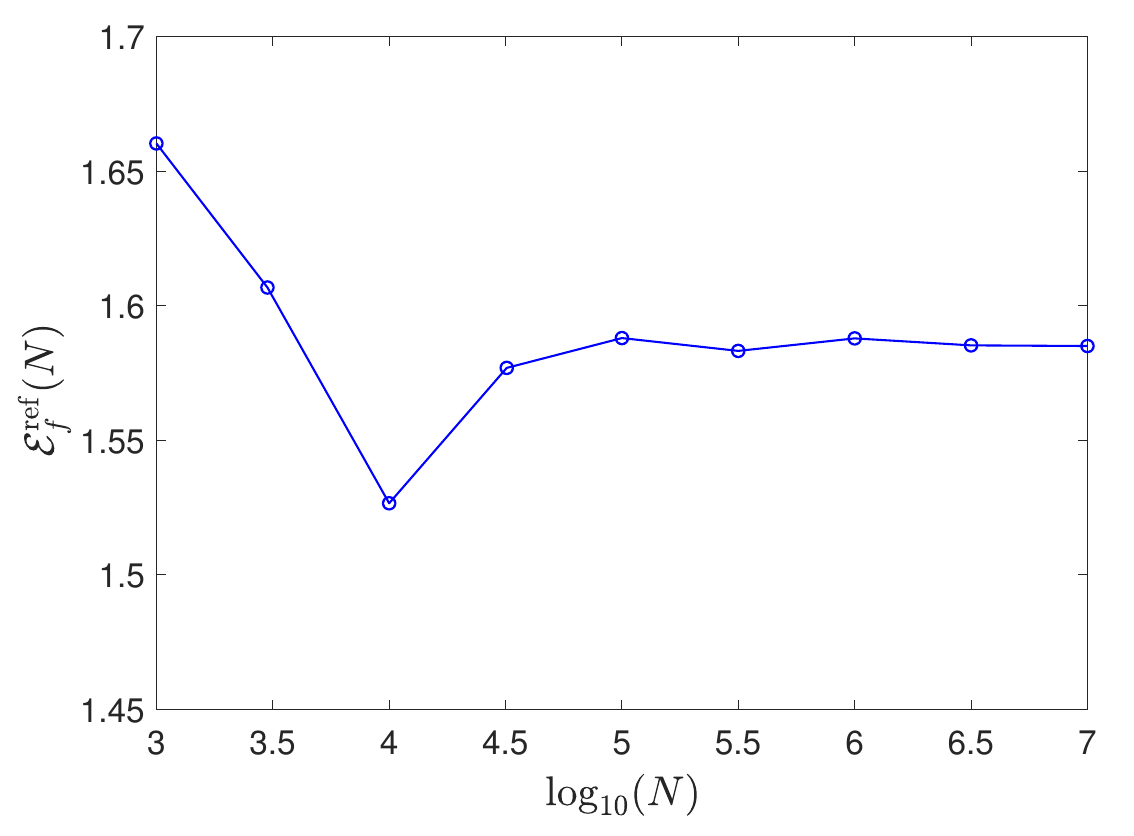}
        \caption{$N \mapsto \curE_f^\exactp (N)$ for $\nu=1$.}
         \vspace{0.3truecm}
        \label{fig:figure1a}
    \end{subfigure}
    \begin{subfigure}[b]{0.33\textwidth}
        \centering
        \includegraphics[width=\textwidth]{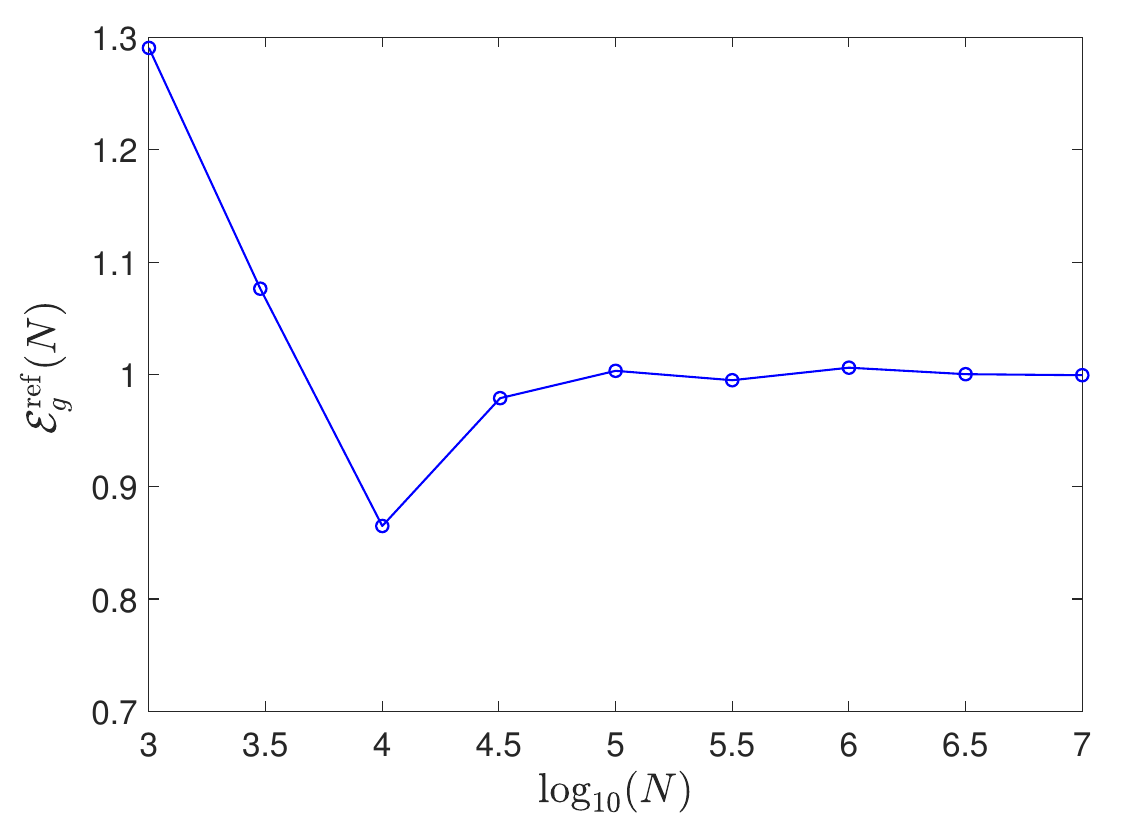}
         \caption{$N \mapsto \curE_\bfg^\exactp (N)$ for $\nu=1$.}
          \vspace{0.3truecm}
        \label{fig:figure1b}
    \end{subfigure}
    \begin{subfigure}[b]{0.33\textwidth}
        \centering
        \includegraphics[width=\textwidth]{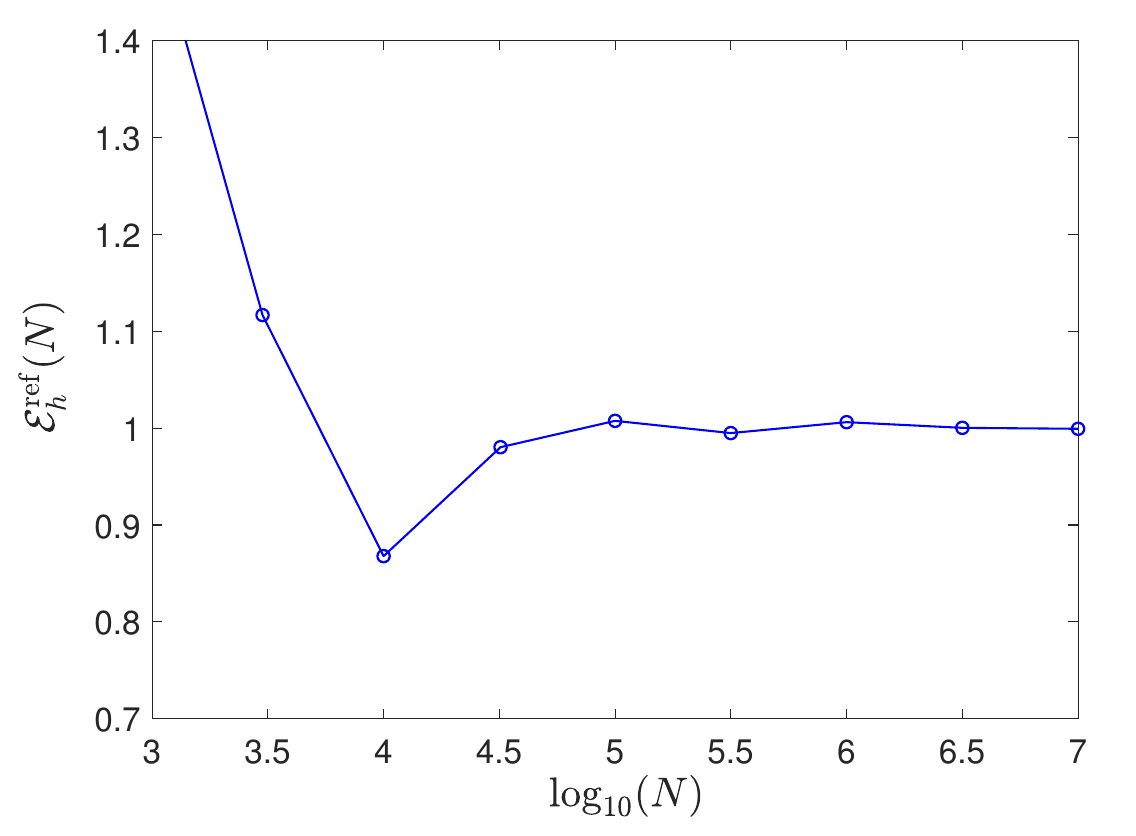}
         \caption{$N \mapsto \curE_h^\exactp (N)$ for $\nu=1$.}
        \label{fig:figure1c}
    \end{subfigure}
    %
    \centering
    \begin{subfigure}[b]{0.33\textwidth}
    \centering
        \includegraphics[width=\textwidth]{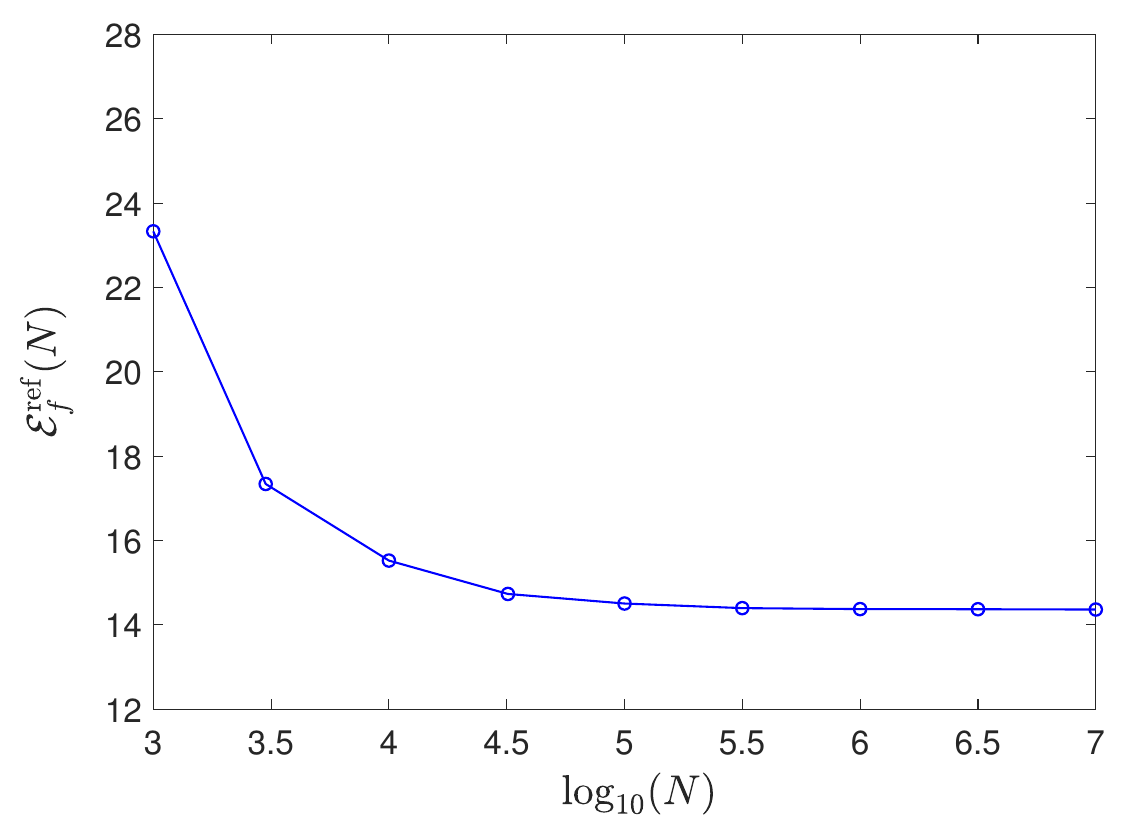}
        \caption{$N \mapsto \curE_f^\exactp (N)$ for $\nu=10$.}
        \vspace{0.3truecm}
        \label{fig:figure1d}
    \end{subfigure}
    \begin{subfigure}[b]{0.33\textwidth}
        \centering
        \includegraphics[width=\textwidth]{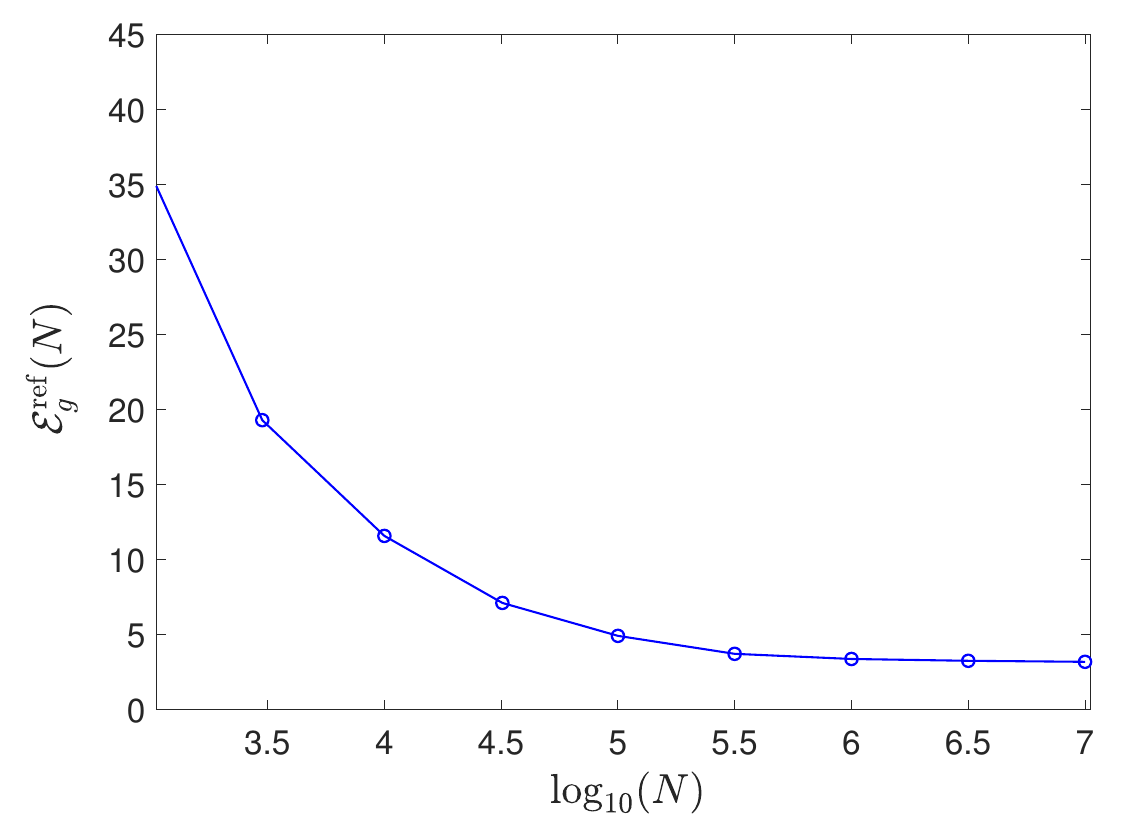}
        \caption{$N \mapsto \curE_\bfg^\exactp (N)$ for $\nu=10$.}
         \vspace{0.3truecm}
        \label{fig:figure1e}
    \end{subfigure}
    \begin{subfigure}[b]{0.33\textwidth}
        \centering
        \includegraphics[width=\textwidth]{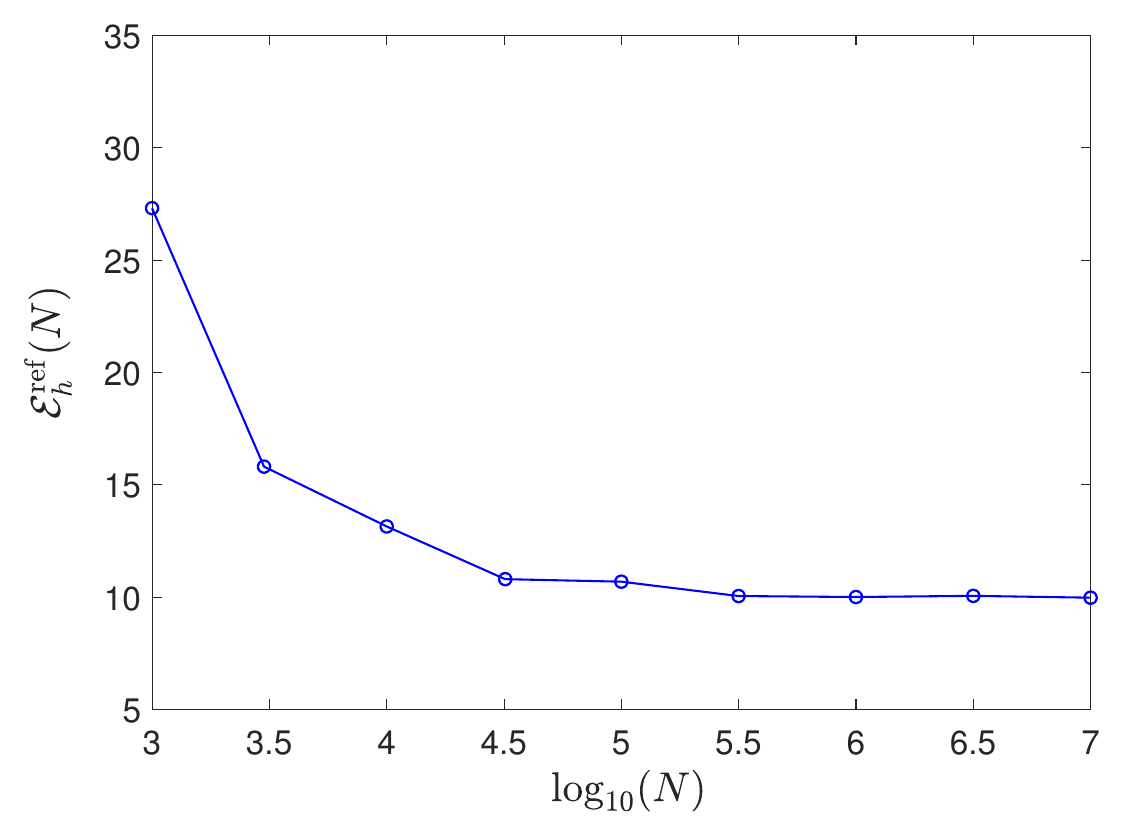}
        \caption{$N \mapsto \curE_h^\exactp (N)$ for $\nu=10$.}
        \label{fig:figure1f}
    \end{subfigure}
    \caption{Convergence analysis for the Gaussian reference case, for $\nu=1$ (a,b,c) and $\nu=10$ (d,e,f). Graph of functions
    $N \mapsto \curE_f^\exactp (N)$ (a,d), $N \mapsto \curE_\bfg^\exactp (N)$ (b,e), and $N \mapsto \curE_h^\exactp (N)$ (c,f). The horizontal axes are $\log_{10}(N)$.}
    \label{fig:figure1}
\end{figure}
\begin{figure}[h]
    \centering
    \begin{subfigure}[b]{0.3\textwidth}
    \centering
        \includegraphics[width=\textwidth]{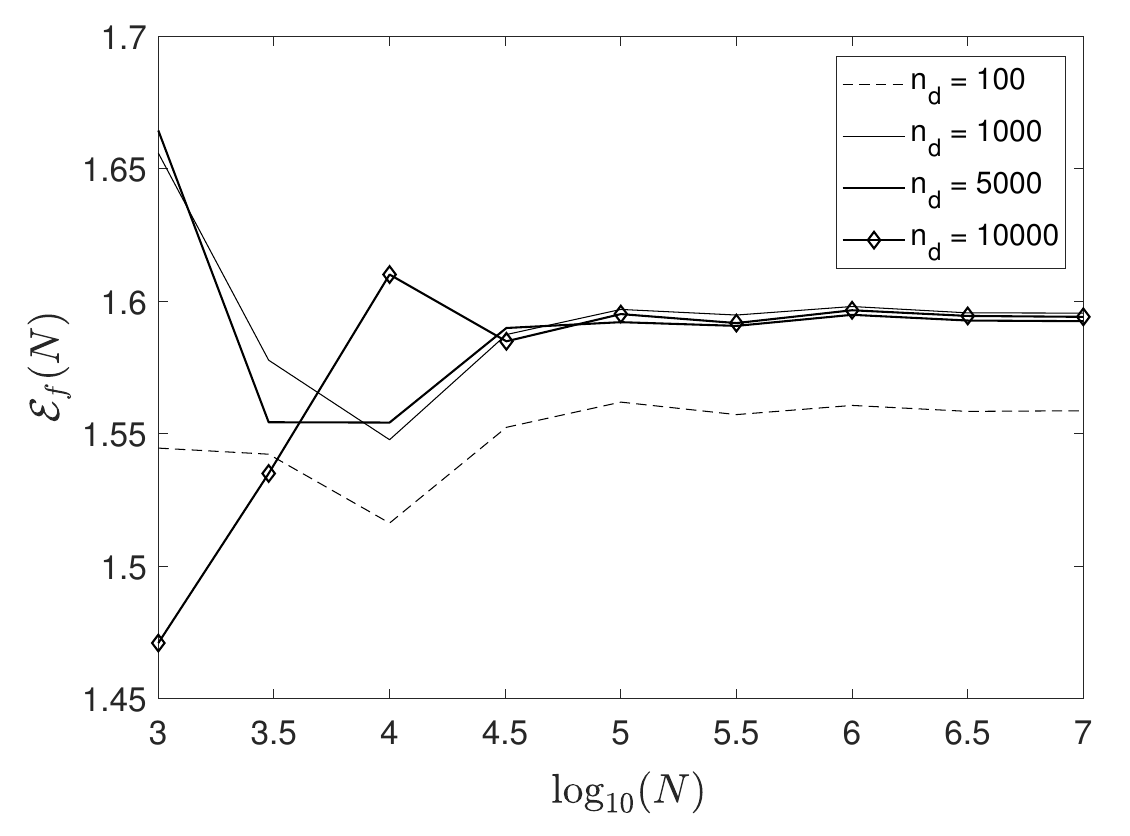}
        \caption{$N \mapsto \curE_f (N)$ for $\nu=1$.}
         \vspace{0.3truecm}
        \label{fig:figure2a}
    \end{subfigure}
    \begin{subfigure}[b]{0.3\textwidth}
        \centering
        \includegraphics[width=\textwidth]{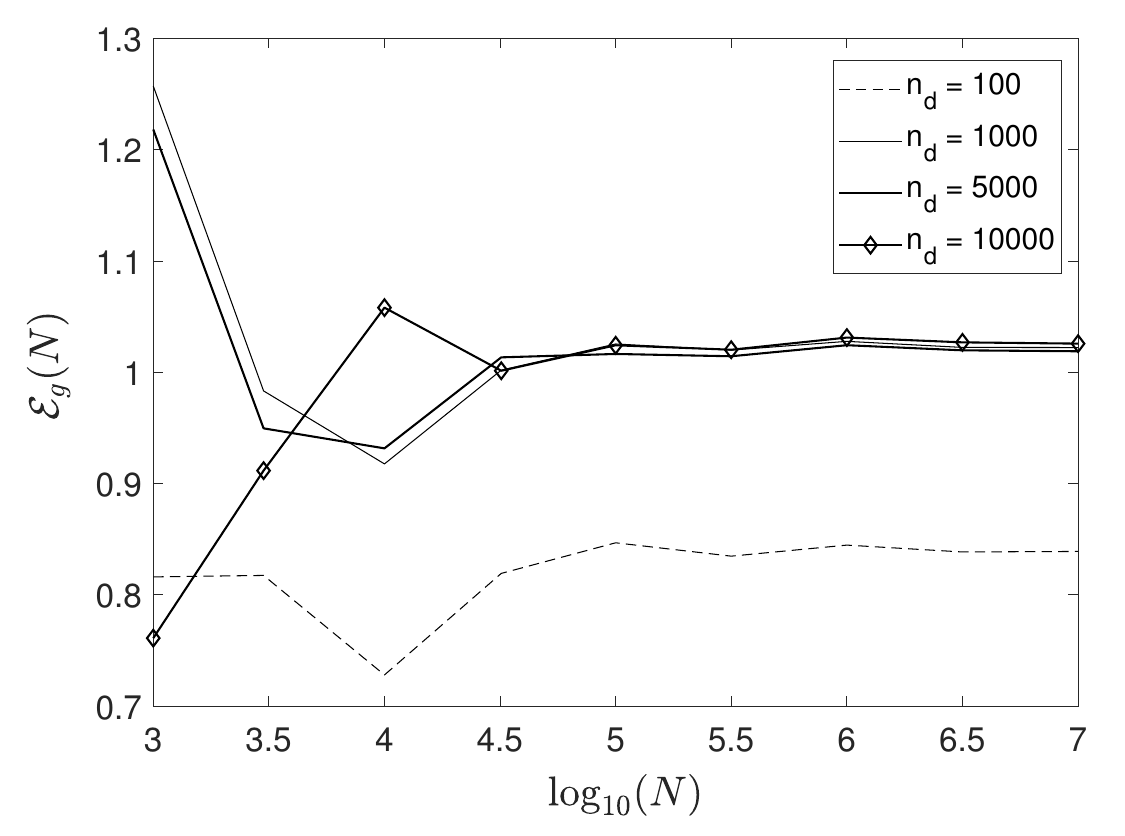}
         \caption{$N \mapsto \curE_\bfg(N)$ for $\nu=1$.}
          \vspace{0.3truecm}
        \label{fig:figure2b}
    \end{subfigure}
    \begin{subfigure}[b]{0.3\textwidth}
        \centering
        \includegraphics[width=\textwidth]{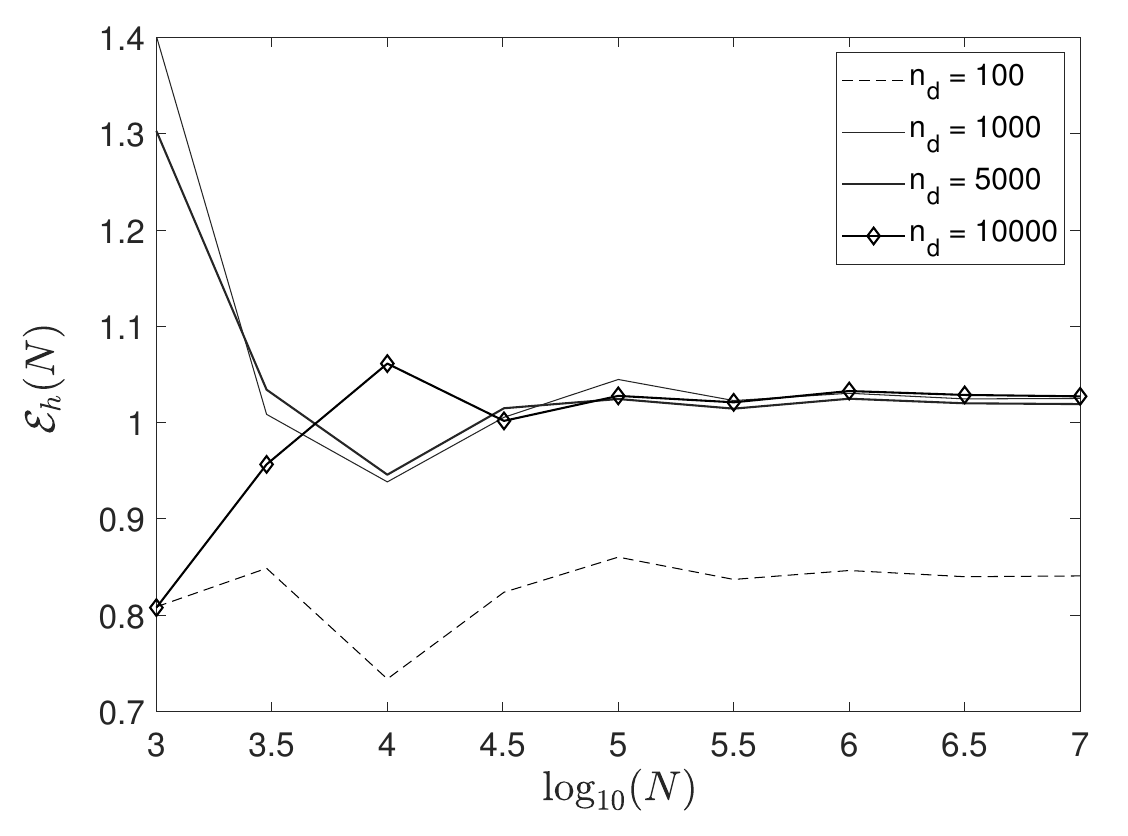}
         \caption{$N \mapsto \curE_h (N)$ for $\nu=1$.}
        \label{fig:figure2c}
    \end{subfigure}
    %
    \centering
    \begin{subfigure}[b]{0.3\textwidth}
    \centering
        \includegraphics[width=\textwidth]{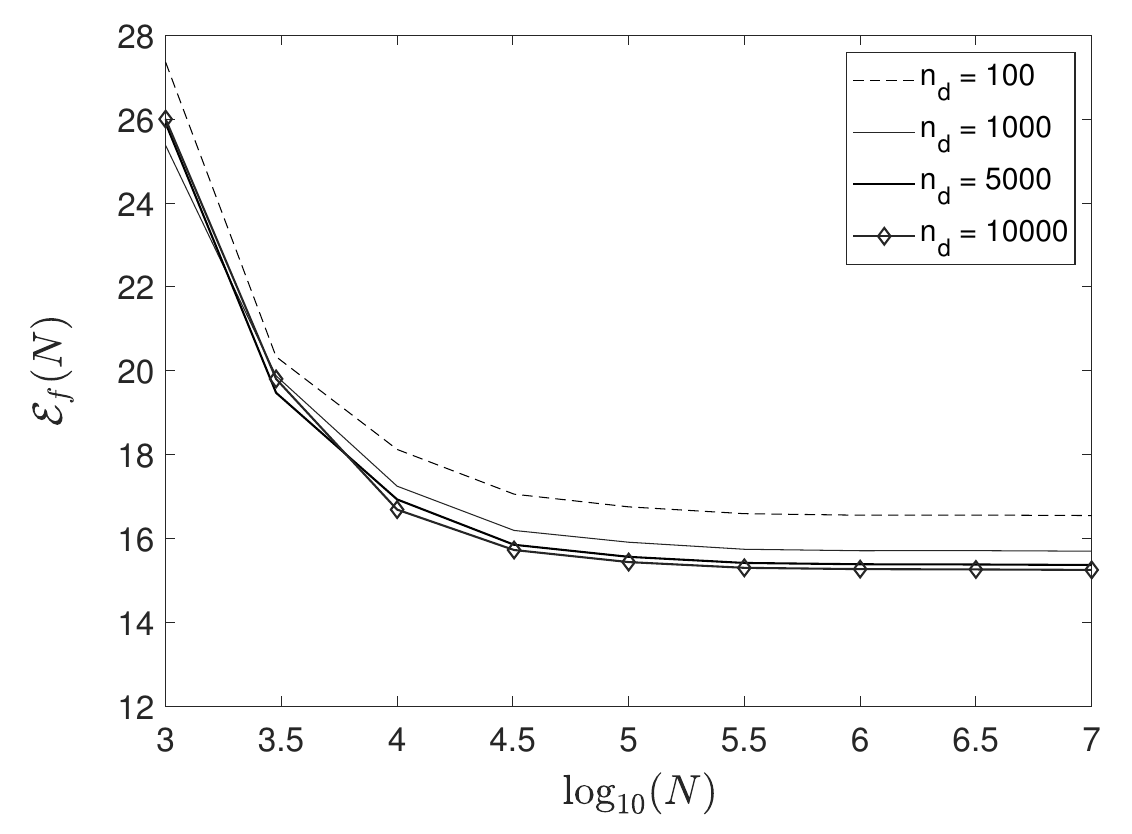}
        \caption{$N \mapsto \curE_f (N)$ for $\nu=10$.}
        \vspace{0.3truecm}
        \label{fig:figure2d}
    \end{subfigure}
    \begin{subfigure}[b]{0.3\textwidth}
        \centering
        \includegraphics[width=\textwidth]{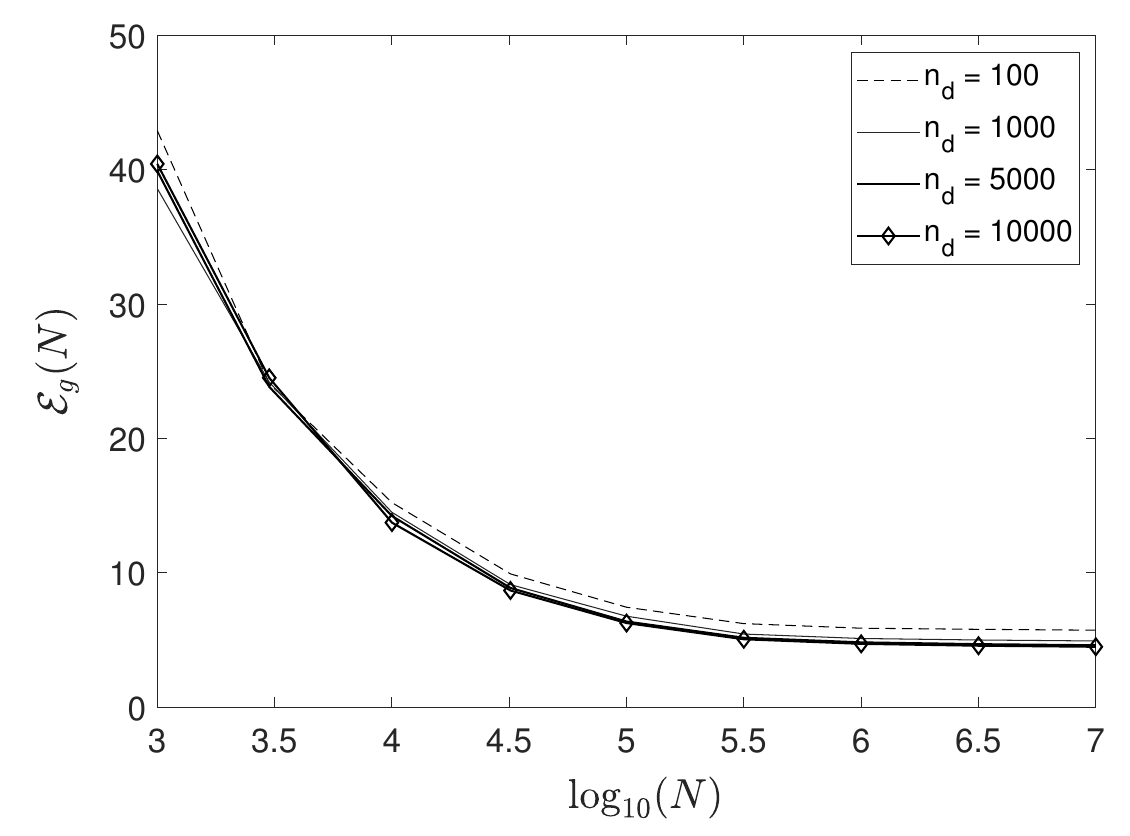}
        \caption{$N \mapsto \curE_\bfg (N)$ for $\nu=10$.}
         \vspace{0.3truecm}
        \label{fig:figure2e}
    \end{subfigure}
    \begin{subfigure}[b]{0.3\textwidth}
        \centering
        \includegraphics[width=\textwidth]{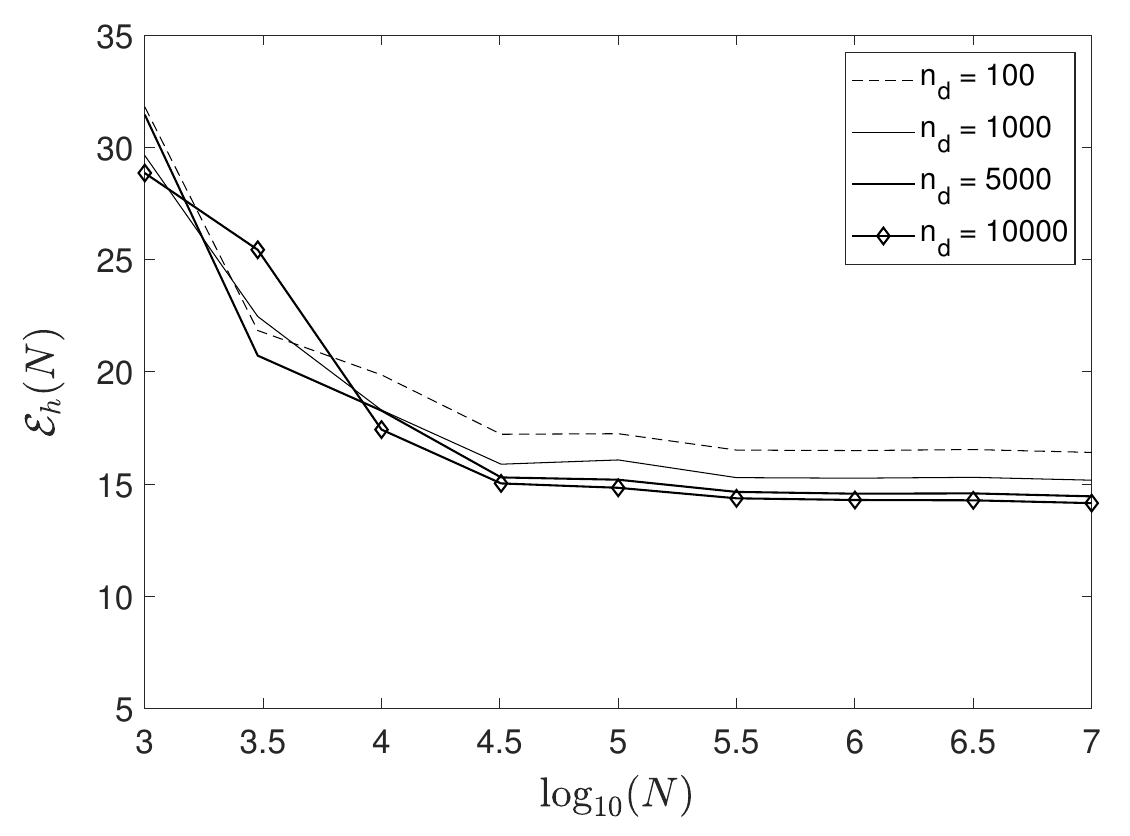}
        \caption{$N \mapsto \curE_f (N)$ for $\nu=10$.}
        \label{fig:figure2f}
    \end{subfigure}
    \caption{Convergence analysis for the Gaussian GKDE case, for $\nu=1$ (a,b,c) and $\nu=10$ (d,e,f). For $n_d = 100$ (dashed line),
    $1000$ (thin line), $5000$ (thick line), and $10\,000$ (line with diamonds), graph of functions
    $N \mapsto \curE_f (N)$ (a,d), $N \mapsto \curE_\bfg (N)$ (b,e), and $N \mapsto \curE_h (N)$ (c,f). The horizontal axes are $\log_{10}(N)$.}
    \label{fig:figure2}
\end{figure}
%
\section{Quantum computing formulation}
\label{Section6}
%
The eigenproblem given in Eq.~\eqref{eq3.16} is fully equivalent to a time-independent Schr\"odinger equation, where $\hat L_\FKP$ acts as the Hamiltonian and $q$ represents the corresponding eigenstate (or wavefunction).
Solving this eigenproblem is commonly approached in the quantum mechanics community via projection (using the variational principle) onto a tensor-product basis spanning the full space $\RR^\nu$~\cite{Gottlieb1977, Light2000, Tannor2007, Gatti2017}.
This approach is feasible only for low dimensionality $\nu$, as its computational cost scales exponentially with $\nu$.
For large dimensional problems, alternative approximations using specific tensor decompositions are sometimes employed~\cite{meyer2009, Schollwock2011, Bachmayr2023}, though these schemes do not completely alleviate the exponential scaling.
Recent advances in quantum computing offer potential for encoding the wavefunctions $q$ and their operations, even in high-dimensional problems (see for instance that quantum computers now have more than 1000 qubits~\cite{Castelvecchi2023}).
This encoding becomes possible by mapping the system, as defined by the operator $\hat L_\FKP$, onto a quantum computer, thereby representing wavefunctions and operations as qubits.
While much of the research community is focused on electronic structure problems~\cite{Fauseweh2024}, new efforts are emerging to adapt other quantum problems for quantum computation~\cite{McArdle2019,Sawaya2019,Wang2023}.
For completeness, we depict here the general approach as follows: (i) choice of a basis, (ii) second quantization, and (iii) mapping onto qubits.
%
\subsection{Choice of the basis}
\label{Section6.1}
The first step is to adopt a finite basis representation (spectral method), for which we can easily rewrite Eq.~\eqref{eq3.16} in a second quantized form.
In our case, all dimensions are distinguishable, such that operators acting on different coordinates $y_j$ commute, and we must use different basis functions for each dimension.
In line with section~\ref{Section3.7}, we employ a basis of Hermite polynomials as a working basis.
A common practice in the quantum mechanical community is to absorb the measure $p_\bfY(\bfy)\, d\bfy = \otimes_{j=1}^\nu p_Y(y_j)\, dy_j$ into the basis, where $p_Y$ is defined in Definition~\ref{definition:2}.
We also switch to the physicists' convention for the Hermite polynomials during the process of building the basis.
Furthermore, we perform an affine transformation on each coordinate, $y_j\to \sqrt{2\omega_j}(y_j-y_{j,0})$, to adapt the basis to improve convergence, where $y_{j,0}$ allows for a shift of the coordinate $y_j$, if needed.
	For coordinate $y_j$, the resulting basis $\{\varphi^{(j)}_n(y_j), n \in \mathbb{N}\}$ is such that
\begin{equation*}
\varphi^{(j)}_n(y_j) = \sqrt{2^n} \,\psi_n \Bigl(\!\sqrt{\scriptstyle 2\,\omega_j}\,(y_j-y_{j,0})\Bigr )\, \Bigl\{p_Y \Bigl (\!\sqrt{\scriptstyle 2\,\omega_j}\,(y_j-y_{j,0})\, \Bigr )\Bigr \}^{1/2}\, ,
\end{equation*}
for which the measure is $dy_j$ and where $\psi_n$ is defined in Definition~\ref{definition:2}.
These basis functions are the normalized eigenfunctions of the operator  $\omega_j \, (y_j-y_{j,0})^2-(1/\omega_j)\, \partial^2/\partial y_j^2$.
By projecting Eq.~\eqref{eq3.16} onto this basis, the eigenproblem becomes a matrix diagonalization, where the eigenstate $q_n$ is expanded as
\begin{equation}
q_n (\bfy) \,  =  \sum_{\bfalpha\in M} c_{n,\bfalpha}\, \varphi_\bfalpha(\bfy)                     \, ,\label{eq6.1}
\end{equation}
where $M=\{\bfalpha\in\NN^\nu \, \vert \,\alpha_j \leq m_j \, , \, j = 1, 2, \dots ,\nu\}$, $c_{n,\bfalpha}\in\RR$, and $\varphi_\bfalpha(\bfy)=\prod_{j=1}^\nu\varphi^{(j)}_{\alpha_j}(y_j)$, where $\{m_j\}_{j=1}^\nu$ are the maximum index that basis functions of the degree of freedom $y_j$ can have.
Therefore, there are $1+m_j$ basis functions per degree of freedom, and the total number of one-dimensional basis functions, including all degrees of freedom, is $m_T=\sum_{j=1}^\nu (1+m_j)$.
It should be noted that $m_T$ does not represent the number of basis functions in the tensor product space, which is not explicitly used in quantum computing. This concept lies at the heart of the advantage offered by quantum computing and is reminiscent of analog computing, but in the quantum realm.
%
\subsection{Second quantization}
\label{Section6.2}
On our way to express the eigenproblem on a quantum computer, we use the second quantization formalism~\cite{Surjan1989,Bowman2021}, which is particularly suited for encoding the qubits.
In this formalism, all the basis functions can be generated from $\varphi^{(j)}_0(y)$ by employing the so-called creation operator
\begin{equation*}
a^{+}_j = \frac{1}{\sqrt{2}}  ( \!\sqrt{\omega_j}\, y_j - \frac{1}{\sqrt{\omega_j}}\, \frac{\partial}{\partial y_j})\, ,
\end{equation*}
with $a^{+}_j \varphi^{(j)}_n = \sqrt{n+1}\,\varphi^{(j)}_{n+1}$ such that
$\varphi_\bfalpha (\bfy) = \Bigl(\prod_{j=1}^\nu (a^{+}_j)^{\alpha_j} \Bigr ) \,\varphi_{\bfalpha^{(0)}} (\bfy)$ with $\bfalpha^{(0)} = (0,0,\ldots ,0)$.
By use of the creation operator, one can rewrite the solution $q_n$ in the so-called Fock space, where the basis functions are described by occupation numbers that we denote by
\begin{equation*}
\varphi_\bfalpha \equiv \vert \alpha_1 \alpha_2 \dots \alpha_\nu \rangle = \Biggl ( \prod_{j=1}^\nu (a^{+}_j)^{\alpha_j} \Biggr ) \,
\vert 0 0 \ldots 0 \rangle \, .
\end{equation*}
Similarly, we define the annihilation operator
\begin{equation*}
a^{-}_j = \frac{1}{\sqrt{2}} (\!\sqrt{\omega_j}\, y_j + \frac{1}{\sqrt{\omega_j}}\, \frac{\partial}{\partial y_j})\, ,
\end{equation*}
with $a^{-}_j \varphi_n^{(j)} = \sqrt{n}\, \varphi_{n-1}^{(j)}$.
Note that $a^{+}_j$ and $a^{-}_j$ are adjoint one to the other.
Knowing the effect of $a^{+}_j$ and $a^{-}_j$ on the basis, we can rewrite all operators involved in Eq.~\eqref{eq4.32} using the identities
\begin{equation*}
y_j = \frac{1}{\sqrt{2\omega_j}}  (a^{+}_j + a^{-}_j) \quad , \quad
\frac{\partial}{\partial y_j} = (a^{-}_j - a^{+}_j)\sqrt{\omega_j/2} \, .
\end{equation*}
Then, using the commutator $a^{-}_j a^{+}_k - a^{+}_k a^{-}_j = \delta_{jk}$, one can express $\nabla^2$ and all monomials in the so-called normal ordering (with creation operators on the left of the annihilation operators) as follows,
\begin{align}
\nabla^2 \,  &=  \sum_{j=1}^\nu \frac{\omega_j}{2}\left( (a^{+}_j)^2 + (a^{-}_j)^2 - 2 a^{+}_j a^{-}_j - 1 \right)                     \, ,\label{eq6.2}\\
\bfy^\bfalpha \,
&=  \prod_{j=1}^\nu \frac{ \alpha_j! }{ \sqrt{(2\omega_j)^{\alpha_j}} } \sum_{s=0}^{\lfloor\alpha_j/2\rfloor} \sum_{p=0}^{\alpha_j-2s} \frac{ (2s-1) !\!! }{ (\alpha_j-2s-p)! (2s)! p! } (a^{+}_j)^{\alpha_j-2s-p} (a^{-}_j)^{p}                    \, ,\label{eq6.3}
\end{align}
where $n!\!!=1\times\dots\times(n-4)\times(n-2)\times n$ is the double factorial, and $\lfloor a\rfloor$ returns the integer part of $a$.
For example, a quadratic term is given by $y_j^2=( (a^{+}_j)^2 + (a^{-}_j)^2 + 2 a^{+}_j a^{-}_j + 1 )/(2\omega_j)$.
Note that Eq.~\eqref{eq6.3} results from the expansion of the operator
$y_j^{\alpha_j} = \frac{1}{\sqrt{(2\omega_j)^{\alpha_j}}}  (a^{+}_j + a^{-}_j)^{\alpha_j}$.
The double sum and the double factorial reflect the combinatorics arising from the commutation relation
$a^{-}_j a^{+}_j - a^{+}_j a^{-}_j = 1$.
With these expressions at hand, the approximation $\hat L_\FKP^{\mu,N}$ of the Fokker-Planck operator $\hat L_\FKP$, as defined by Eq.~\eqref{eq3.18}, can be expressed in a general form for a truncation at order $\mu$ with $N$ realizations of the Gaussian random variable $\bfY$,
\begin{equation}
\hat L_\FKP^{\mu,N} \, =  \sum_{\bfalpha\,\in\NN^\nu \atop |\bfalpha| \,\leq\,  2(\mu-1)} \sum_{\bfk\,\in\NN^\nu \atop \bfalpha^{(0)} \, \leq \, \bfk\,  \leq \, \,\bfalpha} G_{\bfalpha,\bfk}^N
\prod_{j=1}^\nu (a^{+}_j)^{\alpha_j} (a^{-}_j)^{k_j-\alpha_j'}                                    \, ,\label{eq6.4}
\end{equation}
where the coefficients $G_{\bfalpha\bfk}^N$ are obtained by substituting the expression from Eq.~\eqref{eq6.3} into Eq.~\eqref{eq4.32} and combining it with Eq.~\eqref{eq6.2} into Eq.~\eqref{eq3.18}.
Employing the form given in Eq.~\eqref{eq6.4} is convenient, as the action of the creation and annihilation operators can easily be computed using the fact that
\begin{equation*}
(a^{+}_j)^{\,\beta} (a^{-}_j)^\alpha\,  \vert  \varphi^{(j)}_n \rangle = \frac{\sqrt{(n+\beta-\alpha)!n!}}{(n-\alpha)!}
\, \vert \varphi^{(j)}_{n+\beta-\alpha} \rangle\, .
\end{equation*}
%
\subsection{Map onto a system of qbits}
\label{Section6.3}
Qubits are 2-level quantum systems, expressed as $\{\, c_0|0 \rangle + c_1 |1 \rangle \, , \,  \mathbf{c} = (c_0,c_1)\in \CC^2\}$, that compose the quantum computers.
To exploit quantum computing, we need to map the system of interest to the system of qubits.
Rewriting the Fokker-Planck operator in a basis of qubits is not unique.
Two mappings exist: the ``direct'' map~\cite{Somma2002}, which we employ here due to its simplicity, and the ``compact'' map~\cite{Veis2016}, which reduces the number of required qubits.
An alternative approach involves using quantum computers based on bosonic modes (harmonic oscillators) instead of qubits, or hybrid architectures combining oscillators and qubits, where additional mappings and other quantum circuits construction stategies can also be explored~\cite{Crane2024,Malpathak2024}.
Here, we utilize the direct map~\cite{Somma2002,McArdle2019,Wang2023}: the occupation of $|\varphi^{(j)}_n\rangle$ is represented by the occupation of a corresponding qubit state $| s_{j,k} \rangle$, such that a single qubit describes the occupation of a single basis function for a given degree of freedom, $y_j$.
In the direct map, the total number of qubit is given by $m_T$, the total number of basis functions across all degrees of freedom as defined in Section~\ref{Section6.1}.
The qubit state corresponding to $\varphi^{(j)}_n$, associated with the degree of freedom $y_j$, is rewritten, assuming that excatly one qubit is in the state $|1 \rangle$ at position $n$, while all others are in the state $|0 \rangle$:
\begin{equation}
|\varphi_n^{(j)}\rangle
\;\equiv\;
\underbrace{\,|(s_{j,0}=0)\rangle \,\otimes\, \cdots \,\otimes\, |(s_{j,n-1}=0)\rangle\,}_{\text{these qubits in state } |0\rangle}
\;\otimes\; |(s_{j,n}=1)\rangle \;\otimes\;
\underbrace{\,|(s_{j,n+1}=0)\rangle \,\otimes\, \cdots \,\otimes\, |(s_{j,m_j}=0)\rangle\,}_{\text{remaining qubits in state } |0\rangle}
\, . \label{eq6.5}
\end{equation}
By employing this notation, products of creation and annihilation operators on a \( n^{\text{th}} \) basis function $|\varphi_n^{(j)}\rangle$ take the form
\begin{align}
[(a_j^{+})^{\,\beta} (a_j^{-})^\alpha]_n \equiv\, &
\frac{1}{4} \frac{\sqrt{(n+\beta-\alpha)! n!}}{(n-\alpha)!}
( \hat{X}_{j,n} + i \hat{Y}_{j,n} ) \otimes ( \hat{X}_{j,n+\beta-\alpha} - i \hat{Y}_{j,n+\beta-\alpha} ) \, , \label{eq6.6}
\end{align}
where $i$ is the imaginary number, $i^2=-1$.  The operators $\hat{X}_{j,n} = (\begin{smallmatrix} 0 & 1 \\ 1 & 0 \end{smallmatrix})$ and $\hat{Y}_{j,n} = (\begin{smallmatrix} 0 & -i \\ i & 0 \end{smallmatrix})$ are Pauli matrices associated with the $k^{\text{th}}$ qubit used in the description of the $j^{\text{th}}$ degree of freedom $y_j$.
The hat indicates the matrix representation in the basis where individual qubits are represented as vectors $|0\rangle\equiv(\begin{smallmatrix}1\\0\end{smallmatrix})$ and $|1\rangle\equiv(\begin{smallmatrix}0\\1\end{smallmatrix})$.
Then, we can build the operator $(a_j^{+})^{\,\beta} (a_j^{-})^\alpha$ as
\begin{align}
(a_j^{+})^{\,\beta} (a_j^{-})^\alpha \equiv \,&
\frac{1}{4} \sum_{n=\alpha}^{(2m_j+\alpha-\beta+|\alpha-\beta|)/{2}} \frac{\sqrt{(n+\beta-\alpha)! n!}}{(n-\alpha)!}
( \hat{X}_{j,n} + i \hat{Y}_{j,n} ) \otimes ( \hat{X}_{j,n+\beta-\alpha} - i \hat{Y}_{j,n+\beta-\alpha} ) \, . \label{eq6.7}
\end{align}
Equation~\eqref{eq6.7} sums over all valid $n$ to form the full operator in qubit space, ensuring that each $| n \rangle$ can transition to $| n +\beta - \alpha \rangle$ with the usual harmonic‐oscillator amplitude.
%
\section{Quantum operations and measurements: quantum circuit}
\label{Section7}
The construction of quantum states and the measurement of their observables are fundamental steps in the development of more advanced quantum algorithms. In this section, we provide an overview of these foundational concepts, particularly for readers who may be less familiar with quantum computing.

Manipulating qubits is achieved through quantum operations that are unitary transformations.
Hence, the eigensolution that we seek can be expressed as
$|q_n\rangle=U_q|0^{m_T}\rangle$,
where $|0^{m_T}\rangle$ represents the vacuum state of ${m_T}$ qubits (i.e., all qubits initialized in their ground state).
Because individual qubits are 2-level systems, we can employ single-qubit transformations from $U(2)$, such as
$\exp(-i\theta_x \hat{X})$, $\exp(-i\theta_y \hat{Y})$, and $\exp(-i\theta_z \hat{Z})$,
where $\hat{X}$, $\hat{Y}$, and $\hat{Z}$ are the Pauli matrices (noting that $\hat{Z} = (\begin{smallmatrix} 1 & 0 \\ 0 & -1 \end{smallmatrix})$).
Transformations acting on pairs, triplets, or more qubits can be built as tensor products of unitaries from $U(2)$,
and more complex multi-qubit transformations generally involve entanglement and interactions between qubits.
For example, the Pauli group (composed of  tensor products of Pauli operators and the identity) forms a complete basis for constructing any multi-qubit unitary transformation~\cite{Nielsen2010}.
In view of Eq.~\eqref{eq6.5}, constructing a specific state is equivalent to creating a superposition of collective qubit states by applying suitable unitary transformations to single qubits, pairs of qubits, or larger subsets of qubits.
Such transformations are implemented on quantum computers via quantum gates, which are the fundamental building blocks of quantum circuits.
Examples of commonly used quantum gates~\cite{Barenco1995,Djordjevic2012} include the Hadamard gate $H$, the phase gate $S$, and the Controlled-NOT gate for two qubits, such that
\begin{equation*}
H = \frac{1}{\sqrt{2}} \left ( \begin{matrix}1&1\\
                                     1&-1
                                \end{matrix} \right ) \quad, \quad
S = \left (\begin{matrix}1&0\\
                        0&i
     \end{matrix} \right )   \quad , \quad
\rm{CNOT} = \left ( \begin{matrix}\hat 1&\hat 0\\
                                 \hat 0&\hat X
              \end{matrix} \right ) \, ,
\end{equation*}
where $\hat 1$ is the $2\times 2$ identity and $\hat X$ is the Pauli-x matrix.
To construct eigenstates, one needs a set of universal quantum gates that can reconstruct any unitary transformation in the qubit Hilbert space can be approximated to arbitrary precision~\cite{Barenco1995,Nielsen2010}.
By successively applying these gates, one can construct quantum circuits that generate specific quantum states.

The target solutions of the FKP equation span only a subset of the qubits' Hilbert space.
This subset is restricted because certain qubit states are not valid representations of the basis states defined in Section~\ref{Section6.1}. For instance, qubit states of the form
$\dots \otimes | (s_{j,k}=1) \rangle \otimes \dots \otimes | (s_{j,l}=1) \rangle \otimes \dots$
do not correspond to any valid basis state.
This is due to the constraint that, for each degree of freedom $y_j$, only one qubit can take the value $s_{j,k}=1$ among the $m_j$ qubits assigned to describe $y_j$.
Mathematically, this constraint is expressed as
\begin{equation*}
\sum_{k=0}^{m_j}(1-\langle\hat Z_{j,k}\rangle)/2=\sum_{k=0}^{m_j} s_{j,k}=1 \quad , \quad \forall j\in\{1,\dots,\nu\} \, .
\end{equation*}
To address this issue, two possible strategies can be employed:
i) adding constraints to the optimization process to explicitly enforce the condition
    $\sum_{k=0}^{m_j} s_{j,k}=1$ for all $j$, or ii) restricting the set of unitary transformations so that they inherently respect the constraint.
This can be achieved by constructing transformations exclusively from the creation and annihilation operators, ensuring that $\hat{X}$ and $\hat{Y}$ are always employed in pairs of the form
$\hat X_{j,k}\hat Y_{j,l} - \hat Y_{j,k} \hat X_{j,l}$.

Extracting information from quantum states is accomplished through quantum measurements of physical observables of interest.
This measurement process is inherently probabilistic and results in the irreversible collapse of the quantum state.
For example, assuming that an observable $\hat{A}$ has an eigendecomposition $\hat{A} = \sum_{k=1}^{2^M} |k\rangle \, a_k \, \langle k|$, where $M$ is the number of qubits, measuring $\hat{A}$ on a quantum state $|\chi\rangle$ returns an eigenvalue $a_k$ with probability $|\langle k |\chi\rangle|^2$.
After the measurement, the quantum state collapses to the corresponding eigenstate $|k\rangle$.
As a result, many measurements (realizations) are required to statistically estimate the expectation value $\langle \hat{A} \rangle = \langle \chi | \hat{A} | \chi \rangle$.
For example, in quantum chemical applications using the Variational Quantum Eigensolver (VQE) framework, the number of measurements typically required is on the order of $10^9$~\cite{Tilly2022}.
Only observables that are diagonal in the computational (qubit) basis can be directly measured, i.e., observables built as tensor products of $\hat{1}$ and $\hat{Z}$ for each qubit.
Hence, it is necessary to decompose the observable as $\hat{A} = \sum_\beta A_\beta$, where each fragment $A_\beta$ can be efficiently transformed into a diagonal form via a suitable unitary transformation $U^m_\beta$.
This decomposition implies that the state of interest must also be transformed according to the same unitary operation before measuring each fragment.
Reconstructing the expectation value $\langle \hat{A} \rangle$ requires combining all measurement outcomes on a classical computer (see, for instance, \cite{Verteletskyi2020,Yen2021,Yen2023} for efficient measurement strategies).
An alternative approach to measurement involves using the Hadamard test~\cite{Nielsen2010}, which will be discussed in Section~\ref{Section8.2} and serves as an example of a measurement technique.

Considering the two key steps—construction and measurement of a quantum state—a quantum processing unit operates in two corresponding phases: (i) construction and (ii) measurement, as illustrated in Fig.~\ref{circuit}.
The measurement process must be repeated a sufficient number of times to ensure convergence to the expectation value $\langle \hat{A}_k \rangle$ for a given fragment.
Since each measurement irreversibly collapses the quantum state, the entire process, including the state construction, must be repeated for every realization.
Then, a classical computer combines the results from all fragments to reconstruct the total expectation value $\langle \hat{A} \rangle$.

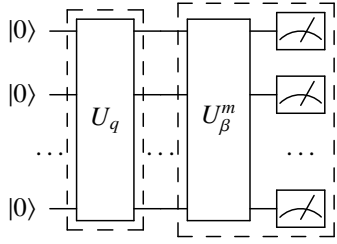
\begin{figure}
\begin{equation} \nonumber
\Qcircuit @C=1em @R=1em {
    \lstick{|0\rangle} & \multigate{3}{U_q} & \qw   & \multigate{3}{U^m_\beta} & \meter \\
    \lstick{|0\rangle} & \ghost{U_q}        & \qw   & \ghost{U^m_\beta}        & \meter \\
    \dots              & \nghost{U_q}       & \dots & \nghost{U^m_\beta}       & \dots  \\
    \lstick{|0\rangle} & \ghost{U_q}        & \qw   & \ghost{U^m_\beta}        & \meter
    \gategroup{1}{2}{4}{2}{.7em}{--}
    \gategroup{1}{4}{4}{5}{.7em}{--}
}
\end{equation}
\caption{General form of a circuit: the left dashed box indicates the trial state construction for $q_n$ in terms of qubits, while the right dashed box indicates the measurement of an observable $A_\beta$, where $U^m_\beta$ is a unitary transformation that diagonalizes $A_\beta$.}
\label{circuit}
\end{figure}

%
\section{Proposal for eigenstates construction and overlap extraction}
\label{Section8}
As explained in Section~\ref{Section3.8}, we have to compute $\bfg_{\FKP,j}^m = \vc_m(\bfeta^j)$, for all $m\in\{1,\ldots,m_\optp\}$ and for all $j\in\{1,\ldots,n_d\}$, that is to say, using Eq.~\eqref{eq3.30}, to compute $q_m(\bfeta^j)$.
This is achieved by first generating eigenstates as solution of Eq.~\eqref{eq3.23b}, and second by calculating the following overlap $q_n (\bfeta^j)  = \langle \bfeta^j | q_n \rangle$.
%
\subsection{Eigenstates construction}
\label{Section8.1}
Some of the most extensively studied algorithms for solving eigenproblems are the Variational Quantum Eigensolver (VQE)~\cite{Peruzzo2014,Tilly2022} and the Quantum Phase Estimation (QPE)~\cite{Kitaev1995,Nielsen2010}.\\

\noindent \textit{(i) QPE algorithm}. The QPE algorithm typically involves the following steps:
1) Preparing an initial quantum state with a significant overlap with the target eigenstates.
2) Decomposing the propagator $\exp(-i\hat L_\FKP t)$ into a sequence of quantum gates for various times $t$.
3) Computing and storing the overlaps between the resulting states and the initial state, also referred to as the autocorrelation function, for the chosen time intervals.
4) Performing a Fourier transform on the autocorrelation function to extract the eigenspectrum and measuring the resulting state to identify eigenstates with significant overlap with the initial state.\\

\noindent \textit{(ii) Rationale for presenting the QPE or VQE algorithm}. One objective of this section is to illustrate for readers who are not specialists in scientific quantum computing how the essential coupling between a quantum computer and a classical computer operates. Given the complexity and non-trivial nature of the QPE algorithm, we choose to focus on the VQE approach, which can be introduced using simpler concepts. Furthermore, QPE requires substantially longer quantum circuits than VQE~\cite{Reiher2017,Tilly2022}, making the latter more practical for quantum computers in the Noisy Intermediate-Scale Quantum (NISQ) era, where quantum hardware remains limited~\cite{Preskill2018}. We therefore focus on the VQE approach and provide here a brief overview.\\

\noindent \textit{(iii) VQE algorithm}. The VQE approach is a hybrid quantum-classical algorithm~\cite{McClean2016} designed to approximate the lowest, ``ground state'', eigensolutions of an eigenequation (see Fig.~\ref{circuitVQE}).
It is based on the Rayleigh-Ritz variational principle, which states that any normalized trial wavefunction $\tilde q(\bfy)$ satisfies the inequality $\curR(\tilde q)\geq\lambda_0=0$, where
\begin{equation*}
\curR(\tilde q) = \langle \hat L_\FKP \tilde q\, , \tilde q \rangle_{L^2} = \int_{\RR^\nu} \tilde q(\bfy) \, (\hat L_\FKP \tilde q(\bfy))\, d\bfy \quad , \quad \Vert \tilde q \Vert_{L^2} = \int_{\RR^\nu} \tilde q(\bfy)^2\, d\bfy =1 \, ,
\end{equation*}
is the Rayleigh quotient of the symmetric and positive operator $\hat L_\FKP$ (see Sections~\ref{Section3.5} and \ref{Section3.6}).
Equality occurs when $\tilde q(\bfy)=q_0(\bfy)$ is the ground-state wavefunction.
Algorithm VQE combines the capabilities of quantum and classical computers to iteratively refine a parameterized trial wavefunction in order to minimize $\curR(\tilde q)$.
The quantum computer is used to construct the trial wavefunction and estimate the Rayleigh quotient, while the classical computer optimizes the parameters of the wavefunction.\\

\noindent {\textit{(iii-1) Initial guess, assumed to be a good approximation to the exact solution}}.
The first step in VQE is to define an initial guess $\tilde q^{init} (\bfy)$, which serves as an approximation to the exact eigenstate.
A good initial guess reduces the number of iterations required for convergence.
The construction of $\tilde q^{init} (\bfy)$ is performed on a classical computer, using a mean-field approximation (e.g., a single Hartree product solution), or by solving a zeroth-order Hamiltonian~\cite{Wilson1980,Bowman2021}.
The initial guess is then encoded on the quantum computer using a unitary transformation $U^{init}$, such that $U^{init} |0^{m_T}\rangle=|\tilde q^{init}\rangle$.\\

\noindent {\textit{(iii-2) Parameterization of the trial wavefunction}}. The second step is parameterizing the trial wavefunction using a unitary ansatz  $U^{\VQE}(\bftau)$, where $\bftau=\{\tau_\alpha\}_{\alpha=1}^{N_{\pVQE}}$ are the $N_{\VQE}$ parameters to be optimized.
Ideally, the ansatz would explore the entire Hilbert space, for example employing operations from the Pauli group.
However, the exponential scaling of the number of elements with the number of qubits makes this computationally infeasible, as the optimization is performed on a classical computer.
Instead, a structured, parameterized ansatz, named the VQE ansatz, is employed.
Common choices include Unitary Coupled Cluster~\cite{Peruzzo2014}, Hardware-Efficient Ansatz~\cite{Kandala2017}, and Quantum Coupled Cluster~\cite{Ryabinkin2018}.
The ansatz is typically expressed as a product of parameterized unitary operations, $U^{\VQE}(\bftau) =\prod_{\alpha=1}^ {N_{\pVQE}}U^{\VQE}_\alpha(\tau_\alpha)$, and the trial state is now expressed as
$|\tilde q \rangle = \Bigl (\prod_{\alpha=1}^{N_\pVQE}U^{\VQE}_\alpha(\tau_\alpha) \Bigr ) \, U^{init}|0^{m_T} \rangle$.\\

\noindent {\textit{(iii-3) Optimization employing the classical computer}}. Once the ansatz is defined, the parameters $\bftau$ are optimized to minimize the Rayleigh quotient $\curR(\tilde q)$, which corresponds to the expectation value of $\hat L_\FKP$.
The optimization process is carried out on a classical computer using gradient-free methods, such as the Nelder-Mead simplex algorithm~\cite{Mihalikova2022}, or gradient-based methods like gradient descent. Gradients can be computed either analytically or through finite-difference approximations (see, for instance, \cite{Romero2018,Wang2018b,Bonet-Monroig2023}).
These methods require additional quantum measurements to compute extra points or derivatives, which increases the computational overhead.\\

\noindent {\textit{(iii-4) Computing other eigenstates}}. Generating other eigenstates $q_k(\bfy)$ with $k>0$, the ``excited states'', is possible within the VQE framework. Assuming that $q_0(\bfy)$ has already been obtained, the algorithm can be modified to target excited states by either adding a penalty or imposing a constraint on the symmetry of the excited state, when applicable~\cite{Ryabinkin2019}, or by enforcing the orthogonality requirement $\langle\tilde q|q_0\rangle=0$~\cite{Higgott2019}.
\begin{figure}
\centering
\includegraphics[width=0.8\textwidth]{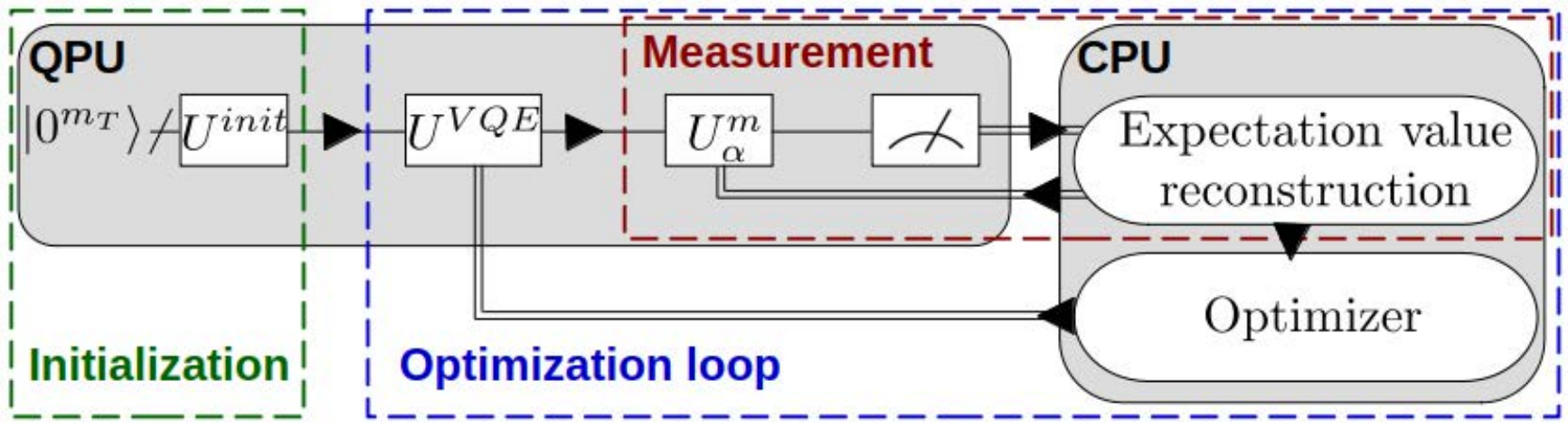}
\caption{Schematic view of a Variational Quantum Eigensolver (VQE) hybrid quantum classical circuit.
The slash `/' indicates a multi-qubit line.
The quantum state is constructed on the quantum circuit in two steps: i) the initial guess using $U^{init}$, and ii) the optimized unitary transformation $U^{\VQE}$, which depend on a set of parameters $\bftau$.
The classical computer drives the measurements, with unitaries $\{U^m_\alpha\}$ to reconstruct the expectation value of the Rayleigh quotient and its gradients with respect to the parameters $\bftau$.
The classical computer then operates a feedback on the quantum computer to optimize the trial state by changing $U^{\VQE}$.}
\label{circuitVQE}
\end{figure}
%
\subsection{Overlap extraction}
\label{Section8.2}
Assuming that eigenstates are obtained, it becomes possible to extract overlaps of the form $\langle \bfeta | q_n \rangle$.
To achieve this, we first express the overlap as $\langle \bfeta^{\parallel} | q_n \rangle$, where $|\bfeta^{\parallel}\rangle = \sum_\bfalpha \varphi_\bfalpha (\bfeta) | \varphi_\bfalpha \rangle$.
Note that $\varphi_\bfalpha$ is the tensor-product basis; consequently, this expression cannot be computed directly. This is why we rewrite it as a manageable single product.
Then, the objectives are twofold:
i) construct $| q_n \rangle$ and $|\bfeta^{\parallel}\rangle$ on the quantum computer, and
ii) extract the overlap as a measurement process.

The construction of $| q_n \rangle$ depicted in Sec.~\ref{Section8.1}.
Regarding the construction of $|\bfeta^{\parallel}\rangle$, we rewrite this state as a direct product:
\begin{equation}
|\bfeta^{\parallel}\rangle = N(\bfeta)\prod_{k=1}^\nu \left( \sum_{\alpha=0}^{m_k} C^{(k)}_\alpha (\eta_k) | \varphi^{(k)}_\alpha \rangle \right) \, , \label{eq8.1}
\end{equation}
where $N(\bfeta)$ and $C^{(k)}_\alpha (\eta_k)$ are written as
\begin{equation}
N(\bfeta) = \prod_{k=1}^\nu \left( \sum_{\alpha=0}^{m_k} |\varphi^{(k)}_\alpha (\eta_k) |^2 \right)
\quad , \quad
C^{(k)}_\alpha (\eta_k) = \frac{\varphi^{(k)}_\alpha (\eta_k)}{\sum_{\alpha=0}^{m_k} | \varphi^{(k)}_\alpha (\eta_k) |^2}
\, . \label{eq8.2}
\end{equation}
Note that substituting Eq.~\eqref{eq8.2} into \eqref{eq8.1} effectively yields $|\bfeta^{\parallel}\rangle = \prod_{k=1}^\nu \left( \sum_{\alpha=0}^{m_k} \varphi^{(k)}_\alpha (\eta_k)\, | \varphi^{(k)}_\alpha \rangle \right)$.
Each factor in the product on the right-hand-side of Eq.~\eqref{eq8.1} is constructed on the quantum computer by using a product of unitaries.
We give here a simple algorithm for their constructions.
Starting with all qubits associated with $y_k$ in their ground state, the vacuum state, we apply successive rotations to distribute the coefficients $\{C^{(k)}_\alpha\}_{\alpha=0}^{m_k}$ on the qubit states that encode $\{|\varphi^{(k)}_\alpha\rangle\}_{\alpha=0}^{m_k}$.
These rotations are defined by
\begin{equation}
\hat R_{k,\alpha} = \exp\left( -i \theta_{k,\alpha} \hat{Y}_{k,\alpha} \right)
\quad , \quad
\theta_{k,\alpha} = \mathrm{atan}\left( { C^{(k)}_\alpha }{ \left[ 1 - {\sum}_{\beta=1}^\alpha \left( C^{(k)}_\beta \right)^2 \right]^{-1/2} } \, \right) \, .\label{eq8.3}
\end{equation}
As expected, the first rotation applied on the vacuum gives a superposition of two terms: $[ 1 - ( C^{(k)}_0 )^2 ]^{1/2} | 0^{m_k} \rangle + C^{(k)}_0 | \varphi^{(k)}_0 \rangle$, the second term being the one of interest.
If we apply the second rotation on this last obtained superposition, it would destroy $C^{(k)}_0 | \varphi^{(k)}_0 \rangle$.
Instead, we must apply the second rotation only on the term containing $| 0^{m_k} \rangle$.
This is achieved by using an ``anti-controlled'' rotation,
\begin{equation*}
{\rm \bar{C}}\hat R_{k,1;0}=(\hat 1_{k,0}+\hat Z_{k,0})\otimes\hat R_{k,1}/2+(\hat 1_{k,0}-\hat Z_{k,0})/2 \, ,
\end{equation*}
which applies the rotation $\hat R_{k,1;0}$ only if $s_{k,0}=0$.
Indeed, with the controlled rotation, we now obtain
\begin{align}
{\rm \bar{C}}\hat R_{k,1;0} \left[ \sqrt{ 1 - \left( C^{(k)}_0 \right)^2 } | 0^{m_k} \rangle + C^{(k)}_0 | \varphi^{(k)}_0 \rangle \right]
&=
\sqrt{ 1 - \left( C^{(k)}_0 \right)^2 } \hat R_{k,1} | 0^{m_k} \rangle + C^{(k)}_0 | \varphi^{(k)}_0 \rangle \nonumber\\
&=
\sqrt{ 1 - \left( C^{(k)}_0 \right)^2 - \left( C^{(k)}_1 \right)^2 } | 0^{m_k} \rangle + C^{(k)}_1  | \varphi^{(k)}_1 \rangle + C^{(k)}_0 | \varphi^{(k)}_0 \rangle
\, .\label{eq8.4}
\end{align}
To generalize this scheme, we need to define anti-controlled rotations of the form ${\rm \bar{C}}\hat R_{k,\alpha;0,1,\cdots,\alpha-1}$, where the anti-control test is applied on all preceeding qubits, $\{| s_{k,\beta} \rangle\}_{\beta=0}^{\alpha-1}$.
These anti-controlled rotations are then used in place of the rotations $\hat R_{k,\alpha}$.
The circuits that achieve this construction of a factor is depicted in Fig.~\ref{circuitBuildFactor}.
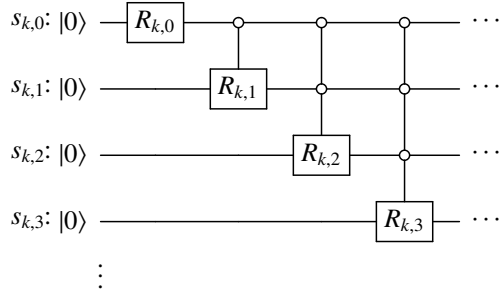
\begin{figure}
\begin{equation*}
\Qcircuit @C=1em @R=1em {
    \lstick{s_{k,0}\!\!:\,\,} & \lstick{|0\rangle} & \gate{R_{k,0}} & \ctrlo{1}      & \ctrlo{1}      & \ctrlo{1}      & \qw & \cdots \\
    \lstick{s_{k,1}\!\!:\,\,} & \lstick{|0\rangle} & \qw            & \gate{R_{k,1}} & \ctrlo{1}      & \ctrlo{1}      & \qw & \cdots \\
    \lstick{s_{k,2}\!\!:\,\,} & \lstick{|0\rangle} & \qw            & \qw            & \gate{R_{k,2}} & \ctrlo{1}      & \qw & \cdots \\
    \lstick{s_{k,3}\!\!:\,\,} & \lstick{|0\rangle} & \qw            & \qw            & \qw            & \gate{R_{k,3}} & \qw & \cdots \\
                              &          \vdots    &                &                &                &                &     &
}
\end{equation*}
\caption{Circuit for the construction of one factor in \eqref{eq8.1}, i.e. $\sum_{\alpha=1}^{m_k} \varphi^{(k)}_\alpha (\eta_k)\, | \varphi^{(k)}_\alpha \rangle$.
The lines connecting circles at the nodes indicates an anti-control (control on the zero value) on the corresponding qubits.}
\label{circuitBuildFactor}
\end{figure}
Equipped with this circuit, the unitary transformation,$U_\bfeta$, that builds $|\bfeta^{\parallel}\rangle / N(\bfeta)$ is given by
\begin{equation*}
U_\bfeta=\prod_{k=1}^\nu \prod_{\alpha=0}^{1+m_k} {\rm \bar{C}}\hat R_{k,\alpha;0,1,\cdots,\alpha-1}\, .
\end{equation*}
To extract the overlap $\langle \bfeta^{\parallel} | q_n \rangle$, we propose using the Hadamard test~\cite{Nielsen2010}.
In combination with the qubits used to construct $|q_n\rangle$, the Hadamard test employs an additional ``ancilla'' qubit.
Using the Hadamard gate and controlled-unitaries, this ancilla qubit is used to construct a superposition of $|q_n\rangle$ and $|\bfeta^{\parallel}\rangle$ in the entangled form (see Fig.~\ref{circuitHtest}),
\begin{equation*}
U_\bfeta^\dagger ( ( |\bfeta^{\parallel}\rangle/N(\bfeta) + |q_n\rangle ) \otimes|0\rangle + ( |\bfeta^{\parallel}\rangle/N(\bfeta) - |q_n\rangle ) \otimes|1\rangle )/2\, .
\end{equation*}
By measuring $\hat Z$ for the ancilla qubit, the result will be $-1$ with probability
$p_-=(1 + \Re\langle \bfeta^{\parallel}|q_n\rangle / N(\bfeta) )/2$,
or $+1$ with probability
$p_+=(1 - \Re\langle \bfeta^{\parallel}|q_n\rangle / N(\bfeta) )/2$, and where $\Re$ denotes the real part.
After reconstructing the probabilities $p_\pm$, the overlap is extracted as $\langle \bfeta^{\parallel}|q_n\rangle=(p_--p_+)\times N(\bfeta)$ (accounting for the fact that $\langle \bfeta^{\parallel}|q_n\rangle\in\RR$).
\begin{figure}
\begin{equation*}
\Qcircuit @C=1em @R=1em {
    \lstick{|0\rangle}       &     \qw & \gate{H} &                  \ctrl{1}                         & \gate{H} & \meter & \cw \\
    \lstick{|0^{m_T}\rangle} & {/} \qw &   \qw    & \gate{U_\bfeta^\dagger U_{\VQE}(\bftau) U_{init}} &   \qw    &   \qw  & \qw
}
\end{equation*}
\caption{Circuit for Hadamard test employed in measuring the overlap $\langle 0^{m_T} | U_\bfeta^\dagger U_{\VQE}(\bftau) U_{init} | 0^{m_T} \rangle = \langle \bfeta^{\parallel}|q_n\rangle$.
The slash `/' indicates a multi-qubit line. The lines connecting black point at the node indicates a control (control on the value equal to 1) on the corresponding qubits.}
\label{circuitHtest}
\end{figure}
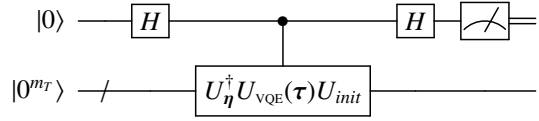
%
%
\section{Conclusions}
\label{Section9}
%
In this paper, we have developed a methodology and demonstrated a quantum computing framework to address a key challenge in probabilistic learning on manifolds (PLoM). This challenge involves solving the spectral problem of the high-dimensional FKP operator, a task that classical computational methods cannot efficiently handle. Notably, this problem may also be relevant to frameworks and applications beyond those that motivated the developments presented in this work.

In addition to the methodological aspects, we introduced a polynomial chaos expansion in the Gaussian Sobolev space for the potential in the Schr\"odinger operator associated with the FKP operator. This approach preserves the algebraic properties of the potential and enables second quantization through the introduction of creation and annihilation operators. By reformulating the eigenvalue problem in Fock space, we derived explicit formulas for both the Laplacian and the potential, facilitating the implementation of the FKP operator as a Hamiltonian in quantum circuits.

To provide a comprehensive perspective, we also included a brief overview of foundational concepts related to the construction of quantum states and the measurement of their observables, particularly for readers less familiar with quantum computing. To illustrate the essential coupling between a quantum computer and a classical computer, we chose to present the Variational Quantum Eigensolver (VQE) approach instead of the Quantum Phase Estimation (QPE) algorithm, given the latter's greater complexity and non-trivial nature.
Furthermore, QPE requires longer quantum circuits, with a larger the number of qubits than what is currently available in the NISQ era~\cite{Castelvecchi2023} and the ability for error correction~\cite{Reiher2017,Postler2022}.

The next step in this work will be the implementation of the proposed method on a quantum computer.

\section*{Conflict of interest}

The author declares that he has no conflict of interest.

%

\end{document}